\def\myendproof{{\hfill \vbox{\hrule\hbox{%
\vrule height1.3ex\hskip0.8ex\vrule}\hrule }}\par}
\newtheorem{theorem}{Theorem}
\newtheorem{lemma}[theorem]{Lemma}
\newtheorem{fact}{Fact}
\newenvironment{proof}{{\it Proof. }}{\myendproof}
\newcommand{\setof}[1]{\{{#1}\}}
\newcommand{\Xomit}[1]{}
\newcommand{\OB}[1]{B_{#1}}
\newcommand{\BB}[1]{\bar{B}_{#1}}
\def\P {P}
\def\dis {\textit{d}}
\def\Buck {\textit{bucket}}
\def\Acc {\textit{acc}}
\def\Pref   {\textit{pred}}
\def\Next   {\textit{succ}}
\def\Head   {\eta}
\def\Minv   {\textit{min\_v}}
\def\DT	{\overline{\nabla}}
\def \skippt{21.5 pt}
\def\setuptime {\rho}
\DeclareMathAlphabet{\mathpzc}{OT1}{pzc}{m}{it}
\title{{\bf Broadcasting in Heterogeneous Tree Networks with Edge Weight Uncertainty}}
\author{Cheng-Hsiao Tsou\thanks{Department of Computer Science and
                                 Information Engineering,
                                 National Taiwan University, \newline \mbox{} \hspace{11pt}
                                 Taipei 10617, Taiwan.
                                 Email: \setof{f97922063, ghchen}@csie.ntu.edu.tw.}
                \and
        Ching-Chi Lin\thanks{Department of Computer Science and Engineering,
                             National Taiwan Ocean University, \newline \mbox{} \hspace{11pt}
                             Keelung 20224, Taiwan.
                             Corresponding author.
                             Email: lincc@mail.ntou.edu.tw.}
                             \and
        Gen-Huey Chen$^*$
        }
\date{ }
\begin{document}
\maketitle
\thispagestyle{empty}
\addtocounter{page}{-1}

\begin{abstract}
A broadcasting problem in heterogeneous tree networks with edge weight uncertainty under the postal model is considered in this paper. The broadcasting problem asks for a minmax-regret broadcast center, which minimizes the worst-case loss in the objective function. Due to the presence of edge weight uncertainty, it is not easy to attack the broadcasting problem. An $O(n \log n \log \log n)$-time algorithm is proposed for solving the broadcasting problem.

\bigskip


\noindent \textbf{Keywords:} algorithm, broadcasting, edge weight uncertainty, heterogeneous tree
network, minmax-regret, optimization problem.
\end{abstract}


\newpage
\def\skippt{23pt}
\baselineskip \skippt 
\section{Introduction}\label{section:intro}
We consider the broadcasting problem in heterogeneous tree networks $T = (V(T),E(T))$ with edge weight uncertainty, where the edge weight $w_{u,v}$ can take any value from [$w^-_{u,v}$,\;$w^+_{u,v}$] with unknown distribution. A heterogeneous network interconnects computers and other devices that can employ diverse operating systems and communication protocols. Representing such a network as a graph denoted as $G$ proves advantageous, wherein the set of network nodes corresponds to $V(G)$, representing the vertices in $G$, while the set of communication links corresponds to $E(G)$, signifying the edges within the graph. For each link $(u,v)\in E(G)$, there exists a positive weight $w_{u,v}$ that signifies the time required to transmit a message across the link. 

In response to the exact communication service requirements of a network, a variety of well-suited communication models have been devised~\cite{Harutyunyan2011,Harutyunyan2009,Slater1981,Su2016}. Among these communication models, the postal model, as explored in~\cite{Barnoy00,Barnoy94,Barnoy97}, is well-suited for characterizing packet-switching techniques. In the postal model, the process involves a sender $u$ connecting with a receiver $v$ and subsequently transmitting a message to $v$. A sender $u$ is not permitted to connect with two or more receivers simultaneously. However, $u$ can connect with a receiver, while transmitting a message to another receiver. Throughout this paper, we use $\setuptime$ to denote the time consumed for link connection, where $\setuptime$ is a positive constant. Let's consider a scenario where a sender $u$ intends to broadcast a message to its $k$ neighbors $v_1, v_2, \ldots, v_k$. Node $v_1$ will establish a connection with $u$ at time unit $\setuptime$ and receive the message at time unit $\setuptime + w_{u,v_1}$, $v_2$ will connect with $u$ at time unit $2 \cdot \setuptime$ and receive the message at time unit $2 \cdot \setuptime + w_{u,v_2}$, and finally, $v_k$ will connect with $u$ at time unit $k \cdot \setuptime$ and receive the message at time unit $k \cdot \setuptime + w_{u,v_k}$.

In recent years, broadcasting problems~\cite{Barnoy00,Barnoy94,Barnoy97,Slater1981,Albouy2023,Kowalski2023,Maja2017,Bortolussi2020} have gained considerable attention due to the rapid advancements in multimedia and network technologies. When utilizing the postal model, these problems can be viewed as extensions of center problems (with $\setuptime = 0$). The objective of broadcasting problems is to identify one or more nodes, referred to as broadcast centers, from which a message can be efficiently disseminated to other nodes with minimal time delay. Specifically, Su et al.~\cite{Su2016} introduced a linear-time algorithm for finding one broadcast center on a tree network.

Previous research on broadcasting problems typically assumed that the value of $w_{u,v}$ is deterministic. However, in practice, considering the uncertainty of $w_{u,v}$ is essential, as the available bandwidth of a communication link can vary over time. Thus, it is valuable to account for $w_{u,v}$ as uncertain, falling within a range [$w^-_{u,v}$,\;$w^+_{u,v}$], where $w^-_{u,v}$ and $w^+_{u,v}$ are non-negative real numbers. The concept of uncertainty has been previously introduced in various optimization problems, such as facility location~\cite{Dam2023,Mi1979,La1991,Av1997,Av2000a,Av2000b,Av2005,Bhattacharya2014,Wang2020,Ye2015}, resource allocation~\cite{Av2001,Av2004,Conde2005,Arumugam2019}, and job scheduling~\cite{Liao2020,hsu2020,Drwal2020}. There was an approach, named {\em minmax regret}~\cite{Av1997,Av2000a,Av2000b,Av2005,Yu2008,Ko1997,Ye2015,Wang2022}, proposed for solving them. Given an optimization problem with link weight uncertain, the minmax regret approach tries to minimize the worst-case loss to the objective function, without specifying the probability distribution of link weights. For a comprehensive discussion of the minmax regret approach, please refer to~\cite{Ko1997,Yury04}.

In this paper, under the postal model, we investigate the broadcasting problem in which a single broadcast center is designated, and all edge weights are uncertain. When the underlying network has a random topology, this problem is known as NP-hard, even if all edge weights are deterministic~\cite{Slater1981}. The broadcasting problem considered was inspired by the 1-center problem. Note that the broadcasting problem becomes the 1-center problem when $\setuptime = 0$. The 1-center problem is polynomial-time solvable on a random network with deterministic edge weights~\cite {Dvir2004}. Clearly, the problem to be investigated is more intractable than the 1-center problem. With $\setuptime > 0$, each sender has to determine an optimal broadcast sequence for its neighbors so as to minimize the broadcasting time to the whole network. Further, with uncertain edge weights, such an optimal broadcast sequence is more difficult to find, because it relies on the values of edge weights.

In their work~\cite{Bu2002}, Burkard and Dollani addressed uncertain edge weights in the 1-center problem (with $\setuptime = 0$) and successfully solved it on a tree network in $O(n \log n)$ time. In this paper, also on a tree network with uncertain edge weights, we extend Burkard and Dollani's work by solving the broadcasting problem (with $\setuptime > 0$) with one broadcast center in $O(n \log n \log\log n)$ time. In comparison, a multiplier of $O(\log\log n)$ is incurred, as a consequence of the inclusion of $\setuptime > 0$. In Table~\ref{table:result}, we show the best results thus far for the two problems on both a random network with deterministic edge weights and a tree network with uncertain edge weights.

\begin{table}[htb]
    \caption{Comparison between the 1-center problem and the broadcasting problem with one broadcast center.} \vspace{7pt}
    \label{table:result}
    \smallskip
    \begin{minipage}{\textwidth}
    \renewcommand\arraystretch{1.3}
    \centering
    \hspace{-8pt} 
    \footnotesize
    \begin{tabular}{m{4.1cm}| >{\centering} m{3cm} m{3.5cm} m{3.1cm}}
        \hline
        & value of $\setuptime$  & on a random network with deterministic\newline edge wights & on a tree network \newline with uncertain edge wights \\ \hline \hline
1-center problem & $0$   & $O(n^2 \log n + n m )$~\cite{Dvir2004}   & $O(n \log n)$~\cite{Bu2002} \\
\hline
broadcasting  problem with one broadcast center     & a positive constant    &  NP-hard~\cite{Slater1981}  & $O(n \log n \log\log n)$ (this paper) \\
        \hline
    \end{tabular}
    \end{minipage}
\end{table}
\vspace{-3pt}

The execution of the proposed algorithm is iterative and based on the prune-and-search strategy. To begin, the algorithm was executed on the tree $T$. Then, a centroid $x$ of $T$, a worst-case scenario, and a broadcast center $\kappa$ under the worst-case scenario were found, which takes $O(n \log\log n)$ time. As a consequence of Lemma 7, a minmax-regret broadcast center $\kappa^*$ of $T$ is either $x$ or located in the subtree of $T-x$ that contains $\kappa$. If $\kappa^* \not = x$, then the algorithm execution continues on the subtree containing $\kappa$. Since at most $O(\log n)$ iterations are executed, the overall time complexity is $O(n \log n \log\log
n)$.

The most challenging aspect of the proposed algorithm is conducting a worst-case scenario search in $O(n\log\log n)$ time, given the potentially infinite number of scenarios. To achieve this, we successfully reduced the search space to a maximum of $n-1$ scenarios through a series of scenario transformations. Through each transformation, we pinned down a particular edge weight, leading to the reduction of the search space.  Finally, a worst-case scenario could be found in $O(n \log\log n)$ time, by the aid of a preprocessing and some dedicated data structures. 

The rest of this paper is organized as follows. In Section~\ref{section:preliminaries}, some notations and definitions, together with some properties, are introduced. In Section~\ref{section:algorithm}, an $O(n \log n \log\log n)$-time algorithm is proposed for solving the broadcasting problem, with one broadcast center and uncertain edge weights, on a tree network. The correctness of the algorithm is verified in Section~\ref{section:correctness}, and the time complexity is analyzed in Section~\ref{section:time_complexity}. Finally, this paper concludes with some remarks in Section~\ref{section:conclusion}.

%
%
%
%

\section{Preliminaries}\label{section:preliminaries}
Let $T$ be a tree and $w_{u,v}$ be the weight of $(u,v)\in
E(T)$. We suppose that $n=|V(T)|$ and the value of $w_{u,v}$ is randomly generated from [$w^-_{u,v}$,\;$w^+_{u,v}$], where $w^-_{u,v}$ and $w^+_{u,v}$ are two non-negative real numbers.
Define $C$ to be a Cartesian product of all $n-1$ intervals
$[w^-_{u,v},\;w^+_{u,v}]$ for $T$.
For example, refer to the tree of Figure~\ref{figure:simple_tree}, where [$w^-_{u,v_1}$,\;$w^+_{u,v_1}$] = $[2,6]$
and [$w^-_{u,v_2}$,\;$w^+_{u,v_2}$] = $[1,4]$. We have
$C$ = \{$(g, h)\mid 2\le g\le 6$ and $1\le h\le 4$\} or $C$ = \{$(g, h)\mid 1\le g\le 4$ and $2\le h\le 6$\}, where $g$ and $h$ are two real numbers. Throughout this paper, we assume that $C$ is predetermined.
Also, notice that for the notations introduced in this section, if they are defined with respect to $T$, we remove $T$ from them for simplicity (so, we use $C$, instead of $C_T$ or $C(T)$).

\vspace{12pt}
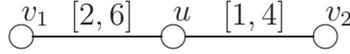
\begin{figure}[h]
\begin{center}
\unitlength=1mm
\begin{picture}(45, 5)
\multiput(2.5,0)(20,0){3}{\circle{3}}
\multiput(4,0)(20,0){2}{\line(1,0){17}}
\put(2.5,2){$v_1$}
\put(22.5,2){$u$}
\put(42.5,2){$v_2$}
\put(9,1.5){$[2,6]$}
\put(29,1.5){$[1,4]$}
\end{picture}
\caption{A tree $T$.}
\label{figure:simple_tree}
\end{center}
\end{figure}
\vspace{-12pt}
Clearly, $C$ is an infinite set. A tuple $s$ = $(s_1,s_2,\ldots, s_{n-1})$ is called a {\em scenario} of $T$, if $s\in C$.
For the example of Figure~\ref{figure:simple_tree}, if $C$ = \{$(g, h)\mid 2\le g\le 6$ and $1\le h\le 4$\}, then $(5,2)$ is a scenario of $T$, but $(2,5)$ is not. In fact, a scenario of $T$
represents a feasible weight assignment of all edges in $E(T)$, in the sequence specified by $C$. Suppose $s \in C$. Define $w^s_{u, v}$ to be the weight of $(u,v)\in E(T)$, ${\textit b\_time}^s(u, G)$ to be the minimum time required for $u$ to broadcast a
message to all other vertices of $G$, and ${\textit B\_Ctr}^s$ to be the set of broadcast centers of $T$, all under the scenario $s$. That is, ${\textit B\_Ctr}^s$ = \{$u\mid {\textit b\_time}^s(u, T) \le {\textit b\_time}^s(v, T)$ for all $v \in V(T)$\}.

Further, for any $x, y \in V(T)$, we define $r^s_{x,y}$ = ${\textit b\_time}^s(x, T) - {\textit b\_time}^s(y, T)$ and ${\textit max\_r}(x) = \max$\{$r^s_{x,y}\mid y \in V(T)$ and $s \in C$\}, where $r^s_{x,y}$ is called {\em the relative regret} of $x$ with respect to $y$ and ${\textit max\_r}(x)$ is called the {\em maximum regret} of $x$. Intuitively, $r^s_{x,y}$ represents the time lost, if $x$ was chosen, instead of $y$, to broadcast a message over $T$ under the scenario $s$. Notice that
if ${\textit max\_r}(x) = r^{s'}_{x,y'}$ for
some $y' \in V(T)$ and $s' \in C$, then $y' \in {\textit B\_Ctr}^{s'}$,
and $s'$ is called a {\em worst-case
scenario} of $T$ with respect to $x$. It is possible that there are multiple worst-case scenarios of $T$ with respect to $x$. The {\em minmax-regret
broadcasting problem on $T$ with edge weight uncertainty} is to find  a {\em minmax-regret broadcast center} $\kappa^*
\in V(T)$ such that its maximum regret is minimized, i.e., ${\textit max\_r}(\kappa^*) \le {\textit max\_r}(v)$ for all $v \in V(T)$.

Suppose $x, y \in V(T)$. Removing $x$ from $T$ will result in some subtrees of $T$. We use $\OB{x, y}$ to denote the subtree that contains $y$, and let $\BB{x, y}$ = $T - \OB{x, y}$ (refer to Figure~\ref{figure:open-branch}).
We have $\OB{x,y} = \BB{y,x}$, if $(x,y) \in E(T)$.

\vspace{12pt}
\begin{figure}[h]
\begin{center}
\unitlength=1mm
\begin{picture}(80, 42)
\put(1, 25){\circle{2}}
\put(18, 25){\circle{2}}
\put(32, 34){\circle{2}}
\put(32, 16){\circle{2}}
\put(18, 7){\circle{2}}
\put(4, 16){\circle{2}}
\put(49, 16){\circle{2}}
\put(63, 7){\circle{2}}
\put(63, 25){\circle{2}}
\put(77, 16){\circle{2}}
\put(77, 34){\circle{2}}

\put(2,25){\line(1,0){15}}
\put(33,16){\line(1,0){15}}
\put(19,25){\line(4,3){12}}
\put(19,25){\line(4,-3){12}}
\put(19,7){\line(4,3){12}}
\put(17,7){\line(-4,3){12}}
\put(50,16){\line(4,3){12}}
\put(50,16){\line(4,-3){12}}
\put(64,25){\line(4,3){12}}
\put(64,25){\line(4,-3){12}}

\put(32, 18){$x$}
\put(63, 27){$y$}

\qbezier[50](-4,21)(-2,42)(17,42)
\qbezier[50](17,42)(36,42)(38,21)
\qbezier[50](-4,21)(-2,0)(17,0)
\qbezier[50](17,0)(36,0)(38,21)

\qbezier[50](43,21)(45,42)(64,42)
\qbezier[50](64,42)(83,42)(85,21)
\qbezier[50](43,21)(45,0)(64,0)
\qbezier[50](64,0)(83,0)(85,21)

\put(-12,16){$\BB{x,y}$}
\put(86,16){$\OB{x,y}$}

\end{picture}
\caption{$\OB{x,y}$ and $\BB{x,y}$.}
\label{figure:open-branch}
\end{center}
\end{figure}
\vspace{-12pt}

\begin{lemma} [\cite{Su2016}] \label{lemma:Break-edge}
Suppose $s \in C$ and $(u,v)\in E(T)$. If ${\textit b\_time}^s(u, \BB{u,v}) \le
{\textit b\_time}^s(v, \BB{v, u})$, then the following hold:
\vspace{-6pt}
\begin{itemize} 
\item ${\textit b\_time}^s(u, T) =  \setuptime + w^s_{u, v} + {\textit b\_time}^s(v, \BB{v, u});$
\vspace{-12pt}
\item ${\textit b\_time}^s(v, T) \le {\textit b\_time}^s(u, T)$.
\end{itemize}
\end{lemma}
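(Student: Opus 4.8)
\medskip
\noindent\textbf{Proof proposal.}
The plan is to exploit the fact that $(u,v)$ is a bridge of $T$: $u$ is the only vertex of $\BB{u,v}$ adjacent to $\OB{u,v}$, and $v$ is the only vertex of $\OB{u,v}$ adjacent to $\BB{u,v}$; moreover $\OB{u,v} = \BB{v,u}$ and $\OB{v,u} = \BB{u,v}$ by the remark preceding the lemma. I would prove the two displayed statements in turn, the first by matching a lower bound with an explicit broadcast schedule for $u$, and the second by exhibiting one concrete broadcast schedule for $v$.

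For the lower bound in the first statement, I would argue that in \emph{any} broadcast from $u$ over $T$ the message reaches $\OB{u,v}$ only along the call $u \to v$, which cannot begin before time $0$, so $v$ is informed no earlier than $\setuptime + w^s_{u,v}$; restricting that schedule to the vertices of $\OB{u,v}$ and resetting the clock to the instant $v$ is informed yields a legal broadcast of $v$ over $\OB{u,v} = \BB{v,u}$, which costs at least ${\textit b\_time}^s(v, \BB{v, u})$. Adding these up gives ${\textit b\_time}^s(u, T) \ge \setuptime + w^s_{u,v} + {\textit b\_time}^s(v, \BB{v, u})$. For the matching upper bound I would let $u$ first call $v$ (using the interval $[0,\setuptime]$) and then run an optimal broadcast over $\BB{u,v}$ among its remaining neighbours; the $\OB{u,v}$-part then finishes at time $\setuptime + w^s_{u,v} + {\textit b\_time}^s(v, \BB{v, u})$ and the $\BB{u,v}$-part at time $\setuptime + {\textit b\_time}^s(u, \BB{u,v})$, and the hypothesis ${\textit b\_time}^s(u, \BB{u,v}) \le {\textit b\_time}^s(v, \BB{v, u})$ together with $w^s_{u,v} \ge 0$ shows the former dominates, which delivers the claimed equality.

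For the second statement I would have $v$ broadcast over $T$ by first calling $u$ (using $[0,\setuptime]$) and then running an optimal broadcast over $\BB{v,u}$; since $\OB{v,u} = \BB{u,v}$, its two parts finish at $\setuptime + w^s_{u,v} + {\textit b\_time}^s(u, \BB{u,v})$ and $\setuptime + {\textit b\_time}^s(v, \BB{v, u})$, respectively. The hypothesis bounds the first of these by $\setuptime + w^s_{u,v} + {\textit b\_time}^s(v, \BB{v, u})$, which the already-proved first statement identifies with ${\textit b\_time}^s(u, T)$; the second term is obviously no larger, since $w^s_{u,v} \ge 0$. Hence ${\textit b\_time}^s(v, T) \le {\textit b\_time}^s(u, T)$.

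The hard part will be the lower-bound step: one needs a clean ``cut and time-shift'' argument showing that informing all of $\OB{u,v}$ cannot cost less than $\setuptime + w^s_{u,v} + {\textit b\_time}^s(v, \BB{v, u})$, rather than arguing via some particular optimal ordering of $u$'s neighbours. Everything after that reduces to comparing the two branch-completion times in an explicitly chosen schedule, using only $w^s_{u,v} \ge 0$ and the stated inequality, so I expect it to be routine.
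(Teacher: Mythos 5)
The paper does not prove this lemma — it imports it from Su et al.~\cite{Su2016} as a black box — so there is no in-paper proof to compare your argument against. Your argument itself is sound: the lower bound is a legitimate cut-and-time-shift argument (every vertex of $\OB{u,v}$ is informed only via $v$, so the restriction of $u$'s schedule to $\OB{u,v}$, time-shifted to $v$'s receive instant, is a feasible broadcast of $v$ over $\OB{u,v}=\BB{v,u}$), the matching upper bound follows from the explicit ``call $v$ first'' schedule, and the comparison of the two branch completion times uses exactly $w^s_{u,v}\ge 0$ and the hypothesis ${\textit b\_time}^s(u,\BB{u,v})\le{\textit b\_time}^s(v,\BB{v,u})$. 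The second claim is then a one-line consequence of the analogous explicit schedule for $v$. One presentational caution: be explicit that after the first call $u$ resumes its calls into $\BB{u,v}$ at offset $\setuptime$ (rather than waiting for $v$ to finish receiving), so that the $\BB{u,v}$ side really does complete at $\setuptime+{\textit b\_time}^s(u,\BB{u,v})$ — this is implicit in your ``using the interval $[0,\setuptime]$'' phrasing but worth stating, since it is what makes the two branches run in parallel.
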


A tree is a {\em star}, if it has one or two vertices, or has exactly one vertex whose degree is greater than one. The latter is called the {\em center} of the star. For the one-vertex star, the vertex is the center. For the two-vertex star, either vertex can be the center.

\begin{lemma} [\cite{Su2016}] \label{lemma:centers-as-star}
The subgraph of $T$ induced by ${\textit B\_Ctr}^s$ is a star.
\end{lemma}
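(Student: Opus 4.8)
\noindent Throughout, abbreviate ${\textit b\_time}^s(v)={\textit b\_time}^s(v,T)$ and set $\beta=\min_{v\in V(T)}{\textit b\_time}^s(v)$, so that ${\textit B\_Ctr}^s=\{v:{\textit b\_time}^s(v)=\beta\}$. The plan is to prove two structural facts. \textbf{(P1)} ${\textit B\_Ctr}^s$ is \emph{path-convex}: for all $u,v\in{\textit B\_Ctr}^s$ the whole $u$--$v$ path of $T$ lies in ${\textit B\_Ctr}^s$. \textbf{(P2)} There is no path $x_0\!-\!x_1\!-\!x_2\!-\!x_3$ of $T$ with all four vertices in ${\textit B\_Ctr}^s$. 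Since the path-convex subsets of a tree are exactly its subtrees, (P1) says the subgraph induced by ${\textit B\_Ctr}^s$ is a subtree $T'$; a subtree with no four-vertex path has diameter at most $2$; and a tree of diameter at most $2$ is a star (with at least three vertices it has diameter exactly $2$, and the center of any two-edge path is then adjacent to every other vertex). So (P1) and (P2) together give the lemma.

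The common engine is a quantitative use of Lemma~\ref{lemma:Break-edge}: for every edge $(u,v)$ at least one of ${\textit b\_time}^s(u,\BB{u,v})\le{\textit b\_time}^s(v,\BB{v,u})$ or its reverse holds, and when the former holds the lemma yields the exact identity ${\textit b\_time}^s(u)=\setuptime+w^s_{u,v}+{\textit b\_time}^s(v,\BB{v,u})$, i.e.\ the \emph{load} $w^s_{u,v}+{\textit b\_time}^s(v,\BB{v,u})$ of the branch of $u$ through $v$ equals ${\textit b\_time}^s(u)-\setuptime$. I also use the elementary postal-model fact that at most one branch at any vertex $v$ can have load exceeding ${\textit b\_time}^s(v)-2\setuptime$, since a second such branch is contacted no earlier than time $2\setuptime$ and so would not finish by time ${\textit b\_time}^s(v)$. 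For (P1): if $x_{j-1},x_j,x_{j+1}$ is a subpath with ${\textit b\_time}^s(x_{j-1})<{\textit b\_time}^s(x_j)>{\textit b\_time}^s(x_{j+1})$, then the identity above, applied to the edges $(x_{j-1},x_j)$ and $(x_j,x_{j+1})$, forces \emph{both} neighboring branches at $x_j$ to have load ${\textit b\_time}^s(x_j)-\setuptime>{\textit b\_time}^s(x_j)-2\setuptime$, contradicting the one-branch fact. An analogous chain argument along a maximal run of equal ${\textit b\_time}^s$-values that is flanked on both sides by strictly smaller values rules out ``plateau'' maxima as well. Hence ${\textit b\_time}^s$ has no interior local maximum along any path of $T$, so on the path joining two members of ${\textit B\_Ctr}^s$ no interior vertex can exceed $\beta$; that is (P1).

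For (P2), suppose $x_0\!-\!x_1\!-\!x_2\!-\!x_3$ is a path of $T$ with ${\textit b\_time}^s(x_0)={\textit b\_time}^s(x_1)={\textit b\_time}^s(x_2)={\textit b\_time}^s(x_3)=\beta$. I would chain two moves. \emph{(i)} At an edge both of whose endpoints are broadcast centers, Lemma~\ref{lemma:Break-edge} pins the load of the far-branch at one of the endpoints to $\beta-\setuptime$ minus that edge's weight. \emph{(ii)} Whenever a vertex $b$ broadcasts over a subtree that contains a neighbor $c$ together with $c$'s whole pendant part $\BB{c,b}$, just finishing that part costs at least $\setuptime+w^s_{b,c}+{\textit b\_time}^s(c,\BB{c,b})$. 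Starting from the middle edge $(x_1,x_2)$ and -- after reversing the path if necessary -- working outward through $(x_2,x_3)$ and, when needed, $(x_0,x_1)$, these two moves compose into a telescoping estimate that collapses to the assertion ``$\beta$ exceeds itself by $\setuptime$ plus a positive edge weight'', which is impossible since $\setuptime$ and all edge weights are strictly positive. Carrying this out cleanly requires a split into three cases -- according to which endpoint gets pinned by move (i) at the edge $(x_2,x_3)$, and, in the remaining subcase, at the edge $(x_0,x_1)$ -- and each case reduces to one such telescoping.

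The routine part is (P1); the real work is (P2), where the delicate point is the bookkeeping. One must track, edge by edge along the four-vertex path, which side carries the heavier broadcast load, because Lemma~\ref{lemma:Break-edge} pins the load only at the ``lighter-own-side'' endpoint of an edge, whereas the telescoping step must combine the \emph{pinned} value at one edge with a type-(ii) \emph{lower bound} at the adjacent edge. Organizing the case split so that these always align -- and checking that precisely the configurations that would keep the diameter of $T'$ above $2$ are the ones driven to a contradiction -- is the crux of the argument. It is also exactly here that the postal-model hypothesis $\setuptime>0$, together with positivity of the edge weights, is used in an essential way.
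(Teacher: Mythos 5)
The paper cites Lemma~\ref{lemma:centers-as-star} from~\cite{Su2016} without reproducing a proof, so there is no in-paper argument to compare against; the following reviews your proposal on its own terms.

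Your framework is sound: path-convexity (P1) makes the induced subgraph a subtree, and a subtree with no four-vertex path is a star in the paper's sense, so (P1) and (P2) together give the lemma. The two ingredients you single out --- the exact pinning identity from Lemma~\ref{lemma:Break-edge} at the lighter-own-side endpoint of an edge, and the fact that at most one branch at any vertex $v$ can have load exceeding ${\textit b\_time}^s(v,T)-2\setuptime$ --- are both correct, and your (P1) argument, including the plateau refinement, goes through.

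Your (P2) plan, however, is organized around the wrong edge and, as described, does not visibly close. Splitting on which endpoint is pinned at $(x_2,x_3)$, and ``in the remaining subcase'' at $(x_0,x_1)$, leaves open the configuration where both outer edges pin the leaves ($x_0$ lighter-own-side on $(x_0,x_1)$ and $x_3$ lighter-own-side on $(x_2,x_3)$); there the described telescoping only yields $\beta\ge 3\setuptime+(\text{weights})+(\text{nonnegative term})$, which is not a contradiction, and one still has to consult the pin at $(x_1,x_2)$. The clean route is to split exactly once, at the middle edge. By Lemma~\ref{lemma:Break-edge} one of $x_1,x_2$ is the lighter-own-side endpoint of $(x_1,x_2)$; say $x_2$, so ${\textit b\_time}^s(x_2,T)=\setuptime+w^s_{x_1,x_2}+{\textit b\_time}^s(x_1,\BB{x_1,x_2})$. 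Then Lemma~\ref{lemma:general-x-y} at $x_3$ towards $x_2$, followed by one move-(ii) step inside $\BB{x_2,x_3}$ (which contains $x_1$ and all of $\BB{x_1,x_2}$), gives ${\textit b\_time}^s(x_3,T)\ \ge\ \setuptime+w^s_{x_2,x_3}+\setuptime+w^s_{x_1,x_2}+{\textit b\_time}^s(x_1,\BB{x_1,x_2})\ =\ \setuptime+w^s_{x_2,x_3}+{\textit b\_time}^s(x_2,T)$, whence $\beta\ge\beta+\setuptime+w^s_{x_2,x_3}$, impossible; the case where $x_1$ is lighter-own-side is symmetric via $x_0$. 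No further case analysis, and in fact no appeal to the one-branch fact, is needed for (P2).

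Finally, the closing claim that ``positivity of the edge weights is used in an essential way'' is incorrect for this paper's model: Section~\ref{section:preliminaries} only requires $w^-_{u,v}\ge 0$, so an edge may have weight zero. Every contradiction in (P1) and (P2) already follows from $\setuptime>0$ alone (note $\setuptime+w^s_{x_2,x_3}>0$ even when $w^s_{x_2,x_3}=0$); you should phrase the argument so that it never leans on strictly positive edge weights.
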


Define $N_T(u) = \{v\mid (u,v) \in E(T)\}$ to be the set of neighboring vertices of $u$ in $T$.
A vertex $\hat{\kappa} \in {\textit B\_Ctr}^s$ is a {\em prime broadcast center} of $T$ under the scenario $s$ if and only if ${\textit b\_time}^s(\hat{\kappa}, \BB{\hat{\kappa}, u}) \ge {\textit b\_time}^s(u,
\BB{u, \hat{\kappa}})$ for all $u\in N_T(\hat{\kappa})$. Further, according to Lemma~\ref{lemma:Break-edge}, if $\hat{\kappa}$ is a prime broadcast center of $T$ under the scenario $s$, then ${\textit b\_time}^s(\hat{\kappa}, T) \le {\textit b\_time}^s(u, T)$ for all $u\in N_T(\hat{\kappa})$.

\begin{lemma}
\label{lemma:prime-must-exist} A prime broadcast
center of $T$ always exists under any scenario of $T$.
\end{lemma}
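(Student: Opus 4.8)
Fix an arbitrary scenario $s \in C$. The plan is to start from any broadcast center and repeatedly step to a neighbor that witnesses the failure of the prime-center condition; I will show that such a sequence of steps stays inside ${\textit B\_Ctr}^s$, never immediately reverses direction, and therefore — because $T$ is a tree — must be a simple path, hence finite, and can only terminate at a prime broadcast center.

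Concretely, pick any $\kappa_0 \in {\textit B\_Ctr}^s$ (nonempty since $V(T)$ is finite). If $\kappa_0$ is prime we are done; this covers the degenerate one-vertex case, where $N_T(\kappa_0)=\emptyset$ and the defining inequality is vacuous. Otherwise there is some $u \in N_T(\kappa_0)$ with ${\textit b\_time}^s(\kappa_0, \BB{\kappa_0, u}) < {\textit b\_time}^s(u, \BB{u, \kappa_0})$. The first key step is to observe that $u \in {\textit B\_Ctr}^s$: applying Lemma~\ref{lemma:Break-edge} to the edge $(\kappa_0,u)$ (matching its ``$u$'' with $\kappa_0$ and its ``$v$'' with $u$), the hypothesis ${\textit b\_time}^s(\kappa_0, \BB{\kappa_0, u}) \le {\textit b\_time}^s(u, \BB{u, \kappa_0})$ gives ${\textit b\_time}^s(u, T) \le {\textit b\_time}^s(\kappa_0, T)$; since $\kappa_0$ is a broadcast center the reverse inequality holds too, so ${\textit b\_time}^s(u, T) = {\textit b\_time}^s(\kappa_0, T) = \min_{v \in V(T)} {\textit b\_time}^s(v, T)$, i.e. $u \in {\textit B\_Ctr}^s$. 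Set $\kappa_1 = u$ and iterate: as long as the current vertex $\kappa_i$ fails to be prime, choose $\kappa_{i+1} \in N_T(\kappa_i)$ witnessing this failure. By the argument just given, every $\kappa_i$ lies in ${\textit B\_Ctr}^s$.

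The second key step is that this sequence never backtracks. If $\kappa_{i+1}=u$ was chosen from $\kappa_i=\kappa$ because ${\textit b\_time}^s(\kappa, \BB{\kappa, u}) < {\textit b\_time}^s(u, \BB{u, \kappa})$, then selecting $\kappa_{i+2}=\kappa$ would require ${\textit b\_time}^s(u, \BB{u, \kappa}) < {\textit b\_time}^s(\kappa, \BB{\kappa, u})$, contradicting the previous inequality; hence $\kappa_{i+2}\neq\kappa_i$. A walk in a tree with no two consecutive edges equal cannot revisit a vertex (a first repeat would yield a non-backtracking closed walk of positive length, i.e.\ a cycle, which a tree does not have), so the sequence is a simple path in $T$ and thus has at most $n-1$ edges. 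Therefore the construction halts after finitely many steps, and by its stopping rule it halts precisely at a vertex that is prime — and that vertex, being one of the $\kappa_i$, is a broadcast center, hence a prime broadcast center of $T$ under $s$, as required.

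The main obstacle is making the termination argument airtight: one must verify both that the chosen neighbor always remains in ${\textit B\_Ctr}^s$ (so ``broadcast center'' is an invariant of the walk) and that consecutive steps cannot reverse, and then use the acyclicity of $T$ to exclude any revisit; finiteness is then immediate. Secondary care is needed in lining up the inequality directions when invoking Lemma~\ref{lemma:Break-edge} and in disposing of the trivial degenerate cases.
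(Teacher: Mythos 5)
Your proof is correct, but it takes a genuinely different route from the paper's. The paper first invokes Lemma~\ref{lemma:centers-as-star} to know that ${\textit B\_Ctr}^s$ induces a star, selects the star's center $\kappa$, and argues that either $\kappa$ is already prime or the single neighbor $x$ to which the failing inequality points must itself be prime (because every other neighbor of $x$ lies outside ${\textit B\_Ctr}^s$, so Lemma~\ref{lemma:Break-edge} gives the strict reverse inequality there); the whole search is thus bounded \emph{a priori} to at most one step. You instead run an open-ended walk: you show the invariant that each visited vertex remains a broadcast center (same Lemma~\ref{lemma:Break-edge} application), show the walk never immediately reverses an edge because the witnessing inequality is strict and antisymmetric, and then use the acyclicity of $T$ to conclude the walk is a simple path, hence finite, terminating exactly at a vertex satisfying the prime condition. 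Your argument is more self-contained in that it never needs Lemma~\ref{lemma:centers-as-star}, at the cost of the extra non-backtracking-implies-simple argument; the paper's argument is shorter but leans on structural knowledge of ${\textit B\_Ctr}^s$ established separately. Both are sound.
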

\begin{proof}
Let $s \in C$ be an arbitrary scenario of $T$. When $|V(T)| = 1$, it is easy to see that the only vertex
is a prime broadcast center of $T$. So, we assume $|V(T)| \ge 2$. According to Lemma~\ref{lemma:centers-as-star}, the subgraph of $T$
induced by ${\textit B\_Ctr}^s$ is a star. Let $\kappa$
be the center of the star. If $\kappa$ is not a prime broadcast center of $T$,
then there exists $x\in N_T(\kappa)$ such that ${\textit b\_time}^s(\kappa, \BB{\kappa, x}) < {\textit b\_time}^s(x, \BB{x, \kappa})$.
It follows, as a consequence of Lemma~\ref{lemma:Break-edge}, that ${\textit b\_time}^s(x, T) \le {\textit b\_time}^s(\kappa,
T)$, which implies $x\in {\textit B\_Ctr}^s$.
Clearly, $x$ is a prime broadcast center of $T$, if $N_T(x)-\setof{\kappa} = \emptyset$.
Otherwise, for each $v \in N_T(x)-\setof{\kappa}$, we have
$v \not\in {\textit B\_Ctr}^s$, i.e., ${\textit b\_time}^s(x, T) < {\textit b\_time}^s(v,T)$
(as a consequence of $\kappa$ being the center of the star).
Again, as a consequence of Lemma~\ref{lemma:Break-edge}, we have ${\textit b\_time}^s(x, \BB{x, v}) > {\textit b\_time}^s(v, \BB{v, x})$. Therefore, $x$ is a prime broadcast center of $T$.
\end{proof}

\bigskip

Suppose $x, y \in V(T)$. 
Let $\P_{x, y}$ be the set of edges contained in the $x$-to-$y$ path of $T$, $\dis_{x, y}$ = $|\P_{x, y}|$, and $\tilde{w}^s_{x, y}$ be the total weight of all edges in $\P_{x, y}$ under the
scenario $s$, i.e. $\tilde{w}^s_{x,y} = \sum_{(u,v) \in \P_{x,y}}w^s_{u,v}$. The following lemma shows a relation between ${\textit b\_time}^s(x,
T)$ and ${\textit b\_time}^s(\hat{\kappa}, \BB{\hat{\kappa}, x})$, where $\hat{\kappa}$ is a prime broadcast center of $T$ under the scenario $s$.

\begin{lemma}
\label{lemma:Direct-to-center} Suppose that $s \in C$ and $\hat{\kappa}$ is a prime
broadcast center of $T$ under $s$. If $x \in V(T) -\setof{\hat{\kappa}}$, then ${\textit b\_time}^s(x,
T) =  \dis_{x, \hat{\kappa}} \cdot \setuptime   + \tilde{w}^s_{x, \hat{\kappa}} + {\textit b\_time}^s(\hat{\kappa}, \BB{\hat{\kappa}, x})$.
\end{lemma}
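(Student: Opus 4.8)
The plan is to push the message from $x$ straight up the $x$-to-$\hat{\kappa}$ path, peeling off one edge at a time with Lemma~\ref{lemma:Break-edge} (which, although stated for $T$, is valid for any subtree, with the scenario restricted accordingly). The single ingredient needed to make each peeling step legitimate is the following claim $(\ast_v)$, which I would prove first: for every $v\in V(T)-\{\hat{\kappa}\}$, letting $v'$ denote the neighbour of $v$ on the $v$-to-$\hat{\kappa}$ path, one has ${\textit b\_time}^s(v,\BB{v,v'})\le {\textit b\_time}^s(v',\BB{v',v})$. Because $(v,v')\in E(T)$ we have $\BB{v,v'}=\BB{v,\hat{\kappa}}$ and $\BB{v',v}=\OB{v,v'}$, so $(\ast_v)$ reads ${\textit b\_time}^s(v,\BB{v,\hat{\kappa}})\le {\textit b\_time}^s(v',\OB{v,v'})$, which is exactly the hypothesis of Lemma~\ref{lemma:Break-edge} for the edge $(v,v')$ taken with $u=v$.

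I would prove $(\ast_v)$ by induction on $\dis_{v,\hat{\kappa}}$. When $\dis_{v,\hat{\kappa}}=1$ we have $v'=\hat{\kappa}$, and $(\ast_v)$ is literally the defining inequality of the prime broadcast center $\hat{\kappa}$. When $\dis_{v,\hat{\kappa}}\ge 2$, let $v''$ be the neighbour of $v'$ on the $v'$-to-$\hat{\kappa}$ path. I would use two facts that are immediate from the postal model: (a) if $u$ broadcasts in a tree $G$ and $y$ is a neighbour of $u$ in $G$, then ${\textit b\_time}^s(u,G)\ge \setuptime+w^s_{u,y}+{\textit b\_time}^s(y,G_y)$, where $G_y$ is the component of $G-u$ containing $y$; and (b) ${\textit b\_time}^s(u,H)\le {\textit b\_time}^s(u,G)$ for any connected subgraph $H$ of $G$ that contains $u$. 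Applying (a) inside $\OB{v,v'}$ with $u=v'$, $y=v''$ gives ${\textit b\_time}^s(v',\OB{v,v'})\ge \setuptime+{\textit b\_time}^s(v'',\OB{v',v''})$; the induction hypothesis $(\ast_{v'})$ gives ${\textit b\_time}^s(v',\BB{v',\hat{\kappa}})\le {\textit b\_time}^s(v'',\OB{v',v''})$; and (a) inside $\BB{v',\hat{\kappa}}$ with $u=v'$, $y=v$ gives ${\textit b\_time}^s(v',\BB{v',\hat{\kappa}})\ge {\textit b\_time}^s(v,\BB{v,\hat{\kappa}})$. Chaining these, ${\textit b\_time}^s(v',\OB{v,v'})\ge \setuptime+{\textit b\_time}^s(v,\BB{v,\hat{\kappa}})\ge {\textit b\_time}^s(v,\BB{v,\hat{\kappa}})$, which is $(\ast_v)$.

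With $(\ast_v)$ available for all $v$, let $x=a_0,a_1,\dots,a_d=\hat{\kappa}$ be the $x$-to-$\hat{\kappa}$ path, where $d=\dis_{x,\hat{\kappa}}$. Lemma~\ref{lemma:Break-edge} applied to the edge $(a_0,a_1)$ in $T$ — its hypothesis is precisely $(\ast_{a_0})$ — gives ${\textit b\_time}^s(x,T)=\setuptime+w^s_{a_0,a_1}+{\textit b\_time}^s(a_1,\BB{a_1,a_0})$, with $\BB{a_1,a_0}=\OB{a_0,a_1}$. I would then repeat inside this subtree on the edge $(a_1,a_2)$: the component of $\OB{a_0,a_1}-a_1$ that contains $a_2$ is $\OB{a_1,a_2}$, while the complementary part is a connected subgraph of $\BB{a_1,\hat{\kappa}}$ containing $a_1$, so by (b) its broadcast time from $a_1$ is at most ${\textit b\_time}^s(a_1,\BB{a_1,\hat{\kappa}})$, which by $(\ast_{a_1})$ is at most ${\textit b\_time}^s(a_2,\OB{a_1,a_2})$ — exactly the hypothesis of Lemma~\ref{lemma:Break-edge} for $(a_1,a_2)$ in $\OB{a_0,a_1}$. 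Iterating over all $d$ edges (after $i$ steps the subtree in force is $\OB{a_{i-1},a_i}$) yields ${\textit b\_time}^s(x,T)=d\cdot\setuptime+\sum_{i=0}^{d-1}w^s_{a_i,a_{i+1}}+{\textit b\_time}^s(\hat{\kappa},\OB{a_{d-1},\hat{\kappa}})$. Finally $\sum_{i=0}^{d-1}w^s_{a_i,a_{i+1}}=\tilde{w}^s_{x,\hat{\kappa}}$ and $\OB{a_{d-1},\hat{\kappa}}=\BB{\hat{\kappa},a_{d-1}}=\BB{\hat{\kappa},x}$, the last equality because $x$ and $a_{d-1}$ belong to the same component of $T-\hat{\kappa}$; this is the asserted identity.

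The \emph{main obstacle} is the claim $(\ast_v)$. Its base case is handed to us by primeness, but the point of the inductive step is to notice that the single setup delay $\setuptime$ gained at each recursion level is exactly what keeps the inequality alive as it is pushed back toward $\hat{\kappa}$. Everything after that is bookkeeping about which subtree is playing the role of ``$T$'' in Lemma~\ref{lemma:Break-edge} at each stage; the two elementary postal-model facts (a) and (b) are used repeatedly and should be stated explicitly rather than left implicit.
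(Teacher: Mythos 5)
Your proof is correct and reaches the same telescoping conclusion, but the route to the key monotonicity inequality --- that ${\textit b\_time}^s(u_i, \BB{u_i,u_{i+1}}) \le {\textit b\_time}^s(u_{i+1}, \BB{u_{i+1},u_i})$ along each edge of the $x$-to-$\hat{\kappa}$ path --- is genuinely different. You prove it as $(\ast_v)$ by induction on $\dis_{v,\hat{\kappa}}$, pushing the inequality one edge at a time toward $x$. The paper instead establishes it for all $i$ simultaneously via a single chain through $u_{h-1}$ and $\hat{\kappa}$: observing $\BB{u_i,u_{i+1}} \subseteq \BB{u_{h-1},\hat{\kappa}}$ and $\BB{\hat{\kappa},u_{h-1}} \subseteq \BB{u_{i+1},u_i}$, it chains ${\textit b\_time}^s(u_i, \BB{u_i,u_{i+1}}) \le {\textit b\_time}^s(u_{h-1}, \BB{u_{h-1},\hat{\kappa}}) \le {\textit b\_time}^s(\hat{\kappa}, \BB{\hat{\kappa},u_{h-1}}) \le {\textit b\_time}^s(u_{i+1}, \BB{u_{i+1},u_i})$, the middle step being primeness and the outer two resting on precisely the postal-model facts (a) and (b) that you state explicitly (the paper leaves them tacit). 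The paper's non-inductive chain is shorter; your version is more verbose but makes the elementary facts visible, which is a reasonable trade. The unrolling step also deviates: you re-apply Lemma~\ref{lemma:Break-edge} inside each nested subtree $\OB{a_{i-1},a_i}$, re-checking its hypothesis at every level via (b) and $(\ast_{a_i})$, whereas the paper applies Lemma~\ref{lemma:Break-edge} only in $T$ (equation~(\ref{equation:broadcasting-time})) and then uses the edge-set inclusions $\setof{(u_i,u_{i+1})}\cup E(\BB{u_{i+1},u_i})\subseteq E(\BB{u_i,u_{i-1}})\subset E(T)$ to squeeze out the telescoping relation ${\textit b\_time}^s(u_i, \BB{u_i,u_{i-1}}) = \setuptime + w^s_{u_i,u_{i+1}} + {\textit b\_time}^s(u_{i+1}, \BB{u_{i+1},u_i})$. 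Both variants are sound and of comparable length.
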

\begin{proof}
Suppose that $(u_0, u_1,\ldots,u_h)$ is the $x$-to-$\hat{\kappa}$ path of $T$,
where $u_0 = x$ and $u_h = \hat{\kappa}$. 
Refer to Figure~\ref{figure:proof-direct-to-center}.
For $0 \le i \le h-1$,
$\BB{u_i, u_{i+1}}$ is a subgraph of $\BB{u_{h-1}, \hat{\kappa}}$, and
$\BB{\hat{\kappa}, u_{h-1}}$ is a subgraph of $\BB{u_{i+1}, u_i}$.
Now that $ {\textit b\_time}^s(\hat{\kappa},
\BB{\hat{\kappa}, u_{h-1}}) \ge {\textit b\_time}^s(u_{h-1}, \BB{u_{h-1}, \hat{\kappa}}) $, we have ${\textit b\_time}^s(u_i, \BB{u_i, u_{i+1}}) \le
{\textit b\_time}^s(u_{i+1}, \BB{u_{i+1}, u_i})$ for $0 \le i \le h-1$. Then, according to Lemma~\ref{lemma:Break-edge}, we have
\begin{equation}
{\textit b\_time}^s(u_i, T) =  \setuptime + w^s_{u_i, u_{i+1}} + {\textit b\_time}^s(u_{i+1}, \BB{u_{i+1}, u_i})\mathrm{~for~}0 \le i \le h-1.
\label{equation:broadcasting-time}
\end{equation}

Observe $ \setof{(u_i, u_{i+1})} \cup E(\BB{u_{i+1}, u_i}) \subseteq E(\BB{u_i, u_{i-1}}) \subset E(T)$ for $1 \le i \le h-1$. It implies
$\setuptime + w^s_{u_i, u_{i+1}} + {\textit b\_time}^s(u_{i+1}, \BB{u_{i+1}, u_i}) \le {\textit b\_time}^s(u_i, \BB{u_i, u_{i-1}}) \le {\textit b\_time}^s(u_i, T)$ for $1 \le i \le h-1$, where the two equalities hold, as a consequence of (\ref{equation:broadcasting-time}). That is, we have
${\textit b\_time}^s(u_i, \BB{u_i, u_{i-1}}) = \setuptime + w^s_{u_i, u_{i+1}} + {\textit b\_time}^s(u_{i+1}, \BB{u_{i+1}, u_i})$ for
$1 \le i \le h-1$.
It follows that ${\textit b\_time}^s(x, T) =  \dis_{x, \hat{\kappa}} \cdot \setuptime + \tilde{w}^s_{x, \hat{\kappa}} + {\textit b\_time}^s(\hat{\kappa},
\BB{\hat{\kappa}, u_{h-1}})$,
where $\BB{\hat{\kappa}, u_{h-1}} = \BB{\hat{\kappa}, x}$.
\end{proof}

\vspace{6pt}
\begin{figure}[h]
\begin{center}
\unitlength=1mm
\begin{picture}(80, 45)
\put(1, 20){\circle{2}}
\put(11, 20){\circle{2}}
\put(29, 20){\circle{2}}
\put(39, 20){\circle{2}}
\put(57, 20){\circle{2}}
\put(67, 20){\circle{2}}

\put(2,20){\line(1,0){8}}
\put(30,20){\line(1,0){8}}
\put(58,20){\line(1,0){8}}

\put(12,20){\line(1,0){3}}
\put(17.5,20){$\ldots$}
\put(28,20){\line(-1,0){3}}
\put(40,20){\line(1,0){3}}
\put(45.5,20){$\ldots$}
\put(56,20){\line(-1,0){3}}

\put(-9, 22){$x = u_0$}
\put(9, 22){$u_1$}
\put(27, 22){$u_i$}
\put(37, 22){$u_{i+1}$}
\put(55, 22){$u_{h-1}$}
\put(66, 22){$u_h = \hat{\kappa}$}

\qbezier(67,21)(109, 45)(109,20)
\qbezier(67,19)(109, -5)(109,20)

\qbezier(57,21)(-15, 55)(-15,20)
\qbezier(57,19)(-15, -15)(-15,20)

\qbezier[140](39,21)(111, 55)(111,20)
\qbezier[140](39,19)(111, -15)(111,20)

\qbezier[100](29,21)(-13, 45)(-13,20)
\qbezier[100](29,19)(-13, -5)(-13,20)

\put(0, 40){$\BB{u_{h-1},\hat{\kappa}}$}
\put(90, 27){$\BB{\hat{\kappa}, u_{h-1}}$}
\put(-3, 28){$\BB{u_i,u_{i+1}}$}
\put(87, 40){$\BB{u_{i+1}, u_i}$}

\end{picture}
\vspace{-12pt}
\caption{proof of Lemma~\ref{lemma:Direct-to-center}.}
\label{figure:proof-direct-to-center}
\end{center}
\end{figure}
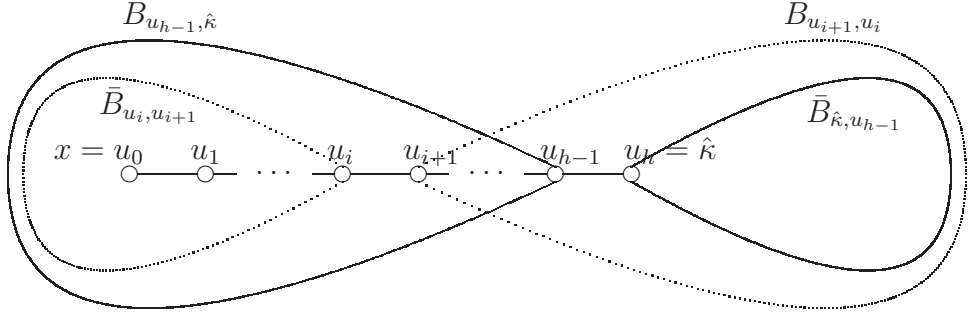
\vspace{-12pt}

\bigskip

Also notice that $\dis_{x, \hat{\kappa}} \cdot \setuptime  + \tilde{w}^s_{x, \hat{\kappa}}$ in Lemma~\ref{lemma:Direct-to-center} is the minimum time requirement for $x$ to broadcast a message along the $x$-to-$\hat{\kappa}$ path. Actually, Lemma~\ref{lemma:Direct-to-center} reveals that ${\textit b\_time}^s(x,T)$ can be calculated as
the minimum time requirement for $x$ to broadcast a message over a subgraph, i.e., the union of the $x$-to-$\hat{\kappa}$ path and $\BB{\hat{\kappa},x}$, of $T$. Moreover,  the proof of
Lemma~\ref{lemma:Direct-to-center} shows that an optimal transmission sequence for ${\textit b\_time}^s(x,T)$ has each $u_i$ connected to $u_{i+1}$ immediately after $u_i$ has received the message.
In general, for any $x,y \in V(T)$ and $x \not= y$,
if $y$ is not a prime broadcast center of $T$ under the scenario $s$, then we have
${\textit b\_time}^s(x,
T) \ge  \dis_{x,
y} \cdot \setuptime  + \tilde{w}^s_{x, y} + {\textit b\_time}^s(y, \BB{y, x})$.
Therefore, we have the following lemma for $x \not= y$.

\begin{lemma}
\label{lemma:general-x-y}
Suppose $s \in C$ and $x,y \in V(T)$. If $x \not= y$, then
%
${\textit b\_time}^s(x,
T) \ge \dis_{x,
y} \cdot \setuptime + \tilde{w}^s_{x, y} + {\textit b\_time}^s(y, \BB{y, x})$.
\end{lemma}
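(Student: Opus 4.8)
The plan is to reuse the telescoping argument from the proof of Lemma~\ref{lemma:Direct-to-center} almost verbatim, simply replacing its equalities by inequalities so that no hypothesis on $y$ is needed. Write the $x$-to-$y$ path of $T$ as $(u_0, u_1, \ldots, u_h)$ with $u_0 = x$, $u_h = y$ and $h = \dis_{x,y} \ge 1$. The one fact I will lean on is the monotonicity/subadditivity observation already used (without comment) inside the proof of Lemma~\ref{lemma:Direct-to-center}: if $G$ is a subtree of $T$ with $a \in V(G)$, $(a,b) \in E(G)$ and $\setof{(a,b)} \cup E(\BB{b,a}) \subseteq E(G)$, then
\[
{\textit b\_time}^s(a, G) \ \ge\ \setuptime + w^s_{a,b} + {\textit b\_time}^s(b, \BB{b,a}).
\]
The reason is that, in any broadcast scheme for $a$ over $G$, the message cannot reach $b$ before time $\setuptime + w^s_{a,b}$ (one link connection plus one transmission on $(a,b)$), and once $b$ has received it $b$ must still broadcast over all of $\BB{b,a}$, which cannot finish in less than ${\textit b\_time}^s(b, \BB{b,a})$ further time; since $b$ forwards nothing before receiving the message, the two costs add.

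With this in hand, the argument is a pure chaining along the path. Applying the observation at $i = 0$ (with $G = T$) gives ${\textit b\_time}^s(x,T) \ge \setuptime + w^s_{u_0,u_1} + {\textit b\_time}^s(u_1, \BB{u_1,u_0})$, and for each $1 \le i \le h-1$ the containment $\setof{(u_i,u_{i+1})} \cup E(\BB{u_{i+1},u_i}) \subseteq E(\BB{u_i,u_{i-1}})$ noted in the proof of Lemma~\ref{lemma:Direct-to-center} yields ${\textit b\_time}^s(u_i, \BB{u_i,u_{i-1}}) \ge \setuptime + w^s_{u_i,u_{i+1}} + {\textit b\_time}^s(u_{i+1}, \BB{u_{i+1},u_i})$. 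Substituting these $h$ inequalities into one another collapses all intermediate ${\textit b\_time}^s(u_i, \BB{u_i,u_{i-1}})$ terms and leaves
\[
{\textit b\_time}^s(x,T) \ \ge\ \sum_{i=0}^{h-1}\bigl(\setuptime + w^s_{u_i,u_{i+1}}\bigr) + {\textit b\_time}^s(u_h, \BB{u_h,u_{h-1}}) \ =\ \dis_{x,y}\cdot\setuptime + \tilde{w}^s_{x,y} + {\textit b\_time}^s(y, \BB{y,x}),
\]
because $\sum_{i=0}^{h-1} w^s_{u_i,u_{i+1}} = \tilde{w}^s_{x,y}$ and $\BB{u_h,u_{h-1}} = \BB{y,x}$ (removing $y = u_h$ leaves $u_{h-1}$ and $x$ in the same component of $T - y$). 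This is precisely the claimed bound; note that no case split on whether $y$ is a prime broadcast center is needed, and when $y$ does happen to be prime every inequality above becomes an equality and Lemma~\ref{lemma:Direct-to-center} is recovered.

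The step that warrants the most care is the subadditivity observation itself: one must make sure the ``get the message to $b$'' cost and the ``broadcast from $b$ over $\BB{b,a}$'' cost genuinely compose additively instead of overlapping in time. This is where the postal-model convention matters --- a node may connect to, and transmit to, its neighbors only after it has itself received the message --- together with the fact that in $G$ rooted at $a$ the vertex $b$ is the unique gateway to the whole subtree $\BB{b,a}$, so every vertex of $\BB{b,a}$ is served through $b$ and only after $b$'s own receipt time. Once that is granted, the remainder is a routine telescoping identical in shape to the proof of Lemma~\ref{lemma:Direct-to-center}.
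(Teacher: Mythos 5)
Your proof is correct and follows the same route the paper sketches (but never writes out): replace the equalities of Lemma~\ref{lemma:Direct-to-center}'s telescoping chain by one-sided inequalities derived from the subadditivity of ${\textit b\_time}$ across an edge, and collapse the chain. The paper presents the lemma as a remark after Lemma~\ref{lemma:Direct-to-center} without a formal proof, so your argument is a faithful and slightly more careful version of the intended one, with the added benefit of avoiding any case split on whether $y$ is a prime broadcast center.
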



\begin{lemma}\label{lemma:prime-is-the-center-of-star}
Suppose $s \in C$. A prime broadcast center $\hat{\kappa}$ of $T$
under $s$ is a center of the
star induced by ${\textit B\_Ctr}^s$.
\end{lemma}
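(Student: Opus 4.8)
The plan is to argue by contradiction after peeling off the trivial small cases. By Lemma~\ref{lemma:centers-as-star} the subgraph of $T$ induced by ${\textit B\_Ctr}^s$ is a star, and by definition $\hat{\kappa}\in{\textit B\_Ctr}^s$. If $|{\textit B\_Ctr}^s|\le 2$, then the star has one or two vertices and, by definition of the center of a star, every vertex of it is a center; since $\hat{\kappa}$ is one of those vertices, the claim holds. So I would assume $|{\textit B\_Ctr}^s|\ge 3$, in which case the star has a \emph{unique} center, say $c$, and $c$ is adjacent in $T$ to every other vertex of ${\textit B\_Ctr}^s$. Suppose, for contradiction, that $\hat{\kappa}\ne c$. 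Then $\hat{\kappa}$ is a leaf of the star, so $(c,\hat{\kappa})\in E(T)$ and $c\in N_T(\hat{\kappa})$. Since $|{\textit B\_Ctr}^s|\ge 3$, pick a third center $z\in{\textit B\_Ctr}^s\setminus\setof{c,\hat{\kappa}}$; because $(c,z)\in E(T)$ as well, uniqueness of paths in $T$ forces the $\hat{\kappa}$-to-$z$ path to be $\hat{\kappa}$--$c$--$z$, whence $\dis_{\hat{\kappa},z}=2$, $\dis_{\hat{\kappa},c}=1$, and $\BB{\hat{\kappa},z}=\BB{\hat{\kappa},c}$.

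The key step is then a short telescoping computation via Lemma~\ref{lemma:Direct-to-center}, applied with the prime broadcast center $\hat{\kappa}$ once to $c$ and once to $z$ (both are legitimate since both differ from $\hat{\kappa}$). This gives ${\textit b\_time}^s(c,T)=\setuptime+w^s_{c,\hat{\kappa}}+{\textit b\_time}^s(\hat{\kappa},\BB{\hat{\kappa},c})$ and ${\textit b\_time}^s(z,T)=2\setuptime+\tilde{w}^s_{z,\hat{\kappa}}+{\textit b\_time}^s(\hat{\kappa},\BB{\hat{\kappa},z})$. Substituting $\tilde{w}^s_{z,\hat{\kappa}}=w^s_{z,c}+w^s_{c,\hat{\kappa}}$ and $\BB{\hat{\kappa},z}=\BB{\hat{\kappa},c}$ and then folding in the expression for ${\textit b\_time}^s(c,T)$ yields ${\textit b\_time}^s(z,T)=\setuptime+w^s_{z,c}+{\textit b\_time}^s(c,T)$. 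Since $\setuptime>0$ and $w^s_{z,c}\ge 0$, this forces ${\textit b\_time}^s(z,T)>{\textit b\_time}^s(c,T)$, contradicting the fact that $z$ and $c$ both lie in ${\textit B\_Ctr}^s$ and hence have the same (minimum) broadcasting time over $T$. Therefore $\hat{\kappa}=c$, i.e., $\hat{\kappa}$ is the center of the star.

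I do not expect any serious obstacle here: the only substantive ingredient is the telescoping identity above, and everything else is bookkeeping about the star's structure. The one point to handle carefully is the justification that $\hat{\kappa}$ is a leaf adjacent to $c$, that the $\hat{\kappa}$-to-$z$ path passes through $c$, and that $\BB{\hat{\kappa},z}=\BB{\hat{\kappa},c}$; all of these rest on the edges of the induced star being genuine edges of $T$ together with uniqueness of paths in a tree. As an alternative to the second application of Lemma~\ref{lemma:Direct-to-center}, one could instead invoke Lemma~\ref{lemma:Break-edge} on the edge $(c,\hat{\kappa})$ (using the prime condition ${\textit b\_time}^s(\hat{\kappa},\BB{\hat{\kappa},c})\ge{\textit b\_time}^s(c,\BB{c,\hat{\kappa}})$) to obtain the expression for ${\textit b\_time}^s(c,T)$, but routing the whole argument through Lemma~\ref{lemma:Direct-to-center} keeps it uniform.
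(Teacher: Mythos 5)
Your proof is correct and follows essentially the same route as the paper's: both argue by contradiction that a non-center $\hat{\kappa}$ would lie on a path $(\hat{\kappa},c,z)$ through the star's center and another leaf, then apply Lemma~\ref{lemma:Direct-to-center} with $\hat{\kappa}$ twice to compare ${\textit b\_time}^s(c,T)$ with ${\textit b\_time}^s(z,T)$ and derive a strict inequality from $\setuptime>0$. The only difference is that you explicitly dispose of the $|{\textit B\_Ctr}^s|\le 2$ cases before invoking a unique center, which the paper leaves implicit.
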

\begin{proof}
If $\hat{\kappa}$ is not a center of the star induced by ${\textit B\_Ctr}^s$
(refer to Lemma~\ref{lemma:centers-as-star}), then there
exist two broadcast centers $\kappa_1$ and $\kappa_2$ in
${\textit B\_Ctr}^s$ such that $(\hat{\kappa}, \kappa_1,\kappa_2)$
is a path in $T$. According to
Lemma~\ref{lemma:Direct-to-center}, we have ${\textit
b\_time}^{s}(\kappa_1, T) = \dis_{\kappa_1, \hat{\kappa}} \cdot \setuptime +
\tilde{w}^{s}_{\kappa_1, \hat{\kappa}} + {\textit
b\_time}^{s}(\hat{\kappa}, \BB{\hat{\kappa}, \kappa_1})$ and
 ${\textit b\_time}^{s}(\kappa_2, T) = \dis_{\kappa_2, \hat{\kappa}} \cdot \setuptime +
\tilde{w}^{s}_{\kappa_2, \hat{\kappa}} + {\textit
b\_time}^{s}(\hat{\kappa}, \BB{\hat{\kappa}, \kappa_2})$.
Since $\P_{\kappa_1, \hat{\kappa}} \subset \P_{\kappa_2,
\hat{\kappa}}$ and $\BB{\hat{\kappa}, \kappa_1} =
\BB{\hat{\kappa}, \kappa_2}$, we have
${\textit b\_time}^{s}(\kappa_1, T) < {\textit
b\_time}^{s}(\kappa_2, T)$, a contradiction.
\end{proof}

%

\section{Finding a minmax-regret broadcast center}
\label{section:algorithm}
In this section, assuming that $T$ is with edge weight uncertainty,
we intend to find a minmax-regret broadcast center on $T$ under the postal model. It was shown in \cite{Su2016} that given any $x \in V(T)$ and $s \in C$, both ${\textit B\_Ctr}^s$ and ${\textit b\_time}^s(x, T)$ can be determined in $O(n)$ time. Suppose that a worst-case scenario $s'$ of $T$ with respect to $u \in V(T)$ is available, i.e., ${\textit max\_r}(u) = r^{s'}_{u,v'}$ for some $v' \in V(T)$. Since $v' \in {\textit B\_Ctr}^{s'}$, we can compute ${\textit max\_r}(u)$ in $O(n)$ time, using the results of \cite{Su2016}. Further, a minmax-regret broadcast center of $T$ can be obtained, if ${\textit max\_r}(u)$ is computed for all $u \in V(T)$.

In subsequent discussion, we use $\ddot{s}(u)$ to denote some worst-case scenario of $T$ with respect to $u$. If $Q(n)$ is the time requirement for finding $\ddot{s}(u)$, then the approach above takes $O(n^2 + n \cdot Q(n))$ time to find a minmax-regret broadcast center of $T$, because it examines all $n$ vertices of $T$. In this section, a more efficient approach to finding a minmax-regret broadcast center of $T$ is suggested. Inspired by the following lemma, the new approach can result in an $O(n \log n + Q(n)\cdot \log n)$ time algorithm.


\begin{lemma} \label{lemma:Contain-optimal-solution}
Suppose $x \in V(T)$. If $x \in {\textit B\_Ctr}^{\ddot{s}(x)}$, then $x$ is a
minmax-regret broadcast center of $T$. Otherwise, a minmax-regret broadcast center of $T$ can be found in $V(\OB{x, \kappa}) \cup
\setof{x}$, where $\kappa \in {\textit B\_Ctr}^{\ddot{s}(x)}$.
\end{lemma}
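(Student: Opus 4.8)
The plan is to argue that if $x$ is \emph{not} a minmax-regret broadcast center, then replacing $x$ by a carefully chosen neighbor towards $\kappa$ strictly decreases the maximum regret, so the optimum must lie on the $\kappa$-side of $x$. First I would handle the easy case: suppose $x \in {\textit B\_Ctr}^{\ddot{s}(x)}$. Then under the worst-case scenario $\ddot{s}(x)$ we have $r^{\ddot{s}(x)}_{x,v} = {\textit b\_time}^{\ddot{s}(x)}(x,T) - {\textit b\_time}^{\ddot{s}(x)}(v,T) \le 0$ for every $v$, so ${\textit max\_r}(x) \le 0$; since the relative regret of a vertex with respect to itself is always $0$, this forces ${\textit max\_r}(x) = 0$, and as ${\textit max\_r}(v) \ge r^{s}_{v,v} = 0$ for every $v$ and every $s$, $x$ attains the global minimum. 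Hence $x$ is a minmax-regret broadcast center.

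For the main case, assume $x \notin {\textit B\_Ctr}^{\ddot{s}(x)}$, write $s = \ddot{s}(x)$ for brevity, and fix $v' \in V(T)$ with ${\textit max\_r}(x) = r^{s}_{x,v'}$; by the remark preceding the statement, $v' \in {\textit B\_Ctr}^{s}$. By Lemma~\ref{lemma:prime-must-exist} and Lemma~\ref{lemma:prime-is-the-center-of-star} there is a prime broadcast center $\hat{\kappa}$ of $T$ under $s$, and $\hat{\kappa} \in {\textit B\_Ctr}^{s}$, so we may take $\kappa = \hat{\kappa}$; note $\hat{\kappa} \ne x$. Let $z$ be the neighbor of $x$ on the $x$-to-$\hat{\kappa}$ path, so $z \in V(\OB{x,\kappa})$. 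The goal is to show ${\textit max\_r}(z) < {\textit max\_r}(x)$, together with the analogous statement for every vertex in $V(\BB{x,\hat{\kappa}}) \setminus \setof{x}$, which will show no vertex strictly on the non-$\kappa$ side can beat $x$, and hence some minmax-regret broadcast center lies in $V(\OB{x,\kappa}) \cup \setof{x}$.

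The key computation uses Lemma~\ref{lemma:Direct-to-center}: since $\hat{\kappa}$ is prime under $s$ and $x \ne \hat{\kappa}$,
\[
{\textit b\_time}^{s}(x,T) = \dis_{x,\hat{\kappa}}\cdot \setuptime + \tilde{w}^{s}_{x,\hat{\kappa}} + {\textit b\_time}^{s}(\hat{\kappa}, \BB{\hat{\kappa},x}),
\]
and the same identity with $z$ in place of $x$. Because the $z$-to-$\hat{\kappa}$ path is a proper sub-path of the $x$-to-$\hat{\kappa}$ path and $\BB{\hat{\kappa},x} = \BB{\hat{\kappa},z}$, we get ${\textit b\_time}^{s}(z,T) < {\textit b\_time}^{s}(x,T)$, and in fact ${\textit b\_time}^{s}(z,T) = {\textit b\_time}^{s}(x,T) - \setuptime - w^{s}_{x,z} < {\textit b\_time}^{s}(x,T)$. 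Now for an arbitrary scenario $t \in C$ and arbitrary $y \in V(T)$, I want to compare $r^{t}_{z,y}$ with ${\textit max\_r}(x)$. The idea is to build from $t$ a modified scenario $t'$ that agrees with $t$ on every edge except $(x,z)$, where its weight is raised to the maximum $w^+_{x,z}$; broadcasting times of vertices on the $\hat{\kappa}$-side are affected only through the $\BB{\cdot,\cdot}$ terms in a controlled way, and one shows $r^{t}_{z,y} \le r^{t'}_{x,y'} + (\text{something}) \le {\textit max\_r}(x)$ for a suitable $y'$ — using Lemma~\ref{lemma:general-x-y} to lower-bound ${\textit b\_time}^{t'}(x,T)$ in terms of ${\textit b\_time}^{t'}(z, \BB{z,x})$ and the weight of $(x,z)$. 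Making this scenario-surgery argument precise — in particular showing that the gain from moving from $x$ to $z$ is never outweighed, uniformly over all $t$ and $y$ — is the heart of the proof and the main obstacle; the degenerate sub-cases (e.g. $y$ on the same side as $z$, or $y = z$ itself) must be checked separately but are routine. The symmetric claim for other vertices $u \in V(\BB{x,\hat{\kappa}})\setminus\setof{x}$ follows by applying the same reasoning along the $x$-to-$u$ path, since the first edge of that path again lies in $\BB{x,\hat{\kappa}}$.
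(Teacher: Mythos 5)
Your handling of the easy case ($x \in {\textit B\_Ctr}^{\ddot{s}(x)}$) matches the paper's and is fine. But the main case has a genuine gap, rooted in a misdirection of effort.

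What the lemma needs is: every vertex $x'$ \emph{outside} $V(\OB{x,\kappa}) \cup \setof{x}$ satisfies ${\textit max\_r}(x') > {\textit max\_r}(x)$. You instead set as your ``goal'' the strictly harder (and unnecessary) claim ${\textit max\_r}(z) < {\textit max\_r}(x)$ for the neighbor $z$ of $x$ on the $\kappa$-side, and plan to establish it by ``scenario surgery'' over arbitrary pairs $(t, y)$. That claim is not required by the lemma (which explicitly allows $x$ itself to remain a candidate), and proving it would mean controlling $r^t_{z,y}$ uniformly over all scenarios $t$ and targets $y$, which --- as you note yourself --- is the unresolved heart of your argument. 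The phrase ``the analogous statement for every vertex in $V(\BB{x,\hat\kappa}) \setminus \setof{x}$'' is also ambiguous: for those vertices you need the \emph{opposite} inequality, ${\textit max\_r}(u) > {\textit max\_r}(x)$, and you only gesture at it (``follows by applying the same reasoning'') without giving the computation.

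The paper's proof is much shorter precisely because it never touches scenarios other than $\ddot{s}(x)$. Since $\ddot{s}(x)$ is a worst-case scenario for $x$, for any $x' \in V(T) - V(\OB{x,\kappa}) - \setof{x}$ one applies Lemma~\ref{lemma:Direct-to-center} to both $x'$ and $x$ with the prime broadcast center $\hat\kappa$, uses $\P_{x,\hat\kappa} \subset \P_{x',\hat\kappa}$ and $\BB{\hat\kappa,x'} = \BB{\hat\kappa,x}$ to conclude ${\textit b\_time}^{\ddot{s}(x)}(x',T) > {\textit b\_time}^{\ddot{s}(x)}(x,T)$, and then observes
\[
{\textit max\_r}(x') \ge r^{\ddot{s}(x)}_{x',\hat\kappa} > r^{\ddot{s}(x)}_{x,\hat\kappa} = {\textit max\_r}(x).
\]
The single scenario $\ddot{s}(x)$ is the only witness needed; no modification of scenarios, and no claim about the $\kappa$-side neighbor $z$, is required. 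You would do well to abandon the $z$-side argument entirely and instead flesh out the one-line lower bound on ${\textit max\_r}(x')$ above, which is both sufficient and elementary.
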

\begin{proof}
Clearly, when $x \in {\textit B\_Ctr}^{\ddot{s}(x)}$,
we have ${\textit max\_r}(x) = 0$, implying that $x$ is a
minmax-regret broadcast center of $T$, because ${\textit max\_r}(v) \ge 0$ for every $v \in V(T)$. Then, we assume $x \not \in {\textit B\_Ctr}^{\ddot{s}(x)}$. It suffices to show that for every $x'  \in V(T) -
V(\OB{x, \kappa}) - \setof{x}$, we have ${\textit max\_r}(x') > {\textit max\_r}(x)$.
Suppose that $\hat{\kappa}$ is a prime broadcast center
of $T$ under $\ddot{s}(x)$.
According to
Lemma~\ref{lemma:Direct-to-center}, we have ${\textit b\_time}^{\ddot{s}(x)}(x', T) = \dis_{x', \hat{\kappa}} \cdot \setuptime + \tilde{w}^{\ddot{s}(x)}_{x', \hat{\kappa}} +
{\textit b\_time}^{\ddot{s}(x)}(\hat{\kappa}, \BB{\hat{\kappa}, x'})$ and ${\textit b\_time}^{\ddot{s}(x)}(x, T) = \dis_{x, \hat{\kappa}} \cdot \setuptime + \tilde{w}^{\ddot{s}(x)}_{x, \hat{\kappa}} + {\textit b\_time}^{\ddot{s}(x)}(\hat{\kappa},
\BB{\hat{\kappa}, x})$.
As a consequence of Lemma~\ref{lemma:centers-as-star},
we have $\hat{\kappa} \in {\textit B\_Ctr}^{\ddot{s}(x)} \subseteq V(\OB{x,
\kappa})$.
Since $\P_{x,\hat{\kappa}}
\subset \P_{x',\hat{\kappa}}$, we have $\dis_{x',
\hat{\kappa}} \cdot \setuptime + \tilde{w}^{\ddot{s}(x)}_{x', \hat{\kappa}} > \dis_{x, \hat{\kappa}} \cdot \setuptime + \tilde{w}^{\ddot{s}(x)}_{x, \hat{\kappa}}$. It follows
that we have ${\textit b\_time}^{\ddot{s}(x)}(x', T) > {\textit b\_time}^{\ddot{s}(x)}(x, T)$, because $\BB{\hat{\kappa}, x'} = \BB{\hat{\kappa}, x}$. So, we have
${\textit max\_r}(x') \ge r^{\ddot{s}(x)}_{x', \hat{\kappa}} = {\textit b\_time}^{\ddot{s}(x)}(x', T) - {\textit b\_time}^{\ddot{s}(x)}(\hat{\kappa}, T)
> {\textit b\_time}^{\ddot{s}(x)}(x, T) - {\textit b\_time}^{\ddot{s}(x)}(\hat{\kappa}, T)=r^{\ddot{s}(x)}_{x, \hat{\kappa}}={\textit max\_r}(x)$.
\end{proof}

\bigskip

With Lemma~\ref{lemma:Contain-optimal-solution}, a minmax-regret broadcast center of $T$ can be found by the prune-and-search strategy. The search starts on $T$. A vertex $z$ is first determined such that the largest open $z$-branch has minimum size, i.e., $\max_{y \in V(T) - \setof{z}}\{|\OB{z,y}|\} \le \max_{y \in V(T) - \setof{v}}\{|\OB{v, y}|\}$ for every $v \in V(T)$. The vertex $z$ thus found is called a {\em centroid} of $T$. If $z \in {\textit B\_Ctr}^{\ddot{s}(z)}$, then $z$ is a minmax-regret broadcast center of $T$ and the search terminates.

Otherwise, a broadcast center $\kappa \in {\textit B\_Ctr}^{\ddot{s}(z)}$ of $T$ under $\ddot{s}(z)$ is found, and the search continues on the subtree of $T$ induced by $V(\OB{z, \kappa}) \cup \setof{z}$. The resulting subtree is denoted by $T'$. A centroid $z'$ of $T'$
is then determined. Again, if $z' \in {\textit B\_Ctr}^{\ddot{s}(z')}$, then the search terminates with $z'$ being a minmax-regret broadcast center of $T$. If
$z' \not\in {\textit B\_Ctr}^{\ddot{s}(z')}$, then some $\kappa' \in {\textit B\_Ctr}^{\ddot{s}(z')}$ is found and the search continues on the subtree of $T'$ induced by $V(T') \cap \{V(\OB{z', \kappa'}) \cup \setof{z'}\}$.

The search will proceed until $z$ is a minmax-regret broadcast center of $T$ or the size of the induced subtree is small enough. For the latter case, a minmax-regret broadcast center of $T$ can be found with an additional $O(n + Q(n))$ time. The detailed execution is expressed as Algorithm~\ref{algorithm:mrbc} below.

\bigskip

\begin{algorithm}[htb]
\caption{Determining a minmax-regret broadcast center of a tree}
\label{algorithm:mrbc}
\begin{algorithmic} [1]
\Require a tree $T$.
\Ensure a minmax-regret broadcast center of $T$.
\State $T' \leftarrow T$;
\While {$|V(T')| \ge 3$}
\State determine a centroid $z$ of $T'$;
\State find ${\ddot{s}(z)}$ and ${\textit B\_Ctr}^{\ddot{s}(z)}$;
\If    {$z \in {\textit B\_Ctr}^{\ddot{s}(z)}$}
\State \Return $z$;
%
\Else
\State find $\kappa \in {\textit B\_Ctr}^{\ddot{s}(z)}$;
\State $T' \leftarrow$ the subtree of $T'$ induced by $V(T') \cap \{V(\OB{z, \kappa}) \cup \setof{z}\}$;
\EndIf
\EndWhile
\State determine $v \in V(T')$ whose ${\textit max\_r}(v)$ is minimum;
\State \Return $v$.
\end{algorithmic}
\end{algorithm}

It was shown in~\cite{Kang75,Goldman71}
that a centroid $z$ of $T$ can be found in $O(n)$ time.
Besides, the largest open $z$-branch contains at most
$\lfloor \frac{n}{2} \rfloor$ vertices. However, it is not easy to find $\ddot{s}(z)$, because there are an infinite number of scenarios. Suppose $V(T) = \{v_1, v_2, \ldots, v_n\}$.
For $x \in V(T)$,
we would like to construct a finite set $\ddot{S}(x) = \bigcup_{1\le \ell \le n}\Psi_{\ell}(x)$ of scenarios such that a worst-case scenario of $T$ with respect to $x$ can be found in some $\Psi_{\ell}(x)$, i.e., $\ddot{s}(x) \in \Psi_{\ell}(x)$. 
The construction of $\ddot{S}(x)$ is described below.

In the rest of this paper,
we assume $x = v_n$, without losing generality.
We set $\Psi_n(x) = \setof{c}$, where $c\in C$ is an arbitrary scenario of $T$.
Define $\alpha_{x, v_i}$ to be the {\em base scenario}
of $T$ with respect to $x$ and $v_i$, where $1 \le i \le n-1$, which satisfies the following:
\begin{itemize}
\item $w^{\alpha_{x, v_i}}_{a,b} = w^+_{a, b}$, if $(a, b) \in \P(x, v_i) \cup E(\BB{v_i, x})$;
\item $w^{\alpha_{x, v_i}}_{a,b} = w^-_{a, b}$, else.
\end{itemize}

Refer to Figure~\ref{figure:scenario}, where an illustrative example is shown with $\setuptime = 1$. In Figure~\ref{figure:scenario}(b), we use $\overline{7}$ and $\underline{5}$, for example, to denote $w^+_{a,b}$ and $w^-_{a,b}$, respectively. For any $s \in C$, we have $\dis_{x, v_i} \cdot \setuptime + \tilde{w}^{s}_{x, v_i} + {\textit b\_time}^{s}(v_i, \BB{v_i, x}) \le \dis_{x, v_i} \cdot \setuptime + \tilde{w}^{\alpha_{x, v_i}}_{x, v_i} + {\textit b\_time}^{\alpha_{x, v_i}}(v_i, \BB{v_i, x})$. To construct $\Psi_i(x)$, we need to create new scenarios, which are modifications of $\alpha_{x, v_i}$. The edge weights under $\alpha_{x, v_i}$ are to be changed according to a predefined sequence that is suggested by the following lemma.

\vspace{12pt}
\begin{figure}[h]
\begin{center}
\unitlength=1mm
\begin{picture}(150, 60)(-3, -10)
\scriptsize
\put(0, 29){\circle{2}}
\put(17, 29){\circle{2}}
\put(34, 29){\circle{2}}
\put(51, 29){\circle{2}}
\put(34, 46){\circle{2}}
\put(51, 46){\circle{2}}
\put(0, 12){\circle{2}}
\put(17, 12){\circle{2}}
\put(4, 39){\circle{2}}
\put(8, -2){\circle{2}}
\put(26, -2){\circle{2}}
\put(60, 15){\circle{2}}
\put(60, -2){\circle{2}}
\put(46, 6){\circle{2}}

\put(1,29){\line(1,0){15}}
\put(18,29){\line(1,0){15}}
\put(35,29){\line(1,0){15}}
\put(34,30){\line(0,1){15}}
\put(51,30){\line(0,1){15}}
\put(0,13){\line(0,1){15}}
\put(17,13){\line(0,1){15}}
\put(8,-1){\line(3,4){9}}
\put(26,-1){\line(-3,4){9}}
\put(17,30){\line(-4,3){12}}
\put(60,16){\line(-3,4){9}}
\put(60,-1){\line(0,1){15}}
\put(47,6){\line(4,3){12}}

\put(5, 25.5){$[0, 7]$}
\put(23, 30.5){$[1, 2]$}
\put(39, 30.5){$[1, 2]$}
\put(34.5, 38){$[0, 3]$}
\put(51.5, 38){$[5, 6]$}
\put(0.5, 18){$[2, 5]$}
\put(17.5, 18){$[2, 5]$}
\put(23, 4){$[1, 6]$}
\put(4.5, 4){$[3, 4]$}
\put(10.5, 36){$[5, 7]$}
\put(56, 22){$[2, 4]$}
\put(47.5, 12){$[2, 3]$}
\put(60.5, 6){$[1, 4]$}

\put(80, 29){\circle{2}}
\put(97, 29){\circle{2}}
\put(114, 29){\circle{2}}
\put(131, 29){\circle{2}}
\put(114, 46){\circle{2}}
\put(131, 46){\circle{2}}
\put(80, 12){\circle{2}}
\put(97, 12){\circle{2}}
\put(84, 39){\circle{2}}
\put(88, -2){\circle{2}}
\put(106, -2){\circle{2}}
\put(140, 15){\circle{2}}
\put(140, -2){\circle{2}}
\put(126, 6){\circle{2}}

\put(81,29){\line(1,0){15}}
\put(98,29){\line(1,0){15}}
\put(115,29){\line(1,0){15}}
\put(114,30){\line(0,1){15}}
\put(131,30){\line(0,1){15}}
\put(80,13){\line(0,1){15}}
\put(97,13){\line(0,1){15}}
\put(88,-1){\line(3,4){9}}
\put(106,-1){\line(-3,4){9}}
\put(97,30){\line(-4,3){12}}
\put(140,16){\line(-3,4){9}}
\put(140,-1){\line(0,1){15}}
\put(127,6){\line(4,3){12}}

\put(87.5, 25.5){$\overline{7}$}
\put(104.5, 30){$\overline{2}$}
\put(121.5, 30){$\overline{2}$}
\put(114.5, 37){$\underline{0}$}
\put(131.5, 37){$\underline{5}$}
\put(80.5, 18){$\overline{5}$}
\put(97.5, 18){$\overline{5}$}
\put(103, 4){$\overline{6}$}
\put(89, 4){$\overline{4}$}
\put(90, 36){$\overline{7}$}
\put(136, 22){$\underline{2}$}
\put(130.5, 11){$\underline{2}$}
\put(140.5, 6){$\underline{1}$}

\normalsize
\put(52, 29){$x$}
\put(18, 30){$v_i$}
\put(132, 29){$x$}
\put(98, 30){$v_i$}
\put(77, 31){$u_{i,1}$}
\put(98, 13){$u_{i,2}$}
\put(81, 41.5){$u_{i,3}$}

\put(28, -10){(a)}
\put(108, -10){(b)}

\end{picture}
\caption{An illustrative example with $\setuptime = 1$. (a) A tree $T$. (b) The base scenario $\alpha_{x, v_i}$.}
\label{figure:scenario}
\end{center}
\end{figure}
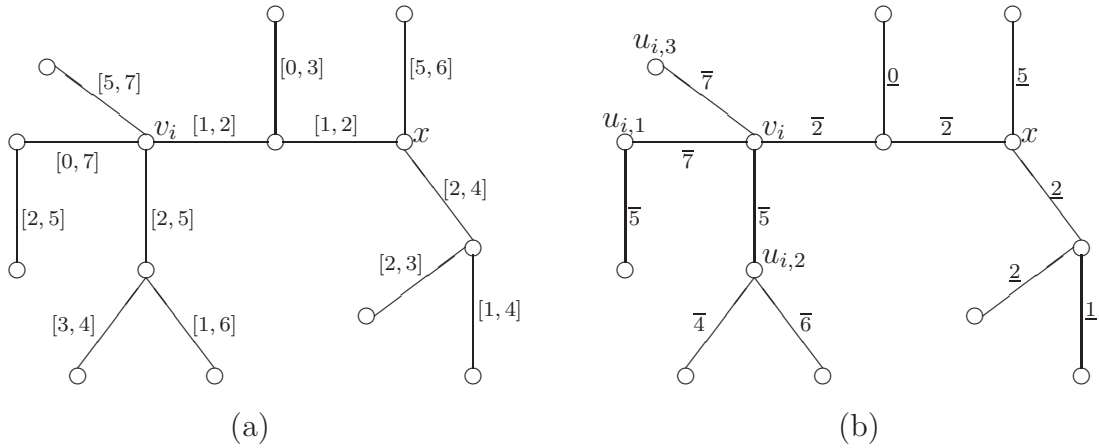
\vspace{-12pt}

\begin{lemma}
$\cite{Su2016}$
\label{lemma:optimal-sequence} Suppose $x \in V(T)$, $s \in C$, and
$N_T(x) = \{u_1, u_2, \ldots, u_h\}$, where $h = |N_T(x)|$. If
$w^s_{x, u_k} + {\textit b\_time}^{s}(u_k, \BB{u_k, x})$
is nonincreasing as $k$ increases from $1$ to $h$,
then $u_1, u_2, \ldots, u_h$ is an optimal sequence (with respect to minimum broadcast time)
for $x$ to broadcast a message to $N_T(x)$ under $s$.
That is, ${\textit b\_time}^{s}(x, T) = \max\{k \cdot \setuptime + w^s_{x, u_k} + {\textit b\_time}^{s}(u_k, \BB{u_k, x})\mid 1 \le k \le h\}$.
\end{lemma}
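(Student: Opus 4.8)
The plan is to recast the quantity ${\textit b\_time}^s(x,T)$ as the optimal value of a one-machine scheduling problem and then show that sorting the neighbors by nonincreasing $a_k := w^s_{x,u_k} + {\textit b\_time}^s(u_k,\BB{u_k,x})$ solves it. First I would argue that, under the postal model, any broadcast protocol rooted at $x$ may be assumed to be described by a permutation $\sigma$ of $\{1,\dots,h\}$: a sender connects to at most one receiver at a time, and postponing a connection can only delay things, so $x$ establishes its connections to $u_{\sigma(1)},u_{\sigma(2)},\dots,u_{\sigma(h)}$ at time units $\setuptime,2\setuptime,\dots,h\setuptime$. Since $(x,u_k)$ is an edge of the tree $T$, it is a bridge, so every vertex of $\BB{u_k,x}$ can be reached only through the single connection $x\to u_k$; consequently $\BB{u_k,x}$ cannot be entirely informed before time $\sigma^{-1}(k)\cdot\setuptime + w^s_{x,u_k} + {\textit b\_time}^s(u_k,\BB{u_k,x})$, and this value is attained by letting $u_k$ launch an optimal broadcast inside $\BB{u_k,x}$ the moment it receives the message (the subtrees $\BB{u_k,x}$ lie in pairwise disjoint parts of $T$, so these broadcasts do not interfere with one another or with the remaining connections of $x$). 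Hence ${\textit b\_time}^s(x,T) = \min_{\sigma}\max_{1\le j\le h}\{\, j\cdot\setuptime + a_{\sigma(j)}\,\}$, and the hypothesis of the lemma says precisely that the identity permutation lists the $a_k$ in nonincreasing order.

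It then remains to prove that the identity permutation attains this minimum, i.e. that $\max_j\{\, j\cdot\setuptime + a_{\sigma(j)}\,\}\ \ge\ \max_j\{\, j\cdot\setuptime + a_j\,\}$ for every permutation $\sigma$. Let $j^{*}$ attain the right-hand side, so it equals $j^{*}\cdot\setuptime + a_{j^{*}}$. The $j^{*}$ indices $1,2,\dots,j^{*}$ occupy $j^{*}$ distinct positions under $\sigma$, so by pigeonhole at least one of them, say $k_0\le j^{*}$, is placed at some position $p=\sigma^{-1}(k_0)\ge j^{*}$. Because the sequence is nonincreasing and $k_0\le j^{*}$ we have $a_{k_0}\ge a_{j^{*}}$, and because $p\ge j^{*}$ we obtain $\max_j\{\, j\cdot\setuptime + a_{\sigma(j)}\,\}\ \ge\ p\cdot\setuptime + a_{\sigma(p)}\ =\ p\cdot\setuptime + a_{k_0}\ \ge\ j^{*}\cdot\setuptime + a_{j^{*}}$, which is what we needed. (An equivalent adjacent-transposition exchange argument also works: swapping any out-of-order pair of positions never increases the maximum, and bubble-sorting into nonincreasing order shows optimality.)

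The only genuinely delicate part — the one I would write out in full — is the reduction of the first paragraph: one must justify that no protocol can beat the permutation form (idle waiting is never beneficial, and a neighbor may as well forward the message immediately upon receipt) and, crucially, that the global broadcast time equals the stated maximum rather than some smaller value obtained by informing a subtree $\BB{u_k,x}$ through more than one route — impossible in a tree, since $(x,u_k)$ is a bridge. Once that structural claim is in place, the combinatorial optimality of the nonincreasing order is the routine scheduling fact proved in the second paragraph.
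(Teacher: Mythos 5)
The paper does not actually prove this lemma --- it is cited verbatim from~\cite{Su2016} --- so there is no in-paper proof to compare against. Your reconstruction is correct and self-contained. The reduction of ${\textit b\_time}^s(x,T)$ to $\min_{\sigma}\max_{j}\{\,j\cdot\setuptime + a_{\sigma(j)}\,\}$ is the right structural observation: because each $(x,u_k)$ is a bridge, the sets $\BB{u_k,x}$ partition $V(T)\setminus\{x\}$, every vertex of $\BB{u_k,x}$ is reachable only through that one link, and $a_k = w^s_{x,u_k} + {\textit b\_time}^s(u_k,\BB{u_k,x})$ is exactly the residual time once $x$'s $j$-th connection (established at time $j\cdot\setuptime$) goes to $u_k$. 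Your pigeonhole argument in the second paragraph is a clean proof that a nonincreasing arrangement of the $a_k$ minimizes $\max_j\{j\cdot\setuptime + a_{\sigma(j)}\}$; the adjacent-transposition exchange you mention in passing would work equally well. The one step worth spelling out if this were to be included formally is the ``no idling, no rebroadcast'' justification you flag as delicate: it suffices to note that shifting any connection or forwarding event earlier is always feasible and never increases any completion time, so an optimal protocol is work-conserving, and the disjointness of the $\BB{u_k,x}$ means the subtree broadcasts neither compete for $x$'s connection slots nor interfere with each other.
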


Let $h_i$ = $|N_{\BB{v_i, x}}(v_i)|$. If $h_i = 0$,
then we set $\Psi_i(x) = \emptyset$.
Otherwise, the vertices of $N_{\BB{v_i, x}}(v_i)$ are arranged as $u_{i,1}, u_{i,2}, \ldots, u_{i,h_i}$
so that
$w^{\alpha_{x, v_i}}_{v_i, u_{i,k}} + {\textit b\_time}^{\alpha_{x, v_i}}(u_{i,k}, \BB{u_{i,k}, v_i})$ is nonincreasing as $k$ increases from 1 to $h_i$. For example, refer to Figure~\ref{figure:scenario}(b) again, where $w^{\alpha_{x, v_i}}_{v_i, u_{i,k}} + {\textit b\_time}^{\alpha_{x, v_i}}(u_{i,k}, \BB{u_{i,k}, v_i})=$ 13, 12, 7, as  $k$ = 1, 2, 3, respectively.
Then, we set $\Psi_{i}(x)= \{\beta_{x, v_i}^1, \beta_{x, v_i}^2, \ldots, \beta_{x, v_i}^{h_i}\}$,
where each $\beta_{x, v_i}^j$, $1 \le j \le h_i$, is a scenario of $T$ with respect to $x$ and $v_i$ that satisfies the following:
\begin{itemize}
\item $w^{\beta_{x, v_i}^j}_{a,b} = w^-_{a, b}$, if $(a, b) \in \bigcup_{j < k \le h_i}(\setof{(v_i, u_{i,k})}+E(\BB{u_{i,k}, v_i}))$;
\item $w^{\beta_{x, v_i}^j}_{a,b} = w^{\alpha_{x, v_i}}_{a, b}$, else.
\end{itemize}

Refer to Figure~\ref{figure:beta-scenario}, where $\beta_{x, v_i}^1$ and $\beta_{x, v_i}^2$ are illustrated for the example of Figure~\ref{figure:scenario}. Compared with $\alpha_{x, v_i}$, the changes (i.e., $w^{\beta_{x, v_i}^j}_{a,b} = w^-_{a, b}$) for $\beta_{x, v_i}^j$ are based on the sequence of $u_{i,1}, u_{i,2}, \ldots, u_{i,h_i}$. That is, $\beta_{x, v_i}^1$ changes
the weights of all edges in $(\setof{(v_i, u_{i,2})}+E(\BB{u_{i,2}, v_i})) \cup (\setof{(v_i, u_{i,3})}+E(\BB{u_{i,3}, v_i}))$, $\beta_{x, v_i}^2$ changes the weights of all edges in $(\setof{(v_i, u_{i,3})}+E(\BB{u_{i,3}, v_i}))$, and $\beta_{x, v_i}^3$ remains the same as $\alpha_{x, v_i}$. For this example, if $v_i$ is a prime broadcast center of $T$ under some worst-case scenario of $T$ with respect to $x$, then $\beta_{x, v_i}^1$ is a worst-case scenario of $T$ with respect to $x$ (i.e., $\beta_{x, v_i}^1 = \ddot{s}(x)$), as explained below.

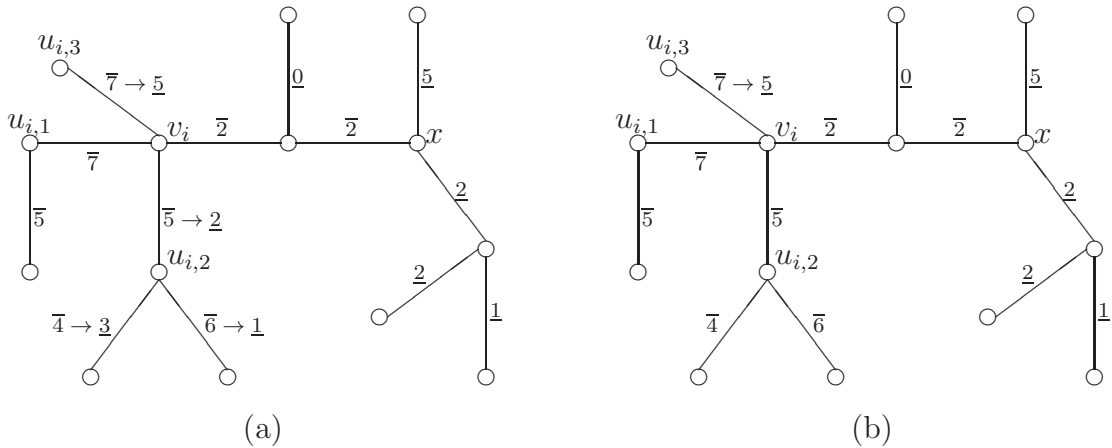
\begin{figure}[h]
\begin{center}
\unitlength=1mm
\begin{picture}(150, 60)(-3, -10)
\scriptsize
\put(0, 29){\circle{2}}
\put(17, 29){\circle{2}}
\put(34, 29){\circle{2}}
\put(51, 29){\circle{2}}
\put(34, 46){\circle{2}}
\put(51, 46){\circle{2}}
\put(0, 12){\circle{2}}
\put(17, 12){\circle{2}}
\put(4, 39){\circle{2}}
\put(8, -2){\circle{2}}
\put(26, -2){\circle{2}}
\put(60, 15){\circle{2}}
\put(60, -2){\circle{2}}
\put(46, 6){\circle{2}}

\put(1,29){\line(1,0){15}}
\put(18,29){\line(1,0){15}}
\put(35,29){\line(1,0){15}}
\put(34,30){\line(0,1){15}}
\put(51,30){\line(0,1){15}}
\put(0,13){\line(0,1){15}}
\put(17,13){\line(0,1){15}}
\put(8,-1){\line(3,4){9}}
\put(26,-1){\line(-3,4){9}}
\put(17,30){\line(-4,3){12}}
\put(60,16){\line(-3,4){9}}
\put(60,-1){\line(0,1){15}}
\put(47,6){\line(4,3){12}}

\put(7.5, 25.5){$\overline{7}$}
\put(24.5, 30){$\overline{2}$}
\put(41.5, 30){$\overline{2}$}
\put(34.5, 37){$\underline{0}$}
\put(51.5, 37){$\underline{5}$}
\put(0.5, 18){$\overline{5}$}
\put(17.5, 18){$\overline{5}\rightarrow\underline{2}$}
\put(23, 4){$\overline{6}\rightarrow\underline{1}$}
\put(3, 4){$\overline{4}\rightarrow\underline{3}$}
\put(10, 36){$\overline{7}\rightarrow\underline{5}$}
\put(56, 22){$\underline{2}$}
\put(50.5, 11){$\underline{2}$}
\put(60.5, 6){$\underline{1}$}

\put(80, 29){\circle{2}}
\put(97, 29){\circle{2}}
\put(114, 29){\circle{2}}
\put(131, 29){\circle{2}}
\put(114, 46){\circle{2}}
\put(131, 46){\circle{2}}
\put(80, 12){\circle{2}}
\put(97, 12){\circle{2}}
\put(84, 39){\circle{2}}
\put(88, -2){\circle{2}}
\put(106, -2){\circle{2}}
\put(140, 15){\circle{2}}
\put(140, -2){\circle{2}}
\put(126, 6){\circle{2}}

\put(81,29){\line(1,0){15}}
\put(98,29){\line(1,0){15}}
\put(115,29){\line(1,0){15}}
\put(114,30){\line(0,1){15}}
\put(131,30){\line(0,1){15}}
\put(80,13){\line(0,1){15}}
\put(97,13){\line(0,1){15}}
\put(88,-1){\line(3,4){9}}
\put(106,-1){\line(-3,4){9}}
\put(97,30){\line(-4,3){12}}
\put(140,16){\line(-3,4){9}}
\put(140,-1){\line(0,1){15}}
\put(127,6){\line(4,3){12}}

\put(87.5, 25.5){$\overline{7}$}
\put(104.5, 30){$\overline{2}$}
\put(121.5, 30){$\overline{2}$}
\put(114.5, 37){$\underline{0}$}
\put(131.5, 37){$\underline{5}$}
\put(80.5, 18){$\overline{5}$}
\put(97.5, 18){$\overline{5}$}
\put(103, 4){$\overline{6}$}
\put(89, 4){$\overline{4}$}
\put(90, 36){$\overline{7}\rightarrow\underline{5}$}
\put(136, 22){$\underline{2}$}
\put(130.5, 11){$\underline{2}$}
\put(140.5, 6){$\underline{1}$}

\normalsize
\put(52, 29){$x$}
\put(18, 30){$v_i$}
\put(-3, 31){$u_{i,1}$}
\put(18, 13){$u_{i,2}$}
\put(1, 41.5){$u_{i,3}$}
\put(132, 29){$x$}
\put(98, 30){$v_i$}
\put(77, 31){$u_{i,1}$}
\put(98, 13){$u_{i,2}$}
\put(81, 41.5){$u_{i,3}$}

\put(28, -10){(a)}
\put(108, -10){(b)}

\end{picture}
\caption{Two instances of $\beta_{x, v_i}^j$. (a) $\beta_{x, v_i}^1$. (b) $\beta_{x, v_i}^2$.}
\label{figure:beta-scenario}
\end{center}
\end{figure}

According to Lemma~\ref{lemma:Direct-to-center}, we have ${\textit b\_time}^{\ddot{s}(x)}(x, T)
= \dis_{x,v_i} \cdot \setuptime + \tilde{w}^{\ddot{s}(x)}_{x, v_i} + {\textit b\_time}^{\ddot{s}(x)}(v_i, \BB{v_i, x})$, i.e., ${\textit b\_time}^{\ddot{s}(x)}(x, T) - {\textit b\_time}^{\ddot{s}(x)}(v_i, \BB{v_i, x}) = \dis_{x,v_i} \cdot \setuptime + \tilde{w}^{\ddot{s}(x)}_{x, v_i}$. Further, we have $\dis_{x,v_i} \cdot \setuptime + \tilde{w}^{\ddot{s}(x)}_{x, v_i} \le \dis_{x,v_i} \cdot 1 + \tilde{w}^{\beta_{x, v_i}^1}_{x, v_i} = 2 + 4 = 6$, as a consequence of $\beta_{x, v_i}^1$ and $\alpha_{x, v_i}$. Since $r^{\ddot{s}(x)}_{x, v_i} =
{\textit b\_time}^{\ddot{s}(x)}(x, T) - {\textit b\_time}^{\ddot{s}(x)}(v_i, T) \le {\textit b\_time}^{\ddot{s}(x)}(x, T) - {\textit b\_time}^{\ddot{s}(x)}(v_i, \BB{v_i, x})$, we have $r^{\ddot{s}(x)}_{x, v_i} \le \dis_{x,v_i} \cdot \setuptime + \tilde{w}^{\beta_{x, v_i}^1}_{x, v_i} = 6$. On the other hand, we have $r^{\beta_{x, v_i}^1}_{x, v_i} = {\textit b\_time}^{\beta_{x, v_i}^1}(x, T) - {\textit b\_time}^{\beta_{x, v_i}^1}(v_i, T)$ $= 20 - 14 = 6 \ge r^{\ddot{s}(x)}_{x, v_i}$. Now that $v_i$ is a prime broadcast center of $T$ under $\ddot{s}(x)$, we have  $r^{\ddot{s}(x)}_{x, v_i}$ = ${\textit max\_r}(x)$. So, we have $r^{\beta_{x, v_i}^1}_{x, v_i}$ = ${\textit max\_r}(x)$, implying that $\beta_{x, v_i}^1$ is a worst-case scenario of $T$ with respect to $x$ (i.e., $\beta_{x, v_i}^1 = \ddot{s}(x)$).

In general, if $\ddot{s}(x) \not\in \Psi_n(x)$, then there exists $\beta_{x, v_i}^j = \ddot{s}(x)$ for some $1 \le i < n$ and $1 \le j \le h_i$,
where $v_i$ is a prime broadcast center of $T$ under some worst-case scenario of $T$ with respect to $x$. After $\ddot{S}(x)$ $(=\bigcup_{1\le \ell \le n}\Psi_{\ell}(x))$ is constructed, $\ddot{s}(x)$ can be determined from $\ddot{S}(x)$, as a consequence that $\max\{r^s_{x,y} \mid s \in \ddot{S}(x)$ and $y \in {\textit B\_Ctr}^{s}\}$
occurs when $s = \ddot{s}(x)$. The detailed execution for finding $\ddot{s}(x)$ is expressed as Algorithm 2 below.

\bigskip
\begin{algorithm}[H]
\caption{Finding $\ddot{s}(x)$}
\label{algorithm:worst-case}
\begin{algorithmic} [1]
\Require a tree $T$ with vertices $v_1, v_2, \ldots, v_n$ and a vertex $x$ $(=v_n)$ of $T$.
\Ensure $\ddot{s}(x)$.
\State set $\Psi_n(x) = \setof{c}$, where $c$ is an arbitrary scenario of $T$;
\For   {each $v_i \in V(T) - \setof{x}$}
\State construct $\alpha_{x, v_i}$;
\vspace{4pt}
\State $h_i \leftarrow |N_{\BB{v_i, x}}(v_i)|$;
\vspace{4pt}
%
\If {$h_i = 0$}
\State set $\Psi_i(x) = \emptyset$;
\Else
\State arrange the vertices of $N_{\BB{v_i, x}}(v_i)$ as $u_{i,1}, u_{i,2}, \ldots, u_{i,h_i}$ so that $w^{\alpha_{x, v_i}}_{v_i, u_{i,k}} +$
\Statex \hspace{35pt}${\textit b\_time}^{\alpha_{x, v_i}}(u_{i,k}, \BB{u_{i,k}, v_i})$ is nonincreasing as $k$ increases from 1 to $h_i$;
\vspace{4pt}
\State construct $\beta_{x, v_i}^1, \beta_{x, v_i}^2, \ldots, \beta_{x, v_i}^{h_i}$;
\vspace{4pt}
\State set $\Psi_i(x) = \{\beta_{x, v_i}^1, \beta_{x, v_i}^2, \ldots, \beta_{x, v_i}^{h_i}\}$;
\EndIf
\EndFor
\State determine $s' \in \bigcup_{1\le \ell \le n}\Psi_{\ell}(x)$ so that $r^{s'}_{x,y'} = \max\{r^s_{x,y} \mid s \in \bigcup_{1\le \ell \le n}\Psi_{\ell}(x)$ and
\Statex $y \in {\textit B\_Ctr}^{s}\}$ for some $y' \in {\textit B\_Ctr}^{s'}$;
\State \Return $s'$.
\end{algorithmic}
\end{algorithm}

The main result of this paper is stated as the following theorem, whose proof is shown in the following two sections.
\begin{theorem}
\label{theorem:correctness}
A minmax-regret broadcast center on a tree of $n$ vertices with edge weight uncertainty can be found in $O(n \log n \log \log n)$ time.
\end{theorem}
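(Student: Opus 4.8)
The plan is to establish Theorem~\ref{theorem:correctness} by proving two things: (i) Algorithm~\ref{algorithm:mrbc} correctly returns a minmax-regret broadcast center, and (ii) the whole computation runs in $O(n\log n \log\log n)$ time. Correctness of the outer loop follows almost directly from Lemma~\ref{lemma:Contain-optimal-solution}: at each iteration, if the centroid $z$ of the current subtree $T'$ lies in ${\textit B\_Ctr}^{\ddot{s}(z)}$, then by the lemma $z$ is a minmax-regret broadcast center of the \emph{original} tree $T$; otherwise, a minmax-regret broadcast center is confined to $V(\OB{z,\kappa})\cup\setof{z}$, which is exactly the subtree we recurse on. One must be a little careful that the lemma is applied to $T$ and not merely to $T'$ — but since each $T'$ is an induced subtree obtained by the lemma's own prescription, an induction on the iteration number shows the invariant ``a minmax-regret broadcast center of $T$ lies in $V(T')$'' is maintained. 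When $|V(T')| \le 2$ we fall out of the loop and finish by brute force, evaluating ${\textit max\_r}(v)$ for the (at most two) remaining vertices; this costs $O(n + Q(n))$.

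The second, and harder, half is the running-time bound, which rests on two sub-claims. The first is that Algorithm~\ref{algorithm:worst-case} correctly computes $\ddot{s}(x)$, i.e.\ that the finite set $\ddot{S}(x) = \bigcup_{1\le\ell\le n}\Psi_\ell(x)$ always contains a worst-case scenario of $T$ with respect to $x$. The intuition is already sketched in the text: letting $\hat\kappa$ be a prime broadcast center under some worst-case scenario $\ddot{s}(x)$ and taking $v_i = \hat\kappa$, Lemma~\ref{lemma:Direct-to-center} gives ${\textit b\_time}^{\ddot{s}(x)}(x,T) - {\textit b\_time}^{\ddot{s}(x)}(v_i,\BB{v_i,x}) = \dis_{x,v_i}\cdot\setuptime + \tilde{w}^{\ddot{s}(x)}_{x,v_i}$, which is maximized (for fixed $v_i$) by pushing the path weights and the weights in $\BB{v_i,x}$ up to their upper bounds — this is precisely $\alpha_{x,v_i}$. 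Pushing the remaining weights down to their lower bounds cannot decrease ${\textit b\_time}(x,T)$ along the $x$-to-$v_i$ route but does decrease ${\textit b\_time}^{s}(v_i,T)$; the delicate point is that $v_i$ should \emph{remain} a broadcast center after we lower the weights, and that is exactly why the $\beta^j_{x,v_i}$ scenarios lower only the branches $u_{i,k}$ with $k > j$, in the order given by Lemma~\ref{lemma:optimal-sequence}: lowering a trailing block of branches keeps the broadcast sequence at $v_i$ optimal (the nonincreasing order is preserved) while reducing ${\textit b\_time}^{s}(v_i,T)$ as much as possible without disturbing the ``$v_i$ is the center'' structure. One then argues $r^{\beta^j_{x,v_i}}_{x,v_i} \ge r^{\ddot{s}(x)}_{x,v_i} = {\textit max\_r}(x)$ for the right choice of $j$, so $\beta^j_{x,v_i}$ is itself a worst-case scenario.

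The remaining sub-claim is the per-iteration time cost $Q(n) = O(n\log\log n)$. The naive reading of Algorithm~\ref{algorithm:worst-case} gives $|\ddot{S}(x)| = O(n)$ scenarios, each requiring $O(n)$ time (via \cite{Su2016}) to evaluate $r^s_{x,y}$ over ${\textit B\_Ctr}^s$, for a total of $O(n^2)$. The improvement to $O(n\log\log n)$ must come from the preprocessing and data structures alluded to in the introduction: the scenarios $\alpha_{x,v_i}$ over all $i$ share almost all their edge weights, and the $\beta^j_{x,v_i}$ are obtained from $\alpha_{x,v_i}$ by monotone block updates, so rather than recomputing broadcast times from scratch one maintains them incrementally along a traversal of $T$, using a data structure supporting the requisite $\max$/predecessor queries in $O(\log\log n)$ amortized time per update. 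Establishing this is the main obstacle, and I would defer the detailed bookkeeping to Section~\ref{section:time_complexity}; here I would only state the bound $Q(n) = O(n\log\log n)$ and show how it plugs in. Finally, since the centroid choice guarantees $|V(T')|$ at least halves each iteration (the largest open branch has at most $\lfloor |V(T')|/2\rfloor$ vertices, so the recursed subtree has at most $\lfloor|V(T')|/2\rfloor + 1$ vertices), the loop runs $O(\log n)$ times; summing $O(n + Q(n)) = O(n\log\log n)$ over $O(\log n)$ iterations — and noting the centroid and ${\textit B\_Ctr}^s$ computations are $O(n)$ each by \cite{Kang75,Goldman71,Su2016} — yields the claimed $O(n\log n\log\log n)$ total, completing the proof.
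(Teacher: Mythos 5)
Your outline matches the paper's structure at every macroscopic level: prune-and-search on a centroid with Lemma~\ref{lemma:Contain-optimal-solution} as the pruning guarantee, the $O(\log n)$ depth from the $\lfloor n/2\rfloor$ centroid bound, the finite family $\ddot{S}(x)=\bigcup_\ell\Psi_\ell(x)$ built from $\alpha_{x,v_i}$ by lowering trailing branches, and the reduction of $Q(n)$ to $O(n\log\log n)$ via preprocessing plus a fast predecessor/successor structure. So this is essentially the paper's own route; the issue is not the route but the amount of it that you have actually walked.

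The genuine gap is in the middle paragraph, where you argue that the lowering step is benign. You write that ``lowering a trailing block of branches keeps the broadcast sequence at $v_i$ optimal (the nonincreasing order is preserved),'' but this is not true as stated and is exactly the difficulty the paper spends Section~\ref{section:correctness} resolving. After the weights in $\bigcup_{j<k\le h_i}(\setof{(v_i,u_{i,k})}+E(\BB{u_{i,k},v_i}))$ drop to their lower bounds, the internal order of the lowered branches can change, and — more importantly — the index $\tau_j$ of the critical branch realizing ${\textit b\_time}(v_i,\BB{v_i,x})$ can migrate, so it is not a priori clear which $\beta^j_{x,v_i}$ is a worst-case scenario or even that any of them is. The paper handles this with several pieces you do not mention: the notion of a \emph{prime} broadcast center (Lemma~\ref{lemma:prime-must-exist}, Lemma~\ref{lemma:prime-is-the-center-of-star}) rather than an arbitrary element of ${\textit B\_Ctr}^s$; Facts~\ref{fact:Tx-is-base} and \ref{fact:Tx-is-base-with-y-prime}, which push a worst-case scenario into the $\alpha_{x,y}$ form on $\P_{x,y}\cup E(\OB{y,x})$ while \emph{keeping} $y$ a prime center (this itself needs an iterated argument along the path); the intermediate $\gamma^j_{x,y}$ scenarios and the cyclically shifted orderings $Q_j$ with the indices $\tau_j$ and the stopping index $t^*$; and Lemmas~\ref{lemma:critical-always-front}--\ref{lemma:worst-case-candidate}, which prove the monotonicity of $w^{\gamma^j}_{y,q_{j,\tau_j}}+{\textit b\_time}^{\gamma^j}(\cdot)$ and of ${\textit b\_time}^{\gamma^j}(y,T)-{\textit b\_time}^{\gamma^j}(y,\BB{y,x})$ that make the induction in Fact~\ref{fact:critical-before} go through. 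Without identifying at least the prime-center concept and the reason a \emph{specific} $t^*$ works (rather than ``the right choice of $j$''), the correctness of Algorithm~\ref{algorithm:worst-case} is asserted rather than proved. The deferral of the $O(n\log\log n)$ bucket/van~Emde~Boas bookkeeping to Section~\ref{section:time_complexity} is reasonable for a sketch, but Lemma~\ref{lemma:transform-problem-to-broadcasting-time} and Facts~\ref{fact:preprocessing-broadcast-time}--\ref{fact:preprocessing-increasing-order} are also load-bearing and would need to be named to make the time bound more than a claim.
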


\section{Correctness}\label{section:correctness}
Clearly, with Lemma~\ref{lemma:Contain-optimal-solution},
a minmax-regret broadcast center $\kappa^*$ of $T$ can be found by Algorithm~\ref{algorithm:mrbc}, provided $\ddot{s}(x)$ can be determined for any $x \in V(T)$. So, it suffices to determine $\ddot{s}(x)$, i.e., to verify Algorithm~\ref{algorithm:worst-case}, in order to prove the correctness of Algorithm~\ref{algorithm:mrbc}. Recall that we have $\Psi_n(x) = \setof{c}$, where $c \in C$. If $x$ is a prime broadcast center of $T$ under $\ddot{s}(x)$, then
we have ${\textit max\_r}(x) = 0$.
Since $r^c_{x,y}
\ge 0 = {\textit max\_r}(x)$ for any $y \in {\textit B\_Ctr}^c$, we have $c = \ddot{s}(x) \in \Psi_n(x)$.
In the rest of this section, we assume that $x$ is not a prime broadcast center of $T$ under
any worst-case scenario of $T$ with respect to $x$ (which implies $|V(T)| \ge 3$). Besides, unless specified particularly, whenever a worst-case scenario $c$ is mentioned, it means that $c$ is a worst-case scenario of $T$ with respect to $x$.
%


In order to identify a worst-case scenario in some $\Psi_\ell(x)$,
where $1 \le \ell < n$, we perform a series of scenario transformations, starting with a worst-case scenario (it does exist, although its edge weights are unknown), denoted by $\ddot{s}(x)$. First, we transform $\ddot{s}(x)$ into another worst-case scenario by specifying some edge weights. This transformation can be realized by the following fact 
whose proof will
be presented in Section~\ref{subsection:fact1}.

\vspace{-5pt}
\begin{fact} \label{fact:Tx-is-base}
Suppose $x, y \in V(T)$ and $y$ is a
prime broadcast center of $T$ under $\ddot{s}(x)$.
Let $s$ be a scenario of $T$ that satisfies $w^{s}_{a, b} =
w^{\alpha_{x, y}}_{a, b}$ if $(a, b) \in \P_{x, y} \cup
E(\OB{y,x})$, and $w^{s}_{a, b} = w^{\ddot{s}(x)}_{a, b}$
else.
Then, $s$ is a
worst-case scenario of $T$ with respect to $x$. Besides, we have $y
\in {\textit B\_Ctr}^{s}$.
\end{fact}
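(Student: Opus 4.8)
The plan is to compare the scenario $s$ with $\ddot{s}(x)$ and to prove the single numerical fact that $r^{s}_{x,y}={\textit max\_r}(x)$; both assertions of the fact follow immediately from this. Indeed, a scenario attaining the value ${\textit max\_r}(x)$ is by definition a worst-case scenario of $T$ with respect to $x$, and the witnessing vertex, here $y$, must lie in ${\textit B\_Ctr}^{s}$ --- otherwise replacing it by a genuine broadcast center $\kappa\in{\textit B\_Ctr}^{s}$ would give $r^{s}_{x,\kappa}>{\textit max\_r}(x)$, contradicting the definition of ${\textit max\_r}(x)$. As the first step I would anchor the target value: since $\ddot{s}(x)$ is a worst-case scenario there is $y'\in{\textit B\_Ctr}^{\ddot{s}(x)}$ with $r^{\ddot{s}(x)}_{x,y'}={\textit max\_r}(x)$; a prime broadcast center belongs to ${\textit B\_Ctr}^{\ddot{s}(x)}$, hence ${\textit b\_time}^{\ddot{s}(x)}(y,T)={\textit b\_time}^{\ddot{s}(x)}(y',T)$ and therefore $r^{\ddot{s}(x)}_{x,y}={\textit max\_r}(x)$. (We may assume $x\neq y$; if $x=y$ then $\P_{x,y}\cup E(\OB{y,x})=\emptyset$, so $s=\ddot{s}(x)$ and there is nothing to prove.)

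Next I would determine exactly which edge weights change when we pass from $\ddot{s}(x)$ to $s$. The edge set of $T$ splits into $E(\OB{y,x})$, $E(\BB{y,x})$, and the unique edge of $\P_{x,y}$ joining $\OB{y,x}$ to $y$. Unwinding the definition of $\alpha_{x,y}$, one sees that $s$ coincides with $\ddot{s}(x)$ on all of $E(\BB{y,x})$, equals $w^{+}$ on every edge of $\P_{x,y}$, and equals $w^{-}$ on the remaining (off-path) edges of $\OB{y,x}$. Two consequences will be used below: first, $\tilde{w}^{s}_{x,y}=\sum_{(a,b)\in\P_{x,y}}w^{+}_{a,b}=\tilde{w}^{\ddot{s}(x)}_{x,y}+\Delta$, where $\Delta:=\sum_{(a,b)\in\P_{x,y}}\bigl(w^{+}_{a,b}-w^{\ddot{s}(x)}_{a,b}\bigr)\ge 0$; second, ${\textit b\_time}^{s}(y,\BB{y,x})={\textit b\_time}^{\ddot{s}(x)}(y,\BB{y,x})$, because $\BB{y,x}$ uses only unchanged edges.

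The argument then has two halves. For a lower bound on ${\textit b\_time}^{s}(x,T)$: since $x\neq y$, Lemma~\ref{lemma:general-x-y} gives ${\textit b\_time}^{s}(x,T)\ge \dis_{x,y}\cdot\setuptime+\tilde{w}^{s}_{x,y}+{\textit b\_time}^{s}(y,\BB{y,x})$; substituting the two consequences above and then Lemma~\ref{lemma:Direct-to-center} for $\ddot{s}(x)$ (valid because $y$ is a prime broadcast center under $\ddot{s}(x)$) yields ${\textit b\_time}^{s}(x,T)\ge{\textit b\_time}^{\ddot{s}(x)}(x,T)+\Delta$. For an upper bound on ${\textit b\_time}^{s}(y,T)$, I would factor $\ddot{s}(x)\to s$ through the intermediate scenario $s''$ that agrees with $\ddot{s}(x)$ except that every off-path edge of $\OB{y,x}$ is lowered to $w^{-}$. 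Since $s''$ is pointwise at most $\ddot{s}(x)$, evaluating an optimal broadcast scheme of $y$ for $\ddot{s}(x)$ under $s''$ shows ${\textit b\_time}^{s''}(y,T)\le{\textit b\_time}^{\ddot{s}(x)}(y,T)$ (every arrival time can only decrease). Passing from $s''$ to $s$ raises the edges of $\P_{x,y}$ by a total of $\Delta$ and changes nothing else; since in a fixed scheme the arrival time at any vertex is a sum of $\setuptime$-terms and of the weights along the source-to-vertex path, evaluating an optimal scheme for $s''$ under $s$ shows ${\textit b\_time}^{s}(y,T)\le{\textit b\_time}^{s''}(y,T)+\Delta\le{\textit b\_time}^{\ddot{s}(x)}(y,T)+\Delta$. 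Combining the two halves, $r^{s}_{x,y}={\textit b\_time}^{s}(x,T)-{\textit b\_time}^{s}(y,T)\ge{\textit b\_time}^{\ddot{s}(x)}(x,T)-{\textit b\_time}^{\ddot{s}(x)}(y,T)=r^{\ddot{s}(x)}_{x,y}={\textit max\_r}(x)$, while $r^{s}_{x,y}\le{\textit max\_r}(x)$ by definition; hence $r^{s}_{x,y}={\textit max\_r}(x)$, completing the proof as described in the first paragraph.

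The technical heart is the upper bound on ${\textit b\_time}^{s}(y,T)$: the monotonicity/Lipschitz estimate for broadcast time must be set up carefully, and, crucially, the entire increase of ${\textit b\_time}$ on the $y$-side must be charged to the raised path edges alone (the simultaneous lowering of the off-path edges of $\OB{y,x}$ is only used one-sidedly, to guarantee it does not \emph{hurt}), so that this increase is exactly the same $\Delta$ that appears in the lower bound on the $x$-side and the two cancel. The other place demanding care is the bookkeeping of the second paragraph --- the three-way split of $E(T)$ and the observation that $\BB{y,x}$ is entirely insulated from the edge changes.
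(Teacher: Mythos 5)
Your proof is correct and takes essentially the same approach as the paper: both establish the key estimate ${\textit b\_time}^{s}(y,T)\le{\textit b\_time}^{\ddot{s}(x)}(y,T)+\tilde{w}^{s}_{x,y}-\tilde{w}^{\ddot{s}(x)}_{x,y}$ by factoring the scenario change through an intermediate scenario (the paper raises the $\P_{x,y}$ edges first, you lower the off-path edges of $\OB{y,x}$ first) and applying a monotonicity and a Lipschitz estimate, then combine this with Lemma~\ref{lemma:general-x-y}, Lemma~\ref{lemma:Direct-to-center}, and ${\textit b\_time}^{s}(y,\BB{y,x})={\textit b\_time}^{\ddot{s}(x)}(y,\BB{y,x})$ to conclude $r^{s}_{x,y}\ge r^{\ddot{s}(x)}_{x,y}$. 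The only cosmetic differences are that the paper proves the upper bound term-by-term over the neighbors of $y$ via an optimal broadcast sequence while you apply the two estimates globally to the broadcast time, and that you spell out the final step (that $r^{s}_{x,y}\ge{\textit max\_r}(x)$ forces equality, hence $s$ is worst-case and $y\in{\textit B\_Ctr}^{s}$) which the paper leaves implicit.
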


\vspace{-5pt}\vspace{-5pt}
According to Fact~\ref{fact:Tx-is-base}, $\ddot{s}(x)$ is transformed into $s$, which is also a worst-case scenario. The weights of edges $(a, b) \in \P_{x, y} \cup
E(\OB{y,x})$ are specified under $s$. However, $y$ is not necessarily a prime broadcast center under $s$. The following fact finds out a worst-case scenario and an associated prime broadcast center.

%
\vspace{-5pt}
\begin{fact} \label{fact:Tx-is-base-with-y-prime}
Suppose $x \in V(T)$. There exist
$\ddot{s}(x)$ and $y \in V(T)$ satisfying
$w^{\ddot{s}(x)}_{a,b} = w^{\alpha_{x, y}}_{a,b}$ for each
$(a, b) \in \P_{x, y} \cup E(\OB{y, x})$. Besides, $y$ is a
prime broadcast center of $T$ under $\ddot{s}(x)$.
\end{fact}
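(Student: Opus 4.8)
The plan is to obtain Fact~\ref{fact:Tx-is-base-with-y-prime} from Fact~\ref{fact:Tx-is-base} by a short iteration that ``drags'' a prime broadcast center one edge at a time until the edge-weight pattern forced by Fact~\ref{fact:Tx-is-base} becomes consistent with primality, using the distance $\dis_{x,\cdot}$ as the progress measure. Throughout, call $(s,y)$ a \emph{good pair} if $s$ is a worst-case scenario of $T$ with respect to $x$ and $y$ is a prime broadcast center of $T$ under $s$.

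First I would produce an initial good pair: a worst-case scenario $s_0$ exists, and by Lemma~\ref{lemma:prime-must-exist} it has a prime broadcast center $\hat{\kappa}_0$, which lies in ${\textit B\_Ctr}^{s_0}$ and differs from $x$ (recall that, in this section, $x$ is prime under no worst-case scenario). Given any good pair $(s,y)$, I would apply Fact~\ref{fact:Tx-is-base} to $x,y,s$; it returns a scenario $s'$ that is again a worst-case scenario, has $y \in {\textit B\_Ctr}^{s'}$, and satisfies $w^{s'}_{a,b} = w^{\alpha_{x,y}}_{a,b}$ for all $(a,b) \in \P_{x,y} \cup E(\OB{y,x})$. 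If $y$ is still a prime broadcast center under $s'$, then $(s',y)$ is exactly the conclusion of Fact~\ref{fact:Tx-is-base-with-y-prime} and I stop. Otherwise, since $y$ is not prime under $s'$, Lemma~\ref{lemma:prime-must-exist} gives a prime broadcast center $\hat{\kappa} \neq y$ under $s'$, and by Lemmas~\ref{lemma:centers-as-star} and~\ref{lemma:prime-is-the-center-of-star} the vertex $\hat{\kappa}$ is a center of the star induced by ${\textit B\_Ctr}^{s'}$; since $y \neq \hat{\kappa}$ also lies in that star, $(y,\hat{\kappa}) \in E(T)$.

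Now let $p$ be the neighbor of $y$ on the $y$-to-$x$ path, and split on where $\hat{\kappa}$ sits. If $\hat{\kappa}=p$ (a step toward $x$), then $\P_{x,\hat{\kappa}} \cup E(\OB{\hat{\kappa},x}) \subseteq \P_{x,y} \cup E(\OB{y,x})$ and, on this smaller edge set, the base scenarios $\alpha_{x,\hat{\kappa}}$ and $\alpha_{x,y}$ coincide (both put $w^+$ on the $x$-to-$\hat{\kappa}$ path and $w^-$ on the remaining $x$-side edges), so $w^{s'}_{a,b} = w^{\alpha_{x,\hat{\kappa}}}_{a,b}$ holds for every $(a,b) \in \P_{x,\hat{\kappa}} \cup E(\OB{\hat{\kappa},x})$, and $(s',\hat{\kappa})$ proves Fact~\ref{fact:Tx-is-base-with-y-prime}. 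Otherwise $\hat{\kappa}$ is a neighbor of $y$ on the far side from $x$, so $\dis_{x,\hat{\kappa}} = \dis_{x,y}+1$; then $(s',\hat{\kappa})$ is again a good pair and I repeat the round with it. Since $\dis_{x,y}$ strictly increases every time a round repeats and is bounded by $n-1$, the iteration halts after finitely many rounds, and it can halt only by exhibiting a worst-case scenario $\ddot{s}(x)$ and a vertex $y$ that is a prime broadcast center under $\ddot{s}(x)$ with $w^{\ddot{s}(x)}_{a,b} = w^{\alpha_{x,y}}_{a,b}$ on $\P_{x,y} \cup E(\OB{y,x})$, which is the statement.

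The hard part will be the two structural ingredients of a single round: (1) that a failure of primality of $y$ under $s'$ always points at a \emph{tree-neighbor} of $y$ (so the drag is by one edge and is monotone in $\dis_{x,\cdot}$), which needs Lemmas~\ref{lemma:centers-as-star} and~\ref{lemma:prime-is-the-center-of-star} together with $y \in {\textit B\_Ctr}^{s'}$; and (2) the bookkeeping that a step toward $x$ costs nothing, since the pinning already imposed for $y$ restricts to exactly the pinning demanded for $p$, so that only steps away from $x$ consume the bounded budget $\dis_{x,y} \le n-1$ and the process must terminate. Fact~\ref{fact:Tx-is-base} itself is taken as given here.
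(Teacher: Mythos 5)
Your proposal is correct and follows essentially the same iterative argument as the paper's proof in Section~\ref{subsection:fact2}, with the same key ingredients: Fact~\ref{fact:Tx-is-base} re-pins the weights, Lemma~\ref{lemma:prime-is-the-center-of-star} forces any new prime center to be a tree-neighbor of the old one, a move toward $x$ terminates because $\P_{x,\hat{\kappa}}\cup E(\OB{\hat{\kappa},x})\subset\P_{x,y}\cup E(\OB{y,x})$ and the base scenarios $\alpha_{x,\hat{\kappa}}$ and $\alpha_{x,y}$ agree there, and a move away from $x$ increases $\dis_{x,\cdot}$ by one. Your explicit progress measure $\dis_{x,\cdot}\le n-1$ simply makes the paper's closing ``eventually'' precise.
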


\vspace{-5pt}\vspace{-5pt}
Fact~\ref{fact:Tx-is-base-with-y-prime} will be proved in Section~\ref{subsection:fact2}, where a transformation from $s$ (the worst-case scenario obtained by Fact~\ref{fact:Tx-is-base}) to another worst-case scenario is shown and an associated prime broadcast center is found.
To simplify notations, we use $\ddot{s}(x)$ and $y$ to represent the resulting worst-case scenario and prime broadcast center, respectively. They may differ from those used in Fact~\ref{fact:Tx-is-base}.

Finally, we transform $\ddot{s}(x)$ into another worst-case scenario
by specifying the weights of the other edges, i.e., all edges $(a, b) \in E(\BB{y, x})$
($=E(T)-\{P_{x, y} \cup E(\OB{y,x})\}$). The resulting scenario is identical with
some $\beta^j_{x, y}$ ($\in \Psi_\ell(x)$, as $y = v_\ell$), and thus Algorithm~\ref{algorithm:worst-case} is verified. The transformation, which depends on two cases, can be carried out by the following two facts, where the notations
$\mu$, $q_{0, \tau_0}$, $h$, and $t^*$ are introduced in subsequent paragraphs
($h = h_\ell$ and $\beta_{x,y}^h, \beta_{x,y}^{t^*} \in \Psi_\ell(x)$, as
$y = v_\ell$). Their proofs, which will be presented in Section~\ref{subsection:fact3} and Section~\ref{subsection:fact4},
show two transformations from $\ddot{s}(x)$ to $\beta^{h}_{x, y}$ and $\beta^{t^*}_{x, y}$, respectively.

\vspace{-5pt}
\begin{fact} \label{fact:critical-after}
If $w^{\ddot{s}(x)}_{y, \mu} +
{\textit b\_time}^{\ddot{s}(x)}(\mu, \OB{y, x}) \ge
w^{\ddot{s}(x)}_{y, q_{0, \tau_0}} + {\textit
b\_time}^{\ddot{s}(x)}(q_{0, \tau_0}, \BB{q_{0, \tau_0},
y})$, then $\beta^{h}_{x, y}$ is a
worst-case scenario of $T$ with respect to $x$.
\end{fact}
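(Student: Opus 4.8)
My plan is to reduce Fact~\ref{fact:critical-after} to the single identity $r^{\beta_{x,y}^{h}}_{x,y}={\textit max\_r}(x)$. Once that is in hand, the remark following the definition of a worst-case scenario gives immediately $y\in{\textit B\_Ctr}^{\beta_{x,y}^{h}}$, so that $\beta_{x,y}^{h}$ is a worst-case scenario of $T$ with respect to $x$, which is the assertion. Since $r^{\beta_{x,y}^{h}}_{x,y}\le{\textit max\_r}(x)$ holds trivially, it suffices to prove $r^{\beta_{x,y}^{h}}_{x,y}\ge{\textit max\_r}(x)$. I use freely that $\beta_{x,y}^{h}$ is just the base scenario $\alpha_{x,y}$ (for $j=h=h_{\ell}$ no branch of $y$ inside $\BB{y,x}$ is lowered), that $\ddot{s}(x)$ is a worst-case scenario of $T$ with respect to $x$ having $y$ as a prime broadcast center and satisfying $w^{\ddot{s}(x)}_{a,b}=w^{\alpha_{x,y}}_{a,b}$ for every $(a,b)\in\P_{x,y}\cup E(\OB{y,x})$ (Fact~\ref{fact:Tx-is-base-with-y-prime} and the conventions adopted after it), and that $\mu$ is the neighbour of $y$ lying in $\OB{y,x}$, so that $\BB{\mu,y}=\OB{y,x}$.

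\emph{Step 1: evaluating ${\textit max\_r}(x)$.} Since $y\in{\textit B\_Ctr}^{\ddot{s}(x)}$ and $\ddot{s}(x)$ is a worst-case scenario with respect to $x$, ${\textit max\_r}(x)=r^{\ddot{s}(x)}_{x,y}$. By Lemma~\ref{lemma:Direct-to-center} applied with the prime broadcast center $y$, and since every edge of $\P_{x,y}$ has weight $w^{+}_{a,b}$ under both $\ddot{s}(x)$ and $\alpha_{x,y}$,
\begin{equation}
{\textit max\_r}(x)=\dis_{x,y}\cdot\setuptime+\tilde{w}^{\alpha_{x,y}}_{x,y}+{\textit b\_time}^{\ddot{s}(x)}(y,\BB{y,x})-{\textit b\_time}^{\ddot{s}(x)}(y,T).
\label{equation:fact7-maxr}
\end{equation}
Put $B_{\mu}=w^{\ddot{s}(x)}_{y,\mu}+{\textit b\_time}^{\ddot{s}(x)}(\mu,\OB{y,x})$. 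Since $q_{0,\tau_{0}}$ is (by its definition) a neighbour of $y$ in $\BB{y,x}$ of maximum contribution $w^{\ddot{s}(x)}_{y,\cdot}+{\textit b\_time}^{\ddot{s}(x)}(\cdot,\BB{\cdot,y})$, the hypothesis of Fact~\ref{fact:critical-after} says that $B_{\mu}$ dominates the contribution of every neighbour of $y$ in $\BB{y,x}$. Ordering $N_{T}(y)$ by nonincreasing contribution therefore places $\mu$ first, and Lemma~\ref{lemma:optimal-sequence} gives
\[
{\textit b\_time}^{\ddot{s}(x)}(y,T)=\max\bigl\{\,\setuptime+B_{\mu},\ \setuptime+{\textit b\_time}^{\ddot{s}(x)}(y,\BB{y,x})\,\bigr\},
\]
the second entry because, with $\mu$ in position $1$, the neighbours in $\BB{y,x}$ occupy positions $2,\dots,h_{\ell}+1$ in their own optimal order, each incurring one extra $\setuptime$.

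\emph{Step 2: lower-bounding $r^{\beta_{x,y}^{h}}_{x,y}$.} Lemma~\ref{lemma:general-x-y} gives ${\textit b\_time}^{\alpha_{x,y}}(x,T)\ge\dis_{x,y}\cdot\setuptime+\tilde{w}^{\alpha_{x,y}}_{x,y}+{\textit b\_time}^{\alpha_{x,y}}(y,\BB{y,x})$. For the other term, the transmission sequence for $y$ over $T$ that serves $\mu$ first and then the neighbours in $\BB{y,x}$ in their $\alpha_{x,y}$-optimal order is legal, so ${\textit b\_time}^{\alpha_{x,y}}(y,T)\le\max\{\setuptime+B_{\mu},\ \setuptime+{\textit b\_time}^{\alpha_{x,y}}(y,\BB{y,x})\}$, where $B_{\mu}$ is unchanged because all of $\OB{y,x}=\BB{\mu,y}$ carries the same weights under $\alpha_{x,y}$ and $\ddot{s}(x)$. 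Finally, since $\beta_{x,y}^{h}=\alpha_{x,y}$ puts every edge of $\BB{y,x}$ at its upper bound $w^{+}_{a,b}$, monotonicity of broadcast time in edge weights gives ${\textit b\_time}^{\alpha_{x,y}}(y,\BB{y,x})\ge{\textit b\_time}^{\ddot{s}(x)}(y,\BB{y,x})$. Substituting all of this into $r^{\beta_{x,y}^{h}}_{x,y}={\textit b\_time}^{\alpha_{x,y}}(x,T)-{\textit b\_time}^{\alpha_{x,y}}(y,T)$ and invoking~(\ref{equation:fact7-maxr}), the inequality $r^{\beta_{x,y}^{h}}_{x,y}\ge{\textit max\_r}(x)$ reduces to $g\bigl({\textit b\_time}^{\alpha_{x,y}}(y,\BB{y,x})\bigr)\ge g\bigl({\textit b\_time}^{\ddot{s}(x)}(y,\BB{y,x})\bigr)$, where $g(t)=t-\max\{\setuptime+B_{\mu},\,\setuptime+t\}$; as $g$ is nondecreasing in $t$, this is immediate.

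I expect the only genuine obstacle to be reaching that clean reduction. Its two ingredients are the $\mu$-first evaluation of ${\textit b\_time}^{\ddot{s}(x)}(y,T)$, which is precisely what the hypothesis of Fact~\ref{fact:critical-after} is designed to supply, and the matching \emph{upper} bound on ${\textit b\_time}^{\alpha_{x,y}}(y,T)$, which costs nothing beyond using a possibly non-optimal transmission sequence together with the invariance of $B_{\mu}$ and of all of $\OB{y,x}$ under the passage from $\ddot{s}(x)$ to $\alpha_{x,y}$. The subtlety worth flagging is that ${\textit b\_time}^{\alpha_{x,y}}(y,\BB{y,x})$ is in general strictly larger than ${\textit b\_time}^{\ddot{s}(x)}(y,\BB{y,x})$, so one really is claiming that raising all $\BB{y,x}$-edge weights to $w^{+}_{a,b}$ does not decrease the relative regret $r_{x,y}$; the monotonicity of $g$ is exactly the statement that the gain in the additive term ${\textit b\_time}(y,\BB{y,x})$ is never swallowed by the accompanying rise in the subtracted term ${\textit b\_time}(y,T)$. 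Everything else is bookkeeping with Lemmas~\ref{lemma:Direct-to-center},~\ref{lemma:general-x-y}, and~\ref{lemma:optimal-sequence}.
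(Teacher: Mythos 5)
Your Step~1 rests on a misreading of the definition of $q_{0,\tau_0}$. By the paper's definition, $q_{0,\tau_0}$ is the \emph{first} vertex in $Q_0$ for which $\tau_0\cdot\setuptime + w^{\ddot{s}(x)}_{y,q_{0,\tau_0}}+{\textit b\_time}^{\ddot{s}(x)}(q_{0,\tau_0},\BB{q_{0,\tau_0},y})$ attains ${\textit b\_time}^{\ddot{s}(x)}(y,\BB{y,x})$; it is \emph{not} the neighbour of maximum contribution (that is $q_{0,1}$, and $\tau_0$ may exceed $1$). Consequently, the hypothesis $B_\mu \ge w^{\ddot{s}(x)}_{y,q_{0,\tau_0}}+{\textit b\_time}^{\ddot{s}(x)}(q_{0,\tau_0},\BB{q_{0,\tau_0},y})$ does \emph{not} imply that $\mu$ dominates all of $N_{\BB{y,x}}(y)$, and $\mu$ need not be placed first in an optimal order for $N_T(y)$. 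The paper inserts $\mu$ at a general position $p\le\tau_0$ and derives ${\textit b\_time}^{\ddot{s}(x)}(y,T)=\max\{p\cdot\setuptime+B_\mu,\ \setuptime+{\textit b\_time}^{\ddot{s}(x)}(y,\BB{y,x})\}$, not the $p=1$ formula you assert. So your evaluation of ${\textit max\_r}(x)$ is reached from a false premise.

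The saving grace is that your formula is nonetheless numerically correct, and the repair is short: if $p>1$ then $w^{\ddot{s}(x)}_{y,q_{0,p-1}}+{\textit b\_time}^{\ddot{s}(x)}(q_{0,p-1},\BB{q_{0,p-1},y})>B_\mu$, and because $p-1<\tau_0$, the minimality in the definition of $\tau_0$ gives $(p-1)\cdot\setuptime + w^{\ddot{s}(x)}_{y,q_{0,p-1}}+{\textit b\_time}^{\ddot{s}(x)}(q_{0,p-1},\BB{q_{0,p-1},y}) < {\textit b\_time}^{\ddot{s}(x)}(y,\BB{y,x})$, hence $p\cdot\setuptime + B_\mu < \setuptime + {\textit b\_time}^{\ddot{s}(x)}(y,\BB{y,x})$; since also $\setuptime + B_\mu \le p\cdot\setuptime + B_\mu$, the two maxima $\max\{p\cdot\setuptime+B_\mu,\ \setuptime+{\textit b\_time}^{\ddot{s}(x)}(y,\BB{y,x})\}$ and $\max\{\setuptime+B_\mu,\ \setuptime+{\textit b\_time}^{\ddot{s}(x)}(y,\BB{y,x})\}$ agree, and they trivially agree when $p=1$. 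Once this observation is supplied, Step~2 and the monotonicity of $g$ do finish the proof, and the resulting route---bounding ${\textit b\_time}^{\alpha_{x,y}}(y,T)$ from above by a $\mu$-first schedule and pushing the inequality through the nondecreasing $g$---is cleaner than the paper's, which proves the same bound by an explicit two-case split on whether $p\cdot\setuptime + w^{\alpha_{x,y}}_{y,\mu}+{\textit b\_time}^{\alpha_{x,y}}(\mu,\OB{y,x})$ exceeds $\setuptime+{\textit b\_time}^{\alpha_{x,y}}(y,\BB{y,x})$.
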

\vspace{-5pt}
\begin{fact}
\label{fact:critical-before}If $w^{\ddot{s}(x)}_{y, \mu} +
{\textit b\_time}^{\ddot{s}(x)}(\mu, \OB{y, x}) <
w^{\ddot{s}(x)}_{y, q_{0, \tau_0}} + {\textit
b\_time}^{\ddot{s}(x)}(q_{0, \tau_0}, \BB{q_{0, \tau_0},
y})$, then
$\beta^{t^*}_{x, y}$ is a worst-case
scenario of $T$ with respect to $x$.
\end{fact}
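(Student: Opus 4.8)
The plan is to establish the single inequality $r^{\beta^{t^*}_{x,y}}_{x,y} \ge {\textit max\_r}(x)$; this already suffices. Indeed $r^{\beta^{t^*}_{x,y}}_{x,y} \le {\textit max\_r}(x)$ holds by the very definition of ${\textit max\_r}(x)$, so the inequality forces $r^{\beta^{t^*}_{x,y}}_{x,y} = {\textit max\_r}(x)$; and once ${\textit max\_r}(x) = r^{s}_{x,y}$ for the scenario $s=\beta^{t^*}_{x,y}$ and the vertex $y\in V(T)$, the remark recorded in Section~\ref{section:preliminaries} yields $y\in{\textit B\_Ctr}^{s}$ and hence that $s$ is a worst-case scenario of $T$ with respect to $x$ (if $y\notin{\textit B\_Ctr}^{s}$, a true center $z$ of $s$ would give $r^{s}_{x,z} > r^{s}_{x,y} = {\textit max\_r}(x)$, impossible). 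Note in particular that this makes a separate verification that $y$ remains a broadcast center under $\beta^{t^*}_{x,y}$ unnecessary.

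For the lower bound I would compare $\beta^{t^*}_{x,y}$ with the worst-case scenario $\ddot{s}(x)$ supplied by Fact~\ref{fact:Tx-is-base-with-y-prime}, under which $y$ is a prime broadcast center and $w^{\ddot{s}(x)}_{a,b}=w^{\alpha_{x,y}}_{a,b}$ on every edge of $\P_{x,y}\cup E(\OB{y,x})$. By construction $\beta^{t^*}_{x,y}$ differs from $\alpha_{x,y}$ only on edges of $E(\BB{y,x})$, so $\ddot{s}(x)$ and $\beta^{t^*}_{x,y}$ agree on $\P_{x,y}\cup E(\OB{y,x})$; in particular $\tilde{w}^{\beta^{t^*}_{x,y}}_{x,y}=\tilde{w}^{\ddot{s}(x)}_{x,y}$, and the number $m_0:=w_{y,\mu}+{\textit b\_time}(\mu,\OB{y,x})$, where $\mu$ is the neighbor of $y$ on the $x$-to-$y$ path, is the same under both scenarios. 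Since $y\in{\textit B\_Ctr}^{\ddot{s}(x)}$ (by definition of prime broadcast center) we have $r^{\ddot{s}(x)}_{x,y}={\textit max\_r}(x)$, and Lemma~\ref{lemma:Direct-to-center} gives
\[
{\textit max\_r}(x) = \dis_{x,y}\cdot\setuptime + \tilde{w}^{\ddot{s}(x)}_{x,y} + {\textit b\_time}^{\ddot{s}(x)}(y,\BB{y,x}) - {\textit b\_time}^{\ddot{s}(x)}(y,T),
\]
whereas Lemma~\ref{lemma:general-x-y} applied at $\beta^{t^*}_{x,y}$ gives
\[
r^{\beta^{t^*}_{x,y}}_{x,y} \ge \dis_{x,y}\cdot\setuptime + \tilde{w}^{\ddot{s}(x)}_{x,y} + {\textit b\_time}^{\beta^{t^*}_{x,y}}(y,\BB{y,x}) - {\textit b\_time}^{\beta^{t^*}_{x,y}}(y,T).
\]
Hence it is enough to prove $f(\beta^{t^*}_{x,y}) \ge f(\ddot{s}(x))$, where $f(s):={\textit b\_time}^{s}(y,\BB{y,x}) - {\textit b\_time}^{s}(y,T)$.

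I would then analyse $f$ through Lemma~\ref{lemma:optimal-sequence}. Both ${\textit b\_time}^{s}(y,\BB{y,x})$ and ${\textit b\_time}^{s}(y,T)$ depend on $s$ only through the subtree demands $g_k(s)=w^{s}_{y,u_{k}}+{\textit b\_time}^{s}(u_{k},\BB{u_{k},y})$, $k=1,\dots,h$ (and, for the latter, the pinned $m_0$); since the $h$ subtrees of $y$ inside $\BB{y,x}$ are edge-disjoint, each $g_k(s)$ can be moved independently within $[g_k^-,g_k^+]$, the endpoints coming from putting all edges of $\{(y,u_k)\}\cup E(\BB{u_k,y})$ at $w^-$ or $w^+$. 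With the labelling $g_1^+\ge\dots\ge g_h^+$ --- the same ordering used to define $u_{i,1},\dots,u_{i,h_i}$ --- the scenario $\beta^{t^*}_{x,y}$ realises the demand vector $(g_1^+,\dots,g_{t^*}^+,\,g_{t^*+1}^-,\dots,g_h^-)$. I would transform the demand vector of $\ddot{s}(x)$ into this one coordinate by coordinate (raising $g_k$ to $g_k^+$ for $k\le t^*$, lowering it to $g_k^-$ for $k>t^*$) and check $f$ never decreases. The hypothesis $m_0 < w^{\ddot{s}(x)}_{y,q_{0,\tau_0}}+{\textit b\_time}^{\ddot{s}(x)}(q_{0,\tau_0},\BB{q_{0,\tau_0},y})$ is exactly what drives this: it places $\mu$'s demand strictly below the demand of the critical neighbor $q_{0,\tau_0}$ (the one realising ${\textit b\_time}^{\ddot{s}(x)}(y,\BB{y,x})$ in round $\tau_0$), so by Lemma~\ref{lemma:optimal-sequence} $\mu$ is served strictly after round $\tau_0$ in the optimal $T$-schedule; hence rounds $1,\dots,\tau_0$ --- including the round attaining ${\textit b\_time}(y,\BB{y,x})$ --- coincide in the $\BB{y,x}$-schedule and the $T$-schedule, raising a demand that sits in rounds $\le\tau_0$ raises both broadcast times equally, and lowering a demand in rounds $>\tau_0$ lowers ${\textit b\_time}(y,T)$ by at least as much as ${\textit b\_time}(y,\BB{y,x})$ (the extra $\mu$-round is paid in $T$ but not in $\BB{y,x}$). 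The index $t^*$ is the break-even point between these two regimes.

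The delicate part, and the only genuine obstacle, is this last step: after each single-coordinate move one must track precisely which subtree is critical in ${\textit b\_time}(y,\BB{y,x})$ and which vertex is served in each round of the optimal $T$-schedule, handle the boundary rounds where the critical subtree or $\mu$'s position shifts, and confirm that $t^*$ as defined remains consistent with the reshuffling the moves cause. Everything else --- the reduction, the use of Lemmas~\ref{lemma:Direct-to-center} and~\ref{lemma:general-x-y}, and the fact that $\beta^{t^*}_{x,y}$ inherits the pinned edge weights of $\ddot{s}(x)$ on $\P_{x,y}\cup E(\OB{y,x})$ --- is routine.
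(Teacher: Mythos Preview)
Your proposal is correct and follows essentially the same route as the paper: reduce to showing $f(\beta^{t^*}_{x,y})\ge f(\ddot{s}(x))$ with $f(s)={\textit b\_time}^{s}(y,\BB{y,x})-{\textit b\_time}^{s}(y,T)$, then move the demand vector of $\ddot{s}(x)$ to that of $\beta^{t^*}_{x,y}$ one coordinate at a time. The paper names the intermediate scenarios $\gamma^{j}_{x,y}$ (raise $g_1,\dots,g_j$ to their maxima, leave the rest at their $\ddot{s}(x)$-values) and splits the work exactly where you flag ``the delicate part'': Lemma~\ref{lemma:critical-always-front} shows the critical demand at $y$ in $\BB{y,x}$ is nondecreasing along $\gamma^{0},\dots,\gamma^{t^*}$, which keeps $\mu$ behind the critical round throughout; Lemma~\ref{lemma:critical-always-front-2} turns this into the monotonicity of $-f$; Lemma~\ref{lemma:worst-case-candidate} records the resulting chain of worst-case scenarios; and the final paragraph compares $\gamma^{t^*}$ with $\beta^{t^*}$ directly, using that at step $t^*$ the critical vertex is $u_{t^*}$ itself, so lowering $g_{t^*+1},\dots,g_h$ leaves ${\textit b\_time}(y,\BB{y,x})$ \emph{unchanged} while it can only decrease ${\textit b\_time}(y,T)$. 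One small correction to your heuristic: it is not that raising a demand in rounds $\le\tau_0$ ``raises both broadcast times equally''---the optimal orderings shift---but rather that the paper's two lemmas together give the weaker (and sufficient) statement that $f$ does not decrease.
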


\vspace{-5pt}\vspace{-5pt}
We use $\mu$ to denote the neighbor of $y$ in $\OB{y, x}$, and 
$u_1, u_2, \ldots, u_{h}$
to denote the neighbors of $y$ in $\BB{y, x}$ such that $w^{\alpha_{x,
y}}_{y, u_k} + {\textit b\_time}^{\alpha_{x, y}}(u_k, \BB{u_k,
y})$ is nonincreasing as $k$ increases from $1$ to $h$.
%
%
%
Refer to Figure~\ref{figure:gamma_scenario}.
Let $\gamma_{x, y}^0, \gamma_{x, y}^1, \gamma_{x, y}^2, \ldots, \gamma_{x, y}^{h}$
denote $h+1$ scenarios of $T$ with respect to $x$ and $y$, where $\gamma_{x, y}^0 = \ddot{s}(x)$ ($\ddot{s}(x)$ was obtained by Fact~\ref{fact:Tx-is-base-with-y-prime}) and
each $\gamma_{x, y}^j$, $1 \le j \le h$, satisfies the following:

\vspace{-6pt}
\begin{itemize}
\setlength{\itemsep}{0pt}
\item $w^{\gamma_{x, y}^j}_{a,b} = w^{\alpha_{x, y}}_{a, b}$, if $(a, b) \in \bigcup_{1 \le k \le j}(\setof{(y, u_{k})} + E(\BB{u_{k}, y}))$;
\item $w^{\gamma_{x, y}^j}_{a,b} = w^{\ddot{s}(x)}_{a, b}$, else.
\end{itemize}
\vspace{-6pt}
Since the weights of edges in $\setof{(y, \mu)} \cup E(\OB{y, x})$, i.e.,
the subtree at the right-bottom of Figure~\ref{figure:gamma_scenario},
were specified by Fact~\ref{fact:Tx-is-base-with-y-prime},
only the weights of edges in $\bigcup_{j < k \le h}(\setof{(y, u_{k})} + E(\BB{u_{k}, y}))$
are not specified under $\gamma_{x,y}^j$, where
$\gamma_{x,y}^j$ and $\beta_{x,y}^j$
differ (refer to Figure~\ref{figure:gamma_scenario}).

\begin{figure}[h]
\begin{center}
\unitlength=1mm
\begin{picture}(104, 72)(-20,0)
\put(29, 30){\circle{2}}
\put(41, 20){\circle{2}}
\put(59, 20){\circle{2}}
\put(46, 33){\circle{2}}
\put(38, 44){\circle{2}}
\put(12, 33){\circle{2}}
\put(20, 44){\circle{2}}
\put(12, 27){\circle{2}}

\put(30,30){\line(1,-1){10}}
\put(30,30){\line(5,1){15}}
\put(29,31){\line(3,4){9}}
\put(28,30){\line(-5,1){15}}
\put(29,31){\line(-3,4){9}}
\put(28,30){\line(-5,-1){15}}

\put(42,20){\line(1,0){3}}
\put(47.5,20){$\ldots$}
\put(58,20){\line(-1,0){3}}
\put(60,20){\line(1,0){3}}
\put(65.5,20){$\ldots$}

\put(28, 26){$y$}
\put(39, 22){$\mu$}
\put(42, 35){$u_h$}
\put(11, 35){$u_2$}
\put(57, 22){$x$}
\put(11, 23){$u_1$}
\put(39, 41){$u_{j+1}$}
\put(16, 41){$u_{j}$}

\qbezier(41,21)(76, 33)(76,20)
\qbezier(41,19)(76, 7)(76,20)

\qbezier(46, 34)(73, 50)(78, 41)
\qbezier(46, 32)(83, 27)(78, 41)

\qbezier(38,45)(40, 77)(52,72)
\qbezier(39,44)(62, 66)(52,72)

\qbezier(12,34)(-15, 50)(-20,41)
\qbezier(12,32)(-25, 27)(-20,41)

\qbezier(20,45)(18, 77)(6,72)
\qbezier(19,44)(-4, 66)(6,72)

\qbezier(12,26)(-15, 10)(-20,19)
\qbezier(12,28)(-25, 33)(-20,19)

\put(66, 23){$\OB{y, x}$}
\put(65, 36){$\BB{u_h, y}$}
\put(-15, 36){$\BB{u_2, y}$}
\put(42, 64){$\BB{u_{j+1}, y}$}
\put(6, 64){$\BB{u_j, y}$}
\put(-15, 21){$\BB{u_1, y}$}

\put(-26,0){\dashbox(50, 77)[tl]{\shortstack[l]{\\ \\$\hspace{6pt} w^{\gamma^j_{x,y}}_{a,b} = w^{\alpha_{x,y}}_{a,b}$ \\ $\hspace{6.5pt} ~~~~~~= w^{\beta^j_{x,y}}_{a,b}$}}}
\put(34,0){\dashbox(55, 77)[tr]{\shortstack[r]{\\ \\ $\hspace{6pt} w^{\gamma^j_{x,y}}_{a,b} = w^{\ddot{s}(x)}_{a,b}~$ \\ $\hspace{6pt} w^{\beta^j_{x,y}}_{a,b} = w^{-}_{a,b}~~\hspace{1.5pt}$}}}
\put(34,0){\dashbox(55, 77)[br]{\shortstack[r]{$w^{\gamma^j_{x,y}}_{a,b} = w^{\ddot{s}(x)}_{a,b} = w^{\alpha_{x,y}}_{a,b} = w^{\beta^j_{x,y}}_{a,b}~$\\ \\ $ $}}}
\put(34,28.5){\dashbox(55, 0)}

\linethickness{1.5pt}
\qbezier[2](18,34)(19.25, 35.25)(20.5,36.5)
\qbezier[2](40,34)(38.75, 35.25)(37.5,36.5)

\end{picture}
\caption{Comparison between $\gamma^j_{x,y}$ and $\beta^j_{x,y}$.}
\label{figure:gamma_scenario}
\end{center}
\end{figure}
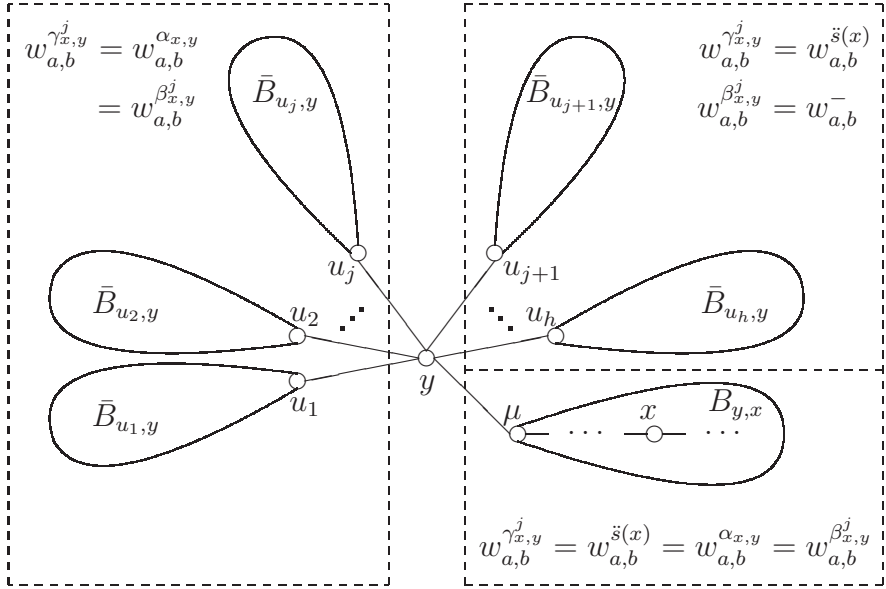

%

Let
$Q_0 = (q_{0,1}, q_{0,2}, \ldots, q_{0,h})$ be an arrangement of
$u_1, u_2, \ldots, u_h$ such that $w^{\gamma_{x, y}^0}_{y,
q_{0,k}} + {\textit b\_time}^{\gamma_{x, y}^0}(q_{0,k},
\BB{q_{0,k}, y})$ is nonincreasing as $k$ increases from $1$ to
$h$. According to Lemma~\ref{lemma:optimal-sequence},
$Q_0$ is an optimal sequence for $y$ to broadcast a message to
$N_{\BB{y, x}}(y)$ under $\gamma_{x, y}^0$. Moreover, for $1 \le j \le h$, let $Q_j  = (q_{j,1}, q_{j,2}, \ldots, q_{j,h})$ be the sequence obtained from $Q_{j-1} = (q_{j-1,1}, q_{j-1,2}, \ldots, q_{j-1,h})$ by cyclically shifting $q_{j-1,j}, q_{j-1,j+1}, \ldots, q_{j-1,\ell}$ one position toward the right, where $q_{j-1, \ell} = u_j$ is assumed and $j \le \ell$.
That is, we have $(q_{j,1}, \ldots, q_{j,j-1}) =
(q_{j-1,1},  \ldots, q_{j-1,j-1})$,
$(q_{j,j}, q_{j,j+1}, q_{j,j+2}, \ldots, q_{j,\ell}) =
(q_{j-1,\ell}, q_{j-1,j}, q_{j-1,j+1}, \ldots, q_{j-1,\ell-1})$, and
$(q_{j, \ell+1}, \ldots, q_{j,h}) = (q_{j-1,\ell+1}, \ldots, q_{j-1,h})$. It is not difficult to see that
the first $j$ vertices in
$Q_j$ are $u_1, u_2, \ldots, u_j$.

Also notice that $w^{\gamma_{x, y}^j}_{y,
q_{j,k}} + {\textit b\_time}^{\gamma_{x, y}^j}(q_{j,k},
\BB{q_{j,k}, y})$ is nonincreasing as $k$ increases from $1$ to $h$, as a consequence of $\gamma_{x,y}^j$ (refer to Figure~\ref{figure:gamma_scenario}). According to
Lemma~\ref{lemma:optimal-sequence}, $Q_j$ is an optimal sequence
for $y$ to broadcast a message to $N_{\BB{y, x}}(y)$ under
$\gamma_{x, y}^j$.
Let $q_{j, \tau_j}$ be the first vertex in $Q_j$ with
${\textit b\_time}^{\gamma_{x, y}^j}(y, \BB{y, x}) =
\tau_j \cdot \setuptime + w^{\gamma_{x, y}^j}_{y, q_{j, \tau_j}} + {\textit
b\_time}^{\gamma_{x, y}^j}(q_{j, \tau_j}, \BB{q_{j, \tau_j}, y})$, where the right-hand side is the broadcast time induced by
$\BB{q_{j, \tau_j}, y}$ ($q_{0, \tau_0}$ has the same definition as $q_{j, \tau_j}$).

Define $Q_{t^*}$ to be the
first sequence in $Q_1, Q_2, \ldots, Q_h$ that has $q_{t^*, \tau_{t^*}} =q_{t^*, t^*}$. The proof of Fact~\ref{fact:critical-before} shows that $\gamma_{x,y}^{t^*}$ is a worst-case scenario and consequently, $\beta^{t^*}_{x, y}$ is a worst-case scenario (recall that $\gamma_{x,y}^{t^*}$ and $\beta^{t^*}_{x, y}$ differ in the weights of edges in $\bigcup_{t^* < k \le h}(\setof{(y, u_{k})} + E(\BB{u_{k}, y}))$.
Below the existence of $Q_{t^*}$ is guaranteed, by
showing that
$Q_{\tau_h}$ has $q_{\tau_h, \tau_{\tau_h}} =q_{\tau_h, \tau_h}$ (hence, $t^* \le \tau_h$).


Since $Q_{\tau_h}$ is an optimal sequence
for $y$ to broadcast a message to $N_{\BB{y, x}}(y)$ under
$\gamma_{x, y}^{\tau_h}$ and
$u_{\tau_h} (=q_{\tau_h, \tau_h})$ is the $\tau_h$-th
vertex in $Q_{\tau_h}$, we have
${\textit b\_time}^{\gamma_{x, y}^{\tau_h}}(y, \BB{y, x}) \ge \tau_h \cdot \setuptime +
w^{\gamma_{x, y}^{\tau_h}}_{y, q_{\tau_h, \tau_h}} + {\textit
b\_time}^{\gamma_{x, y}^{\tau_h}}(q_{\tau_h,\tau_h},
\BB{q_{\tau_h,\tau_h}, y})$, i.e., the broadcast time induced by $\BB{q_{\tau_h,\tau_h}, y}$. Besides, we have
 $q_{\tau_h, \tau_h} = u_{\tau_h} = q_{h, \tau_h}$,
because $u_{\tau_h}$ is also the $\tau_h$-th vertex in $Q_h$.
Since $\gamma_{x,y}^{\tau_h}$ and
$\gamma_{x,y}^{h}$ have the same weights of those
edges in $\setof{(y, u_{\tau_h})} + E(\BB{u_{\tau_h}, y})$, we have
$w^{\gamma_{x, y}^{\tau_h}}_{y, u_{\tau_h}} + {\textit
b\_time}^{\gamma_{x, y}^{\tau_h}}(u_{\tau_h},
\BB{u_{\tau_h}, y}) =
w^{\gamma_{x, y}^{h}}_{y, u_{\tau_h}} + {\textit
b\_time}^{\gamma_{x, y}^{h}}(u_{\tau_h},
\BB{u_{\tau_h}, y})$.
Consequently, we have
\medskip

$\hspace{20pt}{\textit b\_time}^{\gamma_{x, y}^{\tau_h}}(y, \BB{y, x})~\ge~ \tau_h \cdot \setuptime +
w^{\gamma_{x, y}^{\tau_h}}_{y, q_{\tau_h, \tau_h}} + {\textit
b\_time}^{\gamma_{x, y}^{\tau_h}}(q_{\tau_h,\tau_h},
\BB{q_{\tau_h,\tau_h}, y})$


$\hspace{110.5pt}~=~ \tau_h \cdot \setuptime +
w^{\gamma_{x, y}^{h}}_{y, q_{h, \tau_h}} + {\textit
b\_time}^{\gamma_{x, y}^{h}}(q_{h,\tau_h},
\BB{q_{h,\tau_h}, y})$

$\hspace{110.5pt}~=~ {\textit b\_time}^{\gamma_{x, y}^{h}}(y, \BB{y, x})$.

\medskip
On the other hand, since $\tau_h \le h$, we have ${\textit b\_time}^{\gamma_{x, y}^{\tau_h}}(y, \BB{y, x})
\le {\textit b\_time}^{\gamma_{x, y}^{h}}(y, \BB{y, x})$ (refer to Figure~\ref{figure:gamma_scenario}).
Therefore, ${\textit b\_time}^{\gamma_{x, y}^{\tau_h}}(y, \BB{y, x}) =
{\textit b\_time}^{\gamma_{x, y}^{h}}(y, \BB{y, x})$ holds, which implies
${\textit b\_time}^{\gamma_{x, y}^{\tau_h}}(y, \BB{y, x})
= \tau_h \cdot \setuptime + w^{\gamma_{x, y}^{\tau_h}}_{y, q_{\tau_h, \tau_h}} + {\textit
b\_time}^{\gamma_{x, y}^{\tau_h}}(q_{\tau_h,\tau_h}, \BB{q_{\tau_h,\tau_h}, y})$.
Moreover, we have 

\medskip
$\hspace{20pt}{\textit b\_time}^{\gamma_{x, y}^{\tau_h}}(y, \BB{y, x})
~=~ {\textit b\_time}^{\gamma_{x, y}^{h}}(y, \BB{y, x})$


$\hspace{110.5pt}~>~ k \cdot \setuptime + w^{\gamma_{x, y}^{h}}_{y, q_{h, k}} + {\textit
b\_time}^{\gamma_{x, y}^{h}}(q_{h,k}, \BB{q_{h,k}, y})$

$\hspace{110.5pt}~=~ k \cdot \setuptime + w^{\gamma_{x, y}^{\tau_h}}_{y, q_{\tau_h, k}} + {\textit
b\_time}^{\gamma_{x, y}^{\tau_h}}(q_{\tau_h,k}, \BB{q_{\tau_h,k}, y})$
\medskip
\\for all $1 \le k < \tau_h$. It follows that
we have $q_{\tau_h, \tau_{\tau_h}}= q_{\tau_h,\tau_h}$.

In the rest of this section, we show the proofs of Fact~\ref{fact:Tx-is-base}, Fact~\ref{fact:Tx-is-base-with-y-prime}, Fact~\ref{fact:critical-after}, and Fact~\ref{fact:critical-before}.

\subsection{Proof of Fact~\ref{fact:Tx-is-base}}
\label{subsection:fact1}
In order to prove Fact~\ref{fact:Tx-is-base}, it suffices to show
$r^{s}_{x,y} \ge r^{\ddot{s}(x)}_{x, y}$. As explained below, we first claim \vspace{-5pt}

\vspace{-12pt}
\begin{center}
${\textit b\_time}^{s}(y, T) \le
{\textit b\_time}^{\ddot{s}(x)}(y, T) + \tilde{w}^{s}_{x, y} -
\tilde{w}^{\ddot{s}(x)}_{x, y}.$
\end{center}

\vspace{-5pt}
Without loss of generality, suppose $N_T(y) = \{v_1, v_2, \ldots,
v_h\}$, where
$w^{\ddot{s}(x)}_{y, v_k} + \linebreak {\textit b\_time}^{\ddot{s}(x)}(v_k,
\BB{v_k, y})$ is nonincreasing as $k$ increases from $1$ to $h$.
According to Lemma~\ref{lemma:optimal-sequence}, ($v_1, v_2, \ldots, v_h$) is
an optimal sequence for $y$ to broadcast a message to $N_T(y)$
under $\ddot{s}(x)$. Besides, we have ${\textit b\_time}^{\ddot{s}(x)}(y, T) =
\max\{k \cdot \setuptime + w^{\ddot{s}(x)}_{y, v_k} + {\textit
b\_time}^{\ddot{s}(x)}(v_k, \BB{v_k, y})\mid 1 \le k \le h\}$.
Since ${\textit b\_time}^{s}(y, T) \le
\max\{k \cdot \setuptime + w^{s}_{y, v_k} + {\textit b\_time}^{s}(v_k, \BB{v_k,
y})\mid 1 \le k \le h\}$, the claim
is true provided $w^{s}_{y, v_k} + {\textit b\_time}^{s}(v_k,
\BB{v_k, y}) \le (w^{\ddot{s}(x)}_{y, v_k} + {\textit
b\_time}^{\ddot{s}(x)}(v_k, \BB{v_k, y})) + \tilde{w}^{s}_{x,
y} - \tilde{w}^{\ddot{s}(x)}_{x, y}$ for $1 \le k \le h$.
When $x \not\in V(\BB{v_k, y})$, we have
$w^{s}_{y, v_k} + {\textit
b\_time}^{s}(v_k, \BB{v_k, y}) = w^{\ddot{s}(x)}_{y, v_k} +
{\textit b\_time}^{\ddot{s}(x)}(v_k, \BB{v_k, y})$ and
$\tilde{w}^{s}_{x, y} = \tilde{w}^{\alpha_{x, y}}_{x, y} \ge \tilde{w}^{\ddot{s}(x)}_{x, y}$, as a consequence of the scenario $s$. The claim then follows.

So, we assume $x \in V(\BB{v_k, y})$. Let $s'$ be a scenario of $T$, which has
$w^{s'}_{a, b} = w^{\alpha_{x, y}}_{a, b}$ if \linebreak $(a, b) \in \P_{x,
y}$, and $w^{s'}_{a, b} = w^{\ddot{s}(x)}_{a, b}$ else. Then, we have ${\textit
b\_time}^{s}(v_k, \BB{v_k, y}) \le {\textit b\_time}^{s'}(v_k,
\BB{v_k, y})$. Since $s'$ and $\ddot{s}(x)$ differ in the edge weights of $\P_{x, y}$, we have ${\textit b\_time}^{s'}(v_k,
\BB{v_k, y}) \le \linebreak {\textit b\_time}^{\ddot{s}(x)}(v_k, \BB{v_k,
y}) + \tilde{w}^{s'}_{v_k, x} - \tilde{w}^{\ddot{s}(x)}_{v_k, x
}$ (the right-hand side is an upper bound on the time requirement for $v_k$ to broadcast over $\BB{v_k, y}$ under $s'$ using the optimal broadcast sequence with respect to ${\textit b\_time}^{\ddot{s}(x)}(v_k, \BB{v_k,
y})$). Further, since $\tilde{w}^{s'}_{v_k, x}=\tilde{w}^{s}_{v_k, x}$, $\tilde{w}^{s}_{x, y} =
w^{s}_{y, v_k} + \tilde{w}^{s}_{v_k, x}$, and $\tilde{w}^{\ddot{s}(x)}_{x, y} = w^{\ddot{s}(x)}_{y, v_k} + \tilde{w}^{\ddot{s}(x)}_{v_k,
x}$, the claim follows, because

\medskip
$\hspace{61.5pt}w^{s}_{y, v_k} + {\textit
b\_time}^{s}(v_k, \BB{v_k, y})$

$\hspace{37.5pt} ~\le~ w^{s}_{y, v_k} + {\textit
b\_time}^{\ddot{s}(x)}(v_k, \BB{v_k, y}) + \tilde{w}^{s}_{v_k,
x} - \tilde{w}^{\ddot{s}(x)}_{v_k, x}$


$\hspace{37.5pt}~=~ w^{\ddot{s}(x)}_{y, v_k} + {\textit
b\_time}^{\ddot{s}(x)}(v_k, \BB{v_k, y}) + \tilde{w}^{s}_{x,
y} - \tilde{w}^{\ddot{s}(x)}_{x, y}$.
\medskip
%
%

Then, the lemma holds, as a consequence of

\medskip

$\hspace{20pt}r^{s}_{x, y} ~=~  {\textit b\_time}^{s}(x, T) - {\textit b\_time}^{s}(y, T)$

$\hspace{37.5pt}~\ge~ (\dis_{x, y} \cdot \setuptime + \tilde{w}^{s}_{x, y} + {\textit b\_time}^{s}(y, \BB{y, x}))
                    - ({\textit b\_time}^{\ddot{s}(x)}(y, T) + \tilde{w}^{s}_{x, y} - \tilde{w}^{\ddot{s}(x)}_{x, y})$

\hspace{80pt} (by Lemma~\ref{lemma:general-x-y} and the claim)

              $\hspace{37.5pt}~=~ \dis_{x, y} \cdot \setuptime + \tilde{w}^{\ddot{s}(x)}_{x, y} + {\textit b\_time}^{\ddot{s}(x)}(y, \BB{y, x}) -
                    {\textit b\_time}^{\ddot{s}(x)}(y, T)$

\hspace{80pt}   ($\because {\textit b\_time}^{s}(y, \BB{y, x})= {\textit b\_time}^{\ddot{s}(x)}(y, \BB{y, x})$)

              $\hspace{37.5pt}~=~   {\textit b\_time}^{\ddot{s}(x)}(x, T) - {\textit b\_time}^{\ddot{s}(x)}(y, T)$ ~~(by Lemma~\ref{lemma:Direct-to-center})

              $\hspace{37.5pt}~=~   r^{\ddot{s}(x)}_{x,y}.$

\bigskip

\subsection{Proof of Fact~\ref{fact:Tx-is-base-with-y-prime}}
\label{subsection:fact2}


Let $s_1$ be the worst-case scenario and $y_1 \in {\textit B\_Ctr}^{s_1}$, both obtained by Fact~\ref{fact:Tx-is-base}. If $y_1$ is a
prime broadcast center of $T$ under
$s_1$, then $s_1$ is a desired $\ddot{s}(x)$. Otherwise, it is implied by
Lemma~\ref{lemma:prime-is-the-center-of-star} that the
prime broadcast center, say $\hat{\kappa}_1$, of $T$ under $s_1$
is neighboring to $y_1$ in $T$. If $\hat{\kappa}_1$ $(\not=x)$
is on the $x$-to-$y_1$ path,
then we have $\P_{x, \hat{\kappa}_1} \cup
E(\OB{\hat{\kappa}_1, x}) \subset \P_{x, y_1} \cup
E(\OB{y_1, x})$. Besides, 
we have $w^{s_1}_{a, b} = w^{\alpha_{x, y_1}}_{a, b} =
w^{\alpha_{x, \hat{\kappa}_1}}_{a, b}$ for each $(a, b) \in \P_{x, \hat{\kappa}_1} \cup
E(\OB{\hat{\kappa}_1,x})$, implying that
$s_1$ is a desired $\ddot{s}(x)$.

If $\hat{\kappa}_1$ is
not on the $x$-to-$y_1$ path, i.e.,  $\dis_{x,
\hat{\kappa}_1} = \dis_{x, y_1} +1$,
then according to Fact~\ref{fact:Tx-is-base}, there is another worst-case scenario $s_2$, which has $w^{s_2}_{a, b} =
w^{\alpha_{x, \hat{\kappa}_1}}_{a, b}$ if $(a, b) \in \P_{x, \hat{\kappa}_1} \cup
E(\OB{\hat{\kappa}_1,x})$, and $w^{s_2}_{a, b} = w^{s_1}_{a, b}$ else, where $\hat{\kappa}_1
\in {\textit B\_Ctr}^{s_2}$. Similarly, it can be concluded that either $s_2$ is a desired $\ddot{s}(x)$ or $\dis_{x, \hat{\kappa}_2} =
\dis_{x, \hat{\kappa}_1} +1$ holds,
where $\hat{\kappa}_2$ is a prime broadcast center of $T$ under $s_2$.
Thus, a desired $\ddot{s}(x)$ will be obtained eventually, if the above process is done repeatedly.

\subsection{Proof of Fact~\ref{fact:critical-after}}
\label{subsection:fact3}

Notice that $\ddot{s}(x)$ and $y$ used in Fact~\ref{fact:critical-after} are inherited from Fact~\ref{fact:Tx-is-base-with-y-prime}. The notations $\mu$, $h$, $\gamma^0_{x,y}$,
$Q_0$, and $\tau_0$, which will be used below, were defined earlier (after Fact~\ref{fact:critical-before}) in this section. Since $\beta^{h}_{x, y} = \alpha_{x, y}$, it suffices to show $r^{\alpha_{x,
y}}_{x,y} \ge r^{\ddot{s}(x)}_{x, y}$, in order to prove Fact~\ref{fact:critical-after}.

Let $Q'$ be the sequence that is obtained from $Q_0$ by inserting $\mu$
between $q_{0, p-1}$ and $q_{0, p}$, where $w^{\ddot{s}(x)}_{y,
q_{0, p-1}} + {\textit b\_time}^{\ddot{s}(x)}(q_{0, p-1},
\BB{q_{0, p-1}, y}) > w^{\ddot{s}(x)}_{y,
\mu} + {\textit b\_time}^{\ddot{s}(x)}(\mu,
\OB{y,x}) \ge w^{\ddot{s}(x)}_{y,
q_{0, p}} + {\textit b\_time}^{\ddot{s}(x)}(q_{0, p},
\BB{q_{0, p}, y})$. That is,
we have $Q' = (q'_{1}, \ldots,q'_{p-1}, q'_{p}, q'_{p+1}, \ldots, q'_{h+1})
= (q_{0, 1}, \ldots, \linebreak q_{0, p-1},  \mu, q_{0, p}, q_{0,p+1}, \ldots, q_{0, h})$.
According to Lemma~\ref{lemma:optimal-sequence}, $Q'$ is
an optimal sequence for $y$ to broadcast a message to $N_T(y)$ under
$\ddot{s}(x)$. Since
$q'_{p} = \mu$ and $q'_{\tau_0+1} = q_{0, \tau_0}$, we have

\medskip
$\hspace{13.5pt}{\textit
b\_time}^{\ddot{s}(x)}(y, T) ~=~\max\{p \cdot \setuptime +
w^{\ddot{s}(x)}_{y, \mu} + {\textit b\_time}^
{\ddot{s}(x)}(\mu, \OB{y, x}),$

$\hspace{143.5pt} (1 + \tau_0) \cdot \setuptime +
w^{\ddot{s}(x)}_{y, q_{0, \tau_0}} + {\textit b\_time}^
{\ddot{s}(x)}(q_{0, \tau_0}, \BB{q_{0, \tau_0}, y})\}$
\vspace{-8pt}
\begin{equation}
\hspace{110.5pt}~=~ \max\{p \cdot \setuptime +
w^{\ddot{s}(x)}_{y, \mu} + {\textit b\_time}^
{\ddot{s}(x)}(\mu, \OB{y, x}), \setuptime + {\textit
b\_time}^{\ddot{s}(x)}(y, \BB{y, x})\}.
\label{equation:y-T-time}
\end{equation}

\vspace{-5pt}
Recall that $Q_h = (q_{h,1}, q_{h,2}, \ldots, q_{h,h})$
is an optimal sequence for $y$ to broadcast a message to $N_{\BB{y,x}}(y)$ under
$\gamma^h_{x,y}$ ($=\alpha_{x,y}$).
Similarly, inserting $\mu$ between $q_{h, p-1}$ and $q_{h, p}$,
we have ${\textit b\_time}^{\alpha_{x,
y}}(y, T) \le \max\{p \cdot \setuptime + w^{\alpha_{x, y}}_{y,
\mu} + {\textit b\_time}^ {\alpha_{x, y}}(\mu, \OB{y,
x}), \setuptime+{\textit b\_time}^{\alpha_{x, y}}(y, \BB{y,
x})\}$.

If $p \cdot \setuptime + w^{\alpha_{x, y}}_{y, \mu} +
{\textit b\_time}^{\alpha_{x, y}}(\mu, \OB{y, x}) \ge \setuptime+
{\textit b\_time}^{\alpha_{x, y}}(y, \BB{y, x})$,
then ${\textit b\_time}^{\alpha_{x,
y}}(y, T) \le p \cdot \setuptime + w^{\alpha_{x, y}}_{y,
\mu} + {\textit b\_time}^ {\alpha_{x, y}}(\mu, \OB{y,
x})$.
Since
$w^{\ddot{s}(x)}_{y, \mu} + {\textit
b\_time}^{\ddot{s}(x)}(\mu, \OB{y, x}) = w^{\alpha_{x,
y}}_{y, \mu} + {\textit b\_time}^{\alpha_{x,
y}}(\mu, \OB{y, x})$ (as a consequence of Fact~\ref{fact:Tx-is-base-with-y-prime}), we have

\medskip
$\hspace{40pt} {\textit
b\_time}^{\ddot{s}(x)}(y, T) ~\ge~ p \cdot \setuptime +
w^{\ddot{s}(x)}_{y, \mu} + {\textit b\_time}^
{\ddot{s}(x)}(\mu, \OB{y, x})$

$\hspace{118.5pt}~=~ p \cdot \setuptime + w^{\alpha_{x,
y}}_{y, \mu} + {\textit b\_time}^{\alpha_{x,
y}}(\mu, \OB{y, x})$

$\hspace{118.5pt}~\ge~ {\textit b\_time}^{\alpha_{x, y}}(y, T)$

%
\medskip
Further, since ${\textit
b\_time}^{\alpha_{x, y}}(x, \BB{y, x}) \ge {\textit
b\_time}^{\ddot{s}(x)}(x, \BB{y, x})$, we have

\medskip
$\hspace{20pt}r^{\alpha_{x, y}}_{x,y} ~=~ {\textit b\_time}^{\alpha_{x, y}}(x, T) - {\textit b\_time}^{\alpha_{x, y}}(y, T)$

$\hspace{42.75pt}~\ge~ \dis_{x, y} \cdot \setuptime + \tilde{w}^{\alpha_{x, y}}_{x, y} + {\textit b\_time}^{\alpha_{x, y}}(y, \BB{y, x}) - {\textit b\_time}^{\alpha_{x, y}}(y, T)$ ~~(by Lemma~\ref{lemma:general-x-y})

$\hspace{42.75pt}~\ge~ \dis_{x, y} \cdot \setuptime + \tilde{w}^{\ddot{s}(x)}_{x, y} + {\textit b\_time}^{\ddot{s}(x)}(y, \BB{y, x}) - {\textit b\_time}^{\alpha_{x, y}}(y, T)$

$\hspace{42.75pt}~\ge~ \dis_{x, y} \cdot \setuptime + \tilde{w}^{\ddot{s}(x)}_{x, y} + {\textit b\_time}^{\ddot{s}(x)}(y, \BB{y, x}) - {\textit b\_time}^{\ddot{s}(x)}(y, T)$

$\hspace{42.75pt}~=~ {\textit b\_time}^{\ddot{s}(x)}(x, T) - {\textit b\_time}^{\ddot{s}(x)}(y, T)$ ~~ (by Lemma~\ref{lemma:Direct-to-center})
\vspace{-12pt}
\begin{equation}
\hspace{-284pt}~=~ r^{\ddot{s}(x)}_{x, y}.
\label{equation:fact3-case1}
\end{equation}

%
%
%
If $p \cdot \setuptime + w^{\alpha_{x, y}}_{y, \mu} +
{\textit b\_time}^{\alpha_{x, y}}(\mu, \OB{y, x}) < \setuptime +
{\textit b\_time}^{\alpha_{x, y}}(y, \BB{y,
x})$, then ${\textit b\_time}^{\alpha_{x,
y}}(y, T) \le \setuptime+ \linebreak {\textit b\_time}^{\alpha_{x, y}}(y, \BB{y,
x})$, and hence

\medskip
$\hspace{17pt}r^{\alpha_{x, y}}_{x,y} ~=~ {\textit b\_time}^{\alpha_{x,
y}}(x, T) - {\textit b\_time}^{\alpha_{x,
y}}(y, T)$

$\hspace{39.5pt}~\ge~ \dis_{x, y} \cdot \setuptime +
\tilde{w}^{\alpha_{x, y}}_{x, y} + {\textit
b\_time}^{\alpha_{x, y}}(y, \BB{y, x})
- {\textit b\_time}^{\alpha_{x,
y}}(y, T)$ ~~(by Lemma~\ref{lemma:general-x-y})

$\hspace{39.5pt}~\ge~ \dis_{x, y} \cdot \setuptime +
\tilde{w}^{\alpha_{x, y}}_{x, y} + {\textit
b\_time}^{\alpha_{x, y}}(y, \BB{y, x})
- (\setuptime+{\textit b\_time}^{\alpha_{x, y}}(y, \BB{y,
x}))$

$\hspace{39.5pt}~=~ \dis_{x, y} \cdot \setuptime +
\tilde{w}^{\alpha_{x, y}}_{x, y} - \setuptime$
\vspace{0pt}
\begin{equation}
\hspace{-147.5pt}~=~ \dis_{x,
y} \cdot \setuptime + \tilde{w}^{\ddot{s}(x)}_{x, y} - \setuptime \textrm{~~~(by Fact~\ref{fact:Tx-is-base-with-y-prime})}.
\label{equation:regret-xy}
\end{equation}

\vspace{5pt}
Moreover, we have ${\textit
b\_time}^{\ddot{s}(x)}(x, T) =\dis_{x, y} \cdot \setuptime +
\tilde{w}^{\ddot{s}(x)}_{x, y} + {\textit
b\_time}^{\ddot{s}(x)}(y, \BB{y, x})$ by
Lemma~\ref{lemma:Direct-to-center}. Since
${\textit b\_time}^{\ddot{s}(x)}(y, T) \ge \setuptime+{\textit
b\_time}^{\ddot{s}(x)}(y, \BB{y, x})$ (according to (\ref{equation:y-T-time})), we have

\medskip
$\hspace{20pt}r^{\ddot{s}(x)}_{x,y} ~=~ {\textit b\_time}^{\ddot{s}(x)}(x,
T) - {\textit b\_time}^{\ddot{s}(x)}(y, T)$

$\hspace{41.5pt}~\le~ \dis_{x, y} \cdot \setuptime +
\tilde{w}^{\ddot{s}(x)}_{x, y} + {\textit
b\_time}^{\ddot{s}(x)}(y, \BB{y, x})
- (\setuptime+{\textit
b\_time}^{\ddot{s}(x)}(y, \BB{y, x}))$

$\hspace{41.5pt}~=~ \dis_{x,
y} \cdot \setuptime + \tilde{w}^{\ddot{s}(x)}_{x, y} - \setuptime$

$\hspace{41.5pt}~\le~ r^{\alpha_{x, y}}_{x,y}$ (according to (\ref{equation:regret-xy})).

\subsection{Proof of Fact~\ref{fact:critical-before}}
\label{subsection:fact4}
Notice that $\ddot{s}(x)$ and $y$ used in Fact~\ref{fact:critical-before} are inherited from Fact~\ref{fact:Tx-is-base-with-y-prime} and the notations $\mu$, $h$, $t^*$, $\gamma^j_{x,y}$,
$Q_j$, and $\tau_j$, which will be used below, were defined earlier in this section.
In order to prove Fact~\ref{fact:critical-before}, three lemmas are introduced first.

\begin{lemma} \label{lemma:critical-always-front}
$w^{\gamma_{x,y}^j}_{y,
q_{j, \tau_{j}}} + {\textit b\_time}^{\gamma_{x,y}^j}(q_{j,\tau_{j}},
\BB{q_{j,\tau_{j}}, y})$ is nondecreasing as $j$ increases from $0$ to $t^*$.
\end{lemma}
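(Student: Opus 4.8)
The plan is to recast the claim as a statement about how the ``critical position'' $\tau_j$ moves as $j$ grows. Write $V_j(k) = w^{\gamma^j_{x,y}}_{y, q_{j,k}} + {\textit b\_time}^{\gamma^j_{x,y}}(q_{j,k}, \BB{q_{j,k}, y})$ for the broadcast time contributed by the $k$-th neighbour in $Q_j$ under $\gamma^j_{x,y}$, so that $M_j := {\textit b\_time}^{\gamma^j_{x,y}}(y, \BB{y,x}) = \max_{1 \le k \le h}(k\cdot\setuptime + V_j(k))$ and, by the defining property of $q_{j,\tau_j}$, $V_j(\tau_j) = M_j - \tau_j\cdot\setuptime$; the lemma asks that $V_j(\tau_j)$ be nondecreasing. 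The first ingredient is that $M_j$ is nondecreasing in $j$: $\gamma^j_{x,y}$ and $\gamma^{j-1}_{x,y}$ agree off the edges of $\setof{(y,u_j)} \cup E(\BB{u_j,y})$, where $\gamma^j_{x,y}$ carries the larger weights $w^{\alpha_{x,y}}$ ($=w^+$) and $\gamma^{j-1}_{x,y}$ carries $w^{\ddot{s}(x)}$, and enlarging edge weights only enlarges a broadcast time (the same monotonicity already used after the statement of Fact~\ref{fact:critical-before}). Hence $V_j(\tau_j) \ge V_{j-1}(\tau_{j-1})$ is immediate from $\tau_j \le \tau_{j-1}$, and the whole proof reduces to tracking the critical position from $Q_{j-1}$ to $Q_j$.

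Next I would record the exact shape of the sequences that the cyclic-shift definition forces: $Q_j$ has $u_1,\dots,u_j$ in positions $1,\dots,j$, each carrying its $w^{\alpha_{x,y}}$-value (the largest value that neighbour can have), and $u_{j+1},\dots,u_h$ in positions $j+1,\dots,h$ in the order they occupy in $Q_0$, each carrying its $w^{\gamma^0_{x,y}}$-value. Consequently, if $\ell$ is the position of $u_j$ in $Q_{j-1}$ (so $\ell \ge j$), then $Q_{j-1}$ and $Q_j$ coincide as vertex-and-value sequences on positions $1,\dots,j-1$ and on positions $\ell+1,\dots,h$; on positions $j,\dots,\ell$, $Q_j$ is the cyclic shift of $Q_{j-1}$ that brings $u_j$ to the front of that block, and the only value that changes there is that of $u_j$, which rises from $w^{\gamma^0_{x,y}}_{y,u_j}$ to $w^{\alpha_{x,y}}_{y,u_j}$. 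I would also prove, as an auxiliary fact, that $\tau_{j-1} \ge j$ for every $1 \le j \le t^*$: since $j-1 < t^*$ we have $\tau_{j-1} \ne j-1$ by the minimality in the definition of $t^*$, so it suffices to show $\tau_i \ge i$ for $1 \le i < t^*$; if some such $i$ had $m := \tau_i < i$, then because $Q_i$ and $Q_m$ agree (values included) on positions $1,\dots,m$ and $M_m \le M_i$, one forces $M_m = M_i$ with first maximiser at position $m$, i.e.\ $\tau_m = m$, so $q_{m,\tau_m} = q_{m,m}$, contradicting $m < t^*$.

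Finally I would fix $j$ with $1 \le j \le t^*$, put $\ell$ as above, and argue by cases on $\tau_{j-1}$ relative to $\ell$ (knowing $\tau_{j-1} \ge j$). If $\tau_{j-1} > \ell$, the critical vertex of $Q_{j-1}$ lies in an unchanged position, so $\tau_{j-1}\cdot\setuptime + V_j(\tau_{j-1}) = M_{j-1}$; then either $M_j = M_{j-1}$, whence $\tau_j \le \tau_{j-1}$, or $M_j > M_{j-1}$ and the only positions of $Q_j$ that can attain $M_j$ are position $j$---forcing $\tau_j = j$, hence $j = t^*$ and $V_j(\tau_j) = w^{\alpha_{x,y}}_{y,u_j} > M_{j-1} - j\cdot\setuptime \ge V_{j-1}(\tau_{j-1})$---or some $k$ with $j < k \le \ell$, for which $V_j(\tau_j) = V_{j-1}(k-1) \ge V_{j-1}(\tau_{j-1})$ because $k-1 < \tau_{j-1}$ and the values along $Q_{j-1}$ are nonincreasing. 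If $\tau_{j-1} < \ell$, the critical vertex of $Q_{j-1}$ lies in the shifted block and reappears in $Q_j$ at position $\tau_{j-1}+1$ with the same value; one checks $M_j = \max\{j\cdot\setuptime + w^{\alpha_{x,y}}_{y,u_j},\, M_{j-1}+\setuptime\}$, and either $\tau_j = \tau_{j-1}+1$ with $M_j = M_{j-1}+\setuptime$, so $V_j(\tau_j) = V_{j-1}(\tau_{j-1})$ exactly, or $\tau_j = j$ (so $j = t^*$) and $V_j(\tau_j) \ge V_{j-1}(\tau_{j-1})$ follows from $\tau_{j-1} \ge j$. The borderline case $\tau_{j-1} = \ell$---where the critical vertex of $Q_{j-1}$ is $u_j$ itself, so $M_{j-1} = \ell\cdot\setuptime + w^{\gamma^0_{x,y}}_{y,u_j}$---is the delicate one: here $u_j$ jumps to position $j$ with a larger value, and one must use the nonincreasing ordering along $Q_{j-1}$ (in the form $V_{j-1}(\ell-1) \ge V_{j-1}(\ell) = w^{\gamma^0_{x,y}}_{y,u_j}$) to rule out the configuration in which $M_j$ is first attained strictly after position $\ell$ in $Q_j$; once that is excluded, the same bookkeeping as above, now also using $w^{\alpha_{x,y}}_{y,u_j} \ge w^{\gamma^0_{x,y}}_{y,u_j}$, yields $V_j(\tau_j) \ge V_{j-1}(\tau_{j-1})$. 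The main obstacle is precisely this last case together with correctly pinning down, in each case, which position of $Q_j$ becomes critical---that is, checking that whenever $\tau_j$ jumps above $\tau_{j-1}$, $M_j$ has grown by exactly the compensating multiple of $\setuptime$.
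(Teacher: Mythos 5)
Your proof follows essentially the same strategy as the paper: you establish the auxiliary fact that $\tau_{j-1}\ge j$ for $j\le t^*$ (the paper's claim that $q_{j,\tau_j}\notin\{q_{j,1},\dots,q_{j,j}\}$), observe that ${\textit b\_time}^{\gamma^j_{x,y}}(y,\BB{y,x})$ is nondecreasing in $j$, and then case-analyse on $\tau_{j-1}$ relative to the position $\ell$ of $u_j$ in $Q_{j-1}$, using the cyclic-shift structure of $Q_j$ to track the critical position. The two executions differ only in detail: where the paper bounds each position of $Q_{j+1}$ via an explicit inequality chain (obtaining $\tau_{j+1}\in\{\tau_j+1,\,j+1\}$) and then dispatches the $\tau_j\ge\ell$ case in one stroke with ``similarly, $\tau_{j+1}\le\tau_j$,'' you instead isolate the crisper identity $M_j=\max\{j\cdot\setuptime+V_j(j),\,M_{j-1}+\setuptime\}$ in the $\tau_{j-1}<\ell$ case and split the complementary case into $\tau_{j-1}>\ell$ and $\tau_{j-1}=\ell$. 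This three-way split is a genuine improvement in clarity: the paper's ``similarly'' is in fact slightly subtle at $\tau_{j-1}=\ell$, because there the critical vertex of $Q_{j-1}$ is $u_j$ itself and hence does not survive in place; you correctly identify this as the delicate sub-case and supply the needed extra observation $V_{j-1}(\ell-1)\ge V_{j-1}(\ell)$. Two minor presentation nits: you twice abbreviate $w^{\alpha_{x,y}}_{y,u_j}+{\textit b\_time}^{\alpha_{x,y}}(u_j,\BB{u_j,y})$ (respectively $w^{\gamma^0_{x,y}}_{y,u_j}+{\textit b\_time}^{\gamma^0_{x,y}}(u_j,\BB{u_j,y})$) to just the weight term, and the borderline case is left at the level of a verified sketch rather than written out; both are easily repaired, and the argument is sound.
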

\begin{proof}
It is equivalent to show
\vspace{-5pt}
\begin{center}
$w^{\gamma_{x, y}^j}_{y, q_{j,\tau_j}} + {\textit
b\_time}^{\gamma_{x, y}^j}(q_{j,\tau_j}, \BB{q_{j,\tau_j}, y}) \le
w^{\gamma_{x, y}^{j+1}}_{y, q_{j+1, \tau_{j+1}}} + {\textit
b\_time}^{\gamma_{x, y}^{j+1}}(q_{j+1, \tau_{j+1}},
\BB{q_{j+1, \tau_{j+1}}, y})$
\end{center}
\vspace{-5pt}
for $0 \le j < t^*$. We first claim that $Q_j$ has
$q_{j,\tau_j} \not\in \{q_{j,1}, q_{j,2}, \ldots,
q_{j, j}\}$ under $\gamma_{x,y}^j$ for the following reasons.
Clearly, the claim is true for $j = 0$.
For $1 \le j < t^*$, it suffices
to show \linebreak ${\textit b\_time}^{\gamma_{x,
y}^{j}}(y, \BB{y, x}) > k \cdot \setuptime + w^{\gamma_{x, y}^{j}}_{y, q_{j,k}} +
{\textit b\_time}^{\gamma_{x, y}^{j}}(q_{j,k}, \BB{q_{j,k},
y})$ for all $1 \le k \le j$. As a consequence of $\gamma_{x,y}^{j}$ and
$\gamma_{x,y}^{k}$,
we have
${\textit b\_time}^{\gamma_{x,
y}^{j}}(y, \BB{y, x}) \ge {\textit b\_time}^{\gamma_{x,
y}^{k}}(y, \BB{y, x})$ and $
w^{\gamma_{x, y}^{j}}_{y, q_{j,k}} +
{\textit b\_time}^{\gamma_{x, y}^{j}}(q_{j,k}, \BB{q_{j,k},
y}) = w^{\gamma_{x, y}^{k}}_{y, q_{k,k}} +
{\textit b\_time}^{\gamma_{x, y}^{k}}(q_{k,k}, \BB{q_{k,k},
y})$. Besides, since $k < t^*$,
we have ${\textit b\_time}^{\gamma_{x,
y}^{k}}(y, \BB{y, x}) > k \cdot \setuptime + w^{\gamma_{x, y}^{k}}_{y, q_{k,k}} +
{\textit b\_time}^{\gamma_{x, y}^{k}}(q_{k,k}, \BB{q_{k,k},
y})$. Thus, the claim follows.

Suppose
$q_{j, \ell} = u_{j+1}$, where $j < \ell \le h$. We first assume $\tau_j <
\ell$. For $1 \le k \le h$ and $k \not= j+1$, since $w^{\gamma_{x, y}^{j}}_{y, q_{j+1,k}} = w^{\gamma_{x, y}^{j+1}}_{y, q_{j+1,k}}$
and ${\textit
b\_time}^{\gamma_{x, y}^{j}}(q_{j+1,k}, \BB{q_{j+1,k}, y})
= {\textit
b\_time}^{\gamma_{x, y}^{j+1}}(q_{j+1,k}, \BB{q_{j+1,k}, y})$, we have

$\hspace{58.5pt} k \cdot \setuptime +
w^{\gamma_{x, y}^{j+1}}_{y, q_{j+1,k}} + {\textit
b\_time}^{\gamma_{x, y}^{j+1}}(q_{j+1,k}, \BB{q_{j+1,k}, y})$

$\hspace{34.5pt}~=~ k \cdot \setuptime + w^{\gamma_{x, y}^{j}}_{y, q_{j+1,k}} + {\textit
b\_time}^{\gamma_{x, y}^{j}}(q_{j+1,k}, \BB{q_{j+1,k}, y})$


$\hspace{34.5pt}~\le~ \setuptime + \max\{(k-1) \cdot \setuptime + w^{\gamma_{x, y}^{j}}_{y, q_{j,k-1}} + {\textit
b\_time}^{\gamma_{x, y}^{j}}(q_{j,k-1}, \BB{q_{j,k-1}, y}),$

$\hspace{97pt}
(k \cdot \setuptime + w^{\gamma_{x, y}^{j}}_{y, q_{j,k}}  + {\textit
b\_time}^{\gamma_{x, y}^{j}}(q_{j,k}, \BB{q_{j,k}, y}))-\setuptime\}$ ~~ ($\because q_{j+1,k} \in
\{q_{j,k-1}, q_{j,k}\}$)


$\hspace{34.5pt}~\le~ \setuptime + (\tau_j \cdot \setuptime + w^{\gamma_{x,
y}^{j}}_{y, q_{j, \tau_j}} + {\textit b\_time}^{\gamma_{x,
y}^{j}}(q_{j, \tau_j}, \BB{q_{j, \tau_j}, y}))$

$\hspace{34.5pt}~=~ (\tau_j+1) \cdot \setuptime + w^{\gamma_{x,
y}^{j+1}}_{y, q_{j+1, \tau_j+1}} + {\textit b\_time}^{\gamma_{x,
y}^{j+1}}(q_{j+1, \tau_j+1}, \BB{q_{j+1, \tau_j+1}, y})$,
\medskip
\\where the last equality holds, because the claim implies $\tau_j +1 > j+ 1$
and hence $q_{j+1, \tau_j+1} = q_{j, \tau_j}$.

Consequently, we have $q_{j+1, \tau_{j+1}} \in \{q_{j+1, \tau_j+1},
q_{j+1, j+1}\}$.
If $q_{j+1, \tau_{j+1}} = q_{j+1, \tau_j+1}$ (= $q_{j,\tau_j}$), then
$w^{\gamma_{x, y}^j}_{y, q_{j,\tau_j}} + {\textit
b\_time}^{\gamma_{x, y}^j}(q_{j,\tau_j}, \BB{q_{j,\tau_j}, y}) =
w^{\gamma_{x, y}^{j+1}}_{y, q_{j+1, \tau_{j+1}}} + {\textit
b\_time}^{\gamma_{x, y}^{j+1}}(q_{j+1, \tau_{j+1}},
\BB{q_{j+1, \tau_{j+1}}, y})$. If $q_{j+1, \tau_{j+1}} =
q_{j+1, j+1}$, then

\medskip
$\hspace{61.5pt}w^{\gamma_{x, y}^j}_{y, q_{j,\tau_j}} + {\textit
b\_time}^{\gamma_{x, y}^j}(q_{j,\tau_j}, \BB{q_{j,\tau_j}, y})$

$\hspace{37.5pt}~=~ w^{\gamma_{x, y}^{j+1}}_{y, q_{j+1,\tau_j+1}} + {\textit
b\_time}^{\gamma_{x, y}^{j+1}}(q_{j+1,\tau_j+1}, \BB{q_{j+1,\tau_j+1}, y})$


$\hspace{37.5pt}~\le~ w^{\gamma_{x, y}^{j+1}}_{y, q_{j+1, \tau_{j+1}}} + {\textit
b\_time}^{\gamma_{x, y}^{j+1}}(q_{j+1, \tau_{j+1}},
\BB{q_{j+1, \tau_{j+1}}, y})$

\hspace{61.5pt}(recall that
$w^{\gamma_{x, y}^{j+1}}_{y,
q_{j+1,k}} + {\textit b\_time}^{\gamma_{x, y}^{j+1}}(q_{j+1,k},
\BB{q_{j+1,k}, y})$ is nonincreasing as

\hspace{61.5pt}~$k$ increases from $1$ to $h$).
\medskip

Next we assume $\tau_j \ge \ell$. Similarly, we have $q_{j+1, \tau_{j+1}} \in
\{q_{j+1,j+1}, q_{j+1,j+2},
\ldots, q_{j+1, \ell}\} \cup \setof{q_{j+1, \tau_j}}
$, i.e., $\tau_{j+1} \le \tau_j$. Since ${\textit b\_time}^{\gamma_{x,
y}^{j}}(y, \BB{y, x}) \le {\textit b\_time}^{\gamma_{x,
y}^{j+1}}(y, \BB{y, x})$, we have $w^{\gamma_{x, y}^j}_{y, q_{j,\tau_j}} + {\textit
b\_time}^{\gamma_{x, y}^j}(q_{j,\tau_j}, \BB{q_{j,\tau_j}, y}) \le
w^{\gamma_{x, y}^{j+1}}_{y, q_{j+1, \tau_{j+1}}} + {\textit
b\_time}^{\gamma_{x, y}^{j+1}}(q_{j+1, \tau_{j+1}},
\BB{q_{j+1, \tau_{j+1}}, y})$.
\end{proof}

\medskip

\begin{lemma} \label{lemma:critical-always-front-2}
If $w^{\ddot{s}(x)}_{y, \mu} + {\textit
b\_time}^{\ddot{s}(x)}(\mu, \OB{y, x}) < w^{\ddot{s}(x)}_{y,
q_{0, \tau_{0}}} + {\textit b\_time}^{\ddot{s}(x)}(q_{0,\tau_{0}},
\BB{q_{0,\tau_{0}}, y})$, then \linebreak ${\textit b\_time}^{\gamma_{x,
y}^{j}}(y, T) - {\textit b\_time}^{\gamma_{x, y}^{j}}(y,
\BB{y,x})$ is nonincreasing as $j$ increases from $0$
to $t^*$.
\end{lemma}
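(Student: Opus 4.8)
The plan is to rewrite $f(j):={\textit b\_time}^{\gamma_{x,y}^j}(y,T)-{\textit b\_time}^{\gamma_{x,y}^j}(y,\BB{y,x})$ in a shape that exposes its monotonicity, and then to compare the values at $j$ and $j+1$. For brevity put $A:=w^{\ddot{s}(x)}_{y,\mu}+{\textit b\_time}^{\ddot{s}(x)}(\mu,\OB{y,x})$, $\psi_j:={\textit b\_time}^{\gamma_{x,y}^j}(y,\BB{y,x})$, $\phi_j:={\textit b\_time}^{\gamma_{x,y}^j}(y,T)$, and for $1\le k\le h$ let $g_j(k):=w^{\gamma_{x,y}^j}_{y,q_{j,k}}+{\textit b\_time}^{\gamma_{x,y}^j}(q_{j,k},\BB{q_{j,k},y})$, so that $g_j(1)\ge\cdots\ge g_j(h)$, $\psi_j=\max_{1\le k\le h}\{k\cdot\setuptime+g_j(k)\}$, and $\tau_j$ is the least index attaining that maximum. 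The first point is that $A$ does not depend on $j$, because each $\gamma_{x,y}^j$ agrees with $\ddot{s}(x)$ on $\setof{(y,\mu)}\cup E(\OB{y,x})$, the only part of $T$ that ${\textit b\_time}(\mu,\OB{y,x})$ depends on; so the hypothesis reads simply $A<g_0(\tau_0)$.

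Next I would invoke Lemma~\ref{lemma:critical-always-front}: $g_j(\tau_j)$ is nondecreasing on $0\le j\le t^*$, hence $g_j(\tau_j)\ge g_0(\tau_0)>A$ throughout. Since $(q_{j,1},\ldots,q_{j,h})$ is an optimal sequence for $y$ to broadcast to $N_{\BB{y,x}}(y)$ under $\gamma_{x,y}^j$, Lemma~\ref{lemma:optimal-sequence} shows that inserting $\mu$ at the sorted position $p_j$ (the least $p$ with $g_j(p)\le A$, and $p_j=h+1$ if no such $p$ exists) produces an optimal sequence for $y$ to broadcast to all of $N_T(y)$. The inequality $g_j(\tau_j)>A$ forces $p_j>\tau_j$, so the critical vertex $q_{j,\tau_j}$ keeps its position $\tau_j$ in the $T$-sequence; reading off the three ways the makespan can be realized (at $q_{j,\tau_j}$, at $\mu$, or at some $q_{j,k}$ with $k\ge p_j$ that was shifted back by one) gives
\[
\phi_j=\max\{\psi_j,\ p_j\cdot\setuptime+A,\ \setuptime+M_j\},\qquad M_j:=\max_{p_j\le k\le h}\{k\cdot\setuptime+g_j(k)\}
\]
(with $M_j=-\infty$ when $p_j=h+1$), whence $f(j)=\max\{0,\ p_j\cdot\setuptime+A-\psi_j,\ \setuptime+M_j-\psi_j\}$. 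One also records $\psi_j=\tau_j\cdot\setuptime+g_j(\tau_j)$, and that $g_j(p_j-1)>A$ together with $M_j\le\psi_j$ yields $0\le f(j)\le\setuptime$.

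For the inductive step $j\to j+1$ with $0\le j<t^*$, note that passing from $\gamma_{x,y}^j$ to $\gamma_{x,y}^{j+1}$ only raises the weights of the edges in $\setof{(y,u_{j+1})}\cup E(\BB{u_{j+1},y})$ to their $\alpha_{x,y}$ values; therefore the load of $u_{j+1}$ increases (to its $\alpha_{x,y}$ value) while every other neighbour's load, and $A$, stays fixed. Consequently the sorted load vector dominates coordinatewise, $g_{j+1}(k)\ge g_j(k)$ for all $k$, so $\psi_{j+1}\ge\psi_j$ and $p_{j+1}\ge p_j$; moreover the first $j+1$ vertices of $Q_{j+1}$ are $u_1,\ldots,u_{j+1}$, and by the claim established inside the proof of Lemma~\ref{lemma:critical-always-front} one has $\tau_j>j$. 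The main device is that the $\gamma_{x,y}^j$-optimal $T$-sequence is still a legal (though possibly non-optimal) $T$-sequence under $\gamma_{x,y}^{j+1}$: keeping it, only the completion time of $u_{j+1}$ changes, and it changes by exactly the increase of its load, so $\phi_{j+1}\le\max\{\phi_j,\ r\cdot\setuptime+g_{j+1}(j+1)\}$, where $r$ is the position $u_{j+1}$ occupied in that sequence. It then remains to check $r\cdot\setuptime+g_{j+1}(j+1)\le\phi_j+(\psi_{j+1}-\psi_j)$, using $\psi_{j+1}=\tau_{j+1}\cdot\setuptime+g_{j+1}(\tau_{j+1})\ge(j+1)\cdot\setuptime+g_{j+1}(j+1)$, the position of $r$ relative to $p_j$, the bound $g_j(\tau_j)>A$, and $\tau_j>j$; once this holds, $\phi_{j+1}\le\phi_j+(\psi_{j+1}-\psi_j)$, i.e.\ $f(j+1)\le f(j)$.

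The step I expect to be the main obstacle is exactly this last inequality: it says that the growth of $u_{j+1}$'s load registers in ${\textit b\_time}^{\gamma_{x,y}^j}(y,\BB{y,x})$ at least as strongly as in ${\textit b\_time}^{\gamma_{x,y}^j}(y,T)$. The delicacy is that enlarging $u_{j+1}$'s load can also push the sorted insertion point $p_j$ of $\mu$ to the right, so the terms $p_j\cdot\setuptime+A$ and $\setuptime+M_j$ can grow for reasons not tied to the growth of $\psi_j$; it is precisely Lemma~\ref{lemma:critical-always-front} together with $\tau_j>j$ which keep $\mu$ inserted strictly behind the genuine bottleneck $q_{j,\tau_j}$ of the $\BB{y,x}$-broadcast for every $0\le j\le t^*$ and thereby forbid this. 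The remaining bookkeeping splits according to whether $u_{j+1}$ precedes or follows $\mu$ in the $\gamma_{x,y}^j$-optimal $T$-sequence, equivalently whether its $\gamma_{x,y}^j$-load exceeds $A$ or not; that case split is where the routine computation lives.
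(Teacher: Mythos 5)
The inductive step has a genuine gap. You reduce $f(j+1)\le f(j)$ to the inequality
$r\cdot\setuptime+g_{j+1}(j+1)\le\phi_j+(\psi_{j+1}-\psi_j)$,
where $r$ is the position of $u_{j+1}$ in the old optimal sequence $\hat Q_j$. Because $\psi_{j+1}\ge (j+1)\cdot\setuptime+g_{j+1}(j+1)$, this inequality would force $(r-j-1)\cdot\setuptime\le f(j)\le\setuptime$, i.e.\ essentially $r\le j+2$; but $u_{j+1}$ can sit at an arbitrarily late position in $\hat Q_j$ while its new load $g_{j+1}(j+1)$ is its full $\alpha$-value. A concrete instance with $\setuptime=1$, $h=2$, $j=0$: take $\ddot s$-loads $g_0(1)=10$ (for $u_2$), $g_0(2)=9$ (for $u_1$), $A=9.5$, and let $u_1$'s $\alpha$-load be $15$. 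Then $\psi_0=11$, $\hat Q_0=(u_2,\mu,u_1)$, $\phi_0=12$, $f(0)=1$; and $\psi_1=16$, $\hat Q_1=(u_1,u_2,\mu)$, $\phi_1=16$, $f(1)=0$, so the lemma holds. But $r=3$, and $r\cdot\setuptime+g_1(1)=18 > 17 = \phi_0+(\psi_1-\psi_0)$. So the check you flag as ``routine'' cannot be carried out, and the bound $\phi_{j+1}\le\max\{\phi_j,\,r\cdot\setuptime+g_{j+1}(j+1)\}$, while true, is too coarse to conclude.

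The source of the trouble is that you estimate $\phi_{j+1}$ by evaluating the \emph{old} sequence $\hat Q_j$ under $\gamma^{j+1}_{x,y}$; that sequence is badly suboptimal for $\gamma^{j+1}_{x,y}$ precisely because $u_{j+1}$ now belongs much earlier. The paper's proof avoids this by building the new optimal sequence $\hat Q_{j+1}$ from $\hat Q_j$ by a single right cyclic shift that pulls $u_{j+1}$ forward, so every other vertex moves back by at most one position. It then splits on whether the bottleneck index $\varrho_{j+1}$ of $\hat Q_{j+1}$ equals $\tau_{j+1}$ (then $f(j+1)=0$) or exceeds it; in the latter case the bottleneck vertex is not $u_{j+1}$, so its load is unchanged and it sat at position $\varrho_{j+1}$ or $\varrho_{j+1}-1$ in $\hat Q_j$, giving $\phi_{j+1}\le\setuptime+\phi_j$ (or $\phi_{j+1}=\phi_j$ in the subcase $\tau_j\ge\ell'$), which together with $\psi_{j+1}\ge\setuptime+\psi_j$ (resp.\ $\psi_{j+1}\ge\psi_j$) closes the induction. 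To repair your argument you would need to replace the keep-the-old-sequence bound with this cyclic-shift comparison.
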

\begin{proof}
It is equivalent to show
\vspace{-5pt}
\begin{center}
${\textit b\_time}^{\gamma_{x,
y}^{j-1}}(y, T) - {\textit b\_time}^{\gamma_{x, y}^{j-1}}(y,
\BB{y,x}) \ge
{\textit b\_time}^{\gamma_{x,
y}^{j}}(y, T) - {\textit b\_time}^{\gamma_{x, y}^{j}}(y,
\BB{y,x})$
\end{center}
\vspace{-5pt}
for $1 \le j \le t^*$.
First, we define $\hat{Q}_j = (\hat{q}_{j,1}, \hat{q}_{j,2}, \ldots, \hat{q}_{j,h+1})$ as follows. Initially, let
$\hat{Q}_0$ be \linebreak obtained from $Q_0$ by inserting $\mu$ between $q_{0,p-1}$ and $q_{0,p}$, where $w^{\gamma_{x, y}^0}_{y,
q_{0,p-1}} + \linebreak {\textit b\_time}^{\gamma_{x, y}^0}(q_{0,p-1},
\BB{q_{0,p-1}, y}) \ge w^{\gamma_{x, y}^0}_{y,
\mu} + {\textit b\_time}^{\gamma_{x, y}^0}(\mu,
\OB{y,x}) > w^{\gamma_{x, y}^0}_{y,
q_{0,p}} + {\textit b\_time}^{\gamma_{x, y}^0}(q_{0,p},
\BB{q_{0,p}, y})$.

Then,
for $1 \le j \le t^*$,
$\hat{Q}_j$ is obtained from $\hat{Q}_{j-1}$ by cyclically shifting $\hat{q}_{j-1,c}, \hat{q}_{j-1,c+1}, \ldots, \linebreak \hat{q}_{j-1,\ell}$ one position toward the right, where $\hat{q}_{j-1, \ell} = u_j$ is assumed,
$c = j$ if $w^{\gamma_{x, y}^{j}}_{y,
\hat{q}_{{j-1},\ell}} + {\textit b\_time}^{\gamma_{x, y}^{j}}(\hat{q}_{{j-1},\ell},
\BB{\hat{q}_{{j-1},\ell}, y}) \ge w^{\gamma_{x, y}^j}_{y,
\mu} + {\textit b\_time}^{\gamma_{x, y}^j}(\mu,
\OB{y,x})$, and $c = j+1$ else. If we remove $\mu$ from $\hat{Q}_j$, then $Q_j$
results. Since $w^{\gamma_{x, y}^j}_{y,
\hat{q}_{j,k}} + {\textit b\_time}^{\gamma_{x, y}^j}(\hat{q}_{j,k},
\BB{\hat{q}_{j,k}, y})$ is
nonincreasing as $k$ increases from $1$ to $h+1$,
Lemma~\ref{lemma:optimal-sequence} guarantees that
$\hat{Q}_j$ is
an optimal sequence for $y$ to broadcast a message to $N_T(y)$ under
$\gamma_{x, y}^j$.

Notice that we have
\vspace{-8pt}
\begin{equation}
(\hat{q}_{j, 1},\hat{q}_{j, 2},\ldots,\hat{q}_{j, \tau_{j}}) = ({q}_{j, 1},{q}_{j, 2},\ldots, q_{j, \tau_{j}}), \vspace{-12pt}
\label{equation:Q-with-hat-Q}
\end{equation}
because

$\hspace{76.5pt}w^{\gamma^{j}_{x,y}}_{y, \mu}
+ {\textit b\_time}^{\gamma^{j}_{x,y}}(\mu, \OB{y, x})$

$\hspace{52.5pt}~=~ w^{\ddot{s}(x)}_{y, \mu}
+ {\textit b\_time}^{\ddot{s}(x)}(\mu, \OB{y, x})$ \hspace{10pt}(as a consequence of $\gamma_{x,y}^{j}$)


$\hspace{52.5pt}~<~ w^{\gamma^0_{x,y}}_{y,
q_{0, \tau_{0}}} + {\textit b\_time}^{\gamma^0_{x,y}}(q_{0,\tau_{0}},
\BB{q_{0,\tau_{0}}, y})$\hspace{10pt}($\because \ddot{s}(x) = \gamma^0_{x,y}$)
\vspace{-8pt}
\begin{equation}
\hspace{-32.5pt}~\le~ w^{\gamma^{j}_{x,y}}_{y, q_{j, \tau_{j}}} + {\textit
b\_time}^{\gamma^{j}_{x,y}}(q_{j, \tau_{j}}, \BB{q_{j, \tau_{j}}, y}) \mbox{\hspace{10pt}(by Lemma~\ref{lemma:critical-always-front}).}
\label{equation:tau-front}
\end{equation}

\vspace{-5pt}
As a consequence of (\ref{equation:Q-with-hat-Q}), we have

\medskip
$\hspace{40pt}{\textit b\_time}^{\gamma^{j}_{x,y}}(y, T)~\ge~ \tau_{j} \cdot \setuptime + w^{\gamma^{j}_{x,y}}_{y, \hat{q}_{j, \tau_{j}}} + {\textit
b\_time}^{\gamma^{j}_{x,y}}(\hat{q}_{j, \tau_{j}}, \BB{\hat{q}_{j, \tau_{j}}, y})$


$\hspace{118.5pt}~>~ k \cdot \setuptime + w^{\gamma^{j}_{x,y}}_{y, \hat{q}_{j, k}} + {\textit
b\_time}^{\gamma^{j}_{x,y}}(\hat{q}_{j, k}, \BB{\hat{q}_{j, k}, y})$
\medskip
\\for $1 \le k < \tau_{j}$.
Let $\hat{q}_{j, \varrho_j}$ be the first vertex in
$\hat{Q}_j$ with ${\textit
b\_time}^{\gamma_{x, y}^j}(y, T) =  \varrho_j \cdot \setuptime + w^{\gamma_{x,
y}^j}_{y, \hat{q}_{j, \varrho_j}} + {\textit b\_time}^{\gamma_{x,
y}^j}(\hat{q}_{j, \varrho_j}, \BB{\hat{q}_{j, \varrho_j}, y})$. Clearly, we have $\varrho_{j} \ge \tau_{j}$. When $\varrho_{j} = \tau_{j}$, the
lemma follows, because ${\textit b\_time}^{\gamma_{x, y}^{j}}(y, T) =
{\textit b\_time}^{\gamma_{x, y}^{j}}(y, \BB{y, x})$ and ${\textit b\_time}^{\gamma_{x, y}^{j-1}}(y, T) \ge
{\textit b\_time}^{\gamma_{x, y}^{j-1}}(y, \BB{y, x})$.
In subsequent discussion, we assume
$\varrho_{j} > \tau_{j}$.

Suppose $q_{j-1,{\ell'}} = u_{j}$, where
$j \le {\ell'} \le h$.
We first consider $\tau_{j-1} <
{\ell'}$.
Since $q_{{j-1}, \tau_{j-1}} = q_{{j}, \tau_{j-1}+1}$,
we have
\medskip

$\hspace{10pt}{\textit b\_time}^{\gamma_{x,
y}^{j}}(y, \BB{y, x})~\ge~ (\tau_{j-1}+1) \cdot \setuptime + w^{\gamma_{x,
y}^{j}}_{y, q_{j, \tau_{j-1}+1}} + {\textit b\_time}^{\gamma_{x,
y}^{j}}(q_{j, \tau_{j-1}+1}, \BB{q_{j, \tau_{j-1}+1}, y})$

$\hspace{100.8pt}~=~ \setuptime + (\tau_{j-1} \cdot \setuptime + w^{\gamma_{x,
y}^{{j}}}_{y, q_{{j-1}, \tau_{j-1}}} + {\textit b\_time}^{\gamma_{x,
y}^{{j}}}(q_{{j-1}, \tau_{j-1}}, \BB{q_{{j-1}, \tau_{j-1}}, y}))$

$\hspace{100.8pt}~=~ \setuptime + (\tau_{j-1} \cdot \setuptime + w^{\gamma_{x,
y}^{{j-1}}}_{y, q_{{j-1}, \tau_{j-1}}} + {\textit b\_time}^{\gamma_{x,
y}^{{j-1}}}(q_{{j-1}, \tau_{j-1}}, \BB{q_{{j-1}, \tau_{j-1}}, y}))$
\vspace{-8pt}
\begin{equation}
\hspace{-75.5pt}~=~ \setuptime + {\textit b\_time}^{\gamma_{x,
y}^{{j-1}}}(y, \BB{y, x}).
\label{equation:Tx-time}
\end{equation}

\vspace{-5pt}
Since $\varrho_{j} > \tau_{j} \ge j$, we have $\hat{q}_{j, \varrho_{j}} \not= q_{j, j} = u_{j}$ (as a consequence of (\ref{equation:Q-with-hat-Q})),
which implies $\hat{q}_{j, \varrho_{j}} \in \{\hat{q}_{j-1, \varrho_{j}-1}, \hat{q}_{j-1, \varrho_{j}}\}$.
Hence, we have

\medskip
$\hspace{0pt}{\textit b\_time}^{\gamma_{x,
y}^{j}}(y, T)~=~ \varrho_{j} \cdot \setuptime + w^{\gamma_{x,
y}^{j}}_{y, \hat{q}_{j, \varrho_{j}}} + {\textit b\_time}^{\gamma_{x,
y}^{j}}(\hat{q}_{j, \varrho_{j}}, \BB{\hat{q}_{j, \varrho_{j}}, y})$


$\hspace{78.75pt}~=~ \setuptime + \max\{({\varrho_{j}}-1) \cdot \setuptime + w^{\gamma_{x, y}^{j}}_{y, \hat{q}_{j-1,{\varrho_{j}}-1}} + {\textit
b\_time}^{\gamma_{x, y}^{j}}(\hat{q}_{j-1,{\varrho_{j}}-1}, \BB{\hat{q}_{j-1,{\varrho_{j}}-1}, y}),$

$\hspace{147.5pt}
({\varrho_{j}} \cdot \setuptime + w^{\gamma_{x, y}^{j}}_{y, \hat{q}_{j-1,{\varrho_{j}}}}  + {\textit
b\_time}^{\gamma_{x, y}^{j}}(\hat{q}_{j-1,{\varrho_{j}}}, \BB{\hat{q}_{j-1,{\varrho_{j}}}, y}))-\setuptime\}$

\vspace{6pt}
$\hspace{78.75pt}~=~ \setuptime + \max\{({\varrho_{j}}-1) \cdot \setuptime + w^{\gamma_{x, y}^{j-1}}_{y, \hat{q}_{j-1,{\varrho_{j}}-1}} + {\textit
b\_time}^{\gamma_{x, y}^{j-1}}(\hat{q}_{j-1,{\varrho_{j}}-1}, \BB{\hat{q}_{j-1,{\varrho_{j}}-1}, y}),$

$\hspace{147.5pt}
({\varrho_{j} \cdot \setuptime} + w^{\gamma_{x, y}^{j-1}}_{y, \hat{q}_{j-1,{\varrho_{j}}}}  + {\textit
b\_time}^{\gamma_{x, y}^{j-1}}(\hat{q}_{j-1,{\varrho_{j}}}, \BB{\hat{q}_{j-1,{\varrho_{j}}}, y}))-\setuptime\}$
\vspace{-8pt}
\begin{equation}
\hspace{-131pt}~\le~ \setuptime + {\textit b\_time}^{\gamma_{x,
y}^{j-1}}(y, T).
\label{equation:T-time}
\end{equation}
Thus, the lemma follows, as a consequence of (\ref{equation:Tx-time}) and (\ref{equation:T-time}).

Next we consider
$\tau_{j-1} \ge {\ell'}$. Suppose $\hat{q}_{j, \eta} = \mu$.
We have $\varrho_{j} \ge \eta$, because
for $1 \le k < \eta$,

\bigskip
$\hspace{111.5pt}  k \cdot \setuptime + w^{\gamma^{j}_{x,y}}_{y, \hat{q}_{j, k}}
+ {\textit b\_time}^{\gamma^{j}_{x,y}}(\hat{q}_{j, k}, \BB{\hat{q}_{j, k}, y})$

$\hspace{87.5pt}~=~ k \cdot \setuptime + w^{\gamma^{j}_{x,y}}_{y, q_{j, k}}
+ {\textit b\_time}^{\gamma^{j}_{x,y}}({q}_{j, k}, \BB{{q}_{j, k}, y})$


$\hspace{87.5pt}~\le~ \tau_{j} \cdot \setuptime + w^{\gamma^{j}_{x,y}}_{y, q_{j, \tau_{j}}}
+ {\textit b\_time}^{\gamma^{j}_{x,y}}({q}_{j, \tau_{j}}, \BB{{q}_{j, \tau_{j}}, y})$

$\hspace{87.5pt}~=~ \tau_{j} \cdot \setuptime + w^{\gamma^{j}_{x,y}}_{y, \hat{q}_{j, \tau_{j}}}
+ {\textit b\_time}^{\gamma^{j}_{x,y}}(\hat{q}_{j, \tau_{j}}, \BB{\hat{q}_{j, \tau_{j}}, y})$

$\hspace{87.5pt}~<~ \varrho_{j} \cdot \setuptime + w^{\gamma^{j}_{x,y}}_{y, \hat{q}_{j, \varrho_{j}}}
+ {\textit b\_time}^{\gamma^{j}_{x,y}}({\hat{q}}_{j, \varrho_{j}}, \BB{{\hat{q}}_{j, \varrho_{j}}, y})$.
\bigskip
\\Besides, $u_{j}$ (= $q_{j-1,{\ell'}}$) precedes $\mu$
in $\hat{Q}_{j-1}$, because
$w^{\gamma^{j-1}_{x,y}}_{y, \mu} +
{\textit b\_time}^{\gamma^{j-1}_{x,y}}(\mu, \OB{y, x}) <
w^{\gamma^{j-1}_{x,y}}_{y, q_{{j-1}, \tau_{j-1}}} + {\textit
b\_time}^{\gamma^{j-1}_{x,y}}(q_{{j-1}, \tau_{j-1}}, \BB{q_{{j-1}, \tau_{j-1}}, y})
\le w^{\gamma^{j-1}_{x,y}}_{y, q_{{j-1}, {\ell'}}} + {\textit
b\_time}^{\gamma^{j-1}_{x,y}}(q_{{j-1}, {\ell'}}, \BB{q_{{j-1}, {\ell'}}, y})$,
where the relation $<$ holds with
arguments similar to (\ref{equation:tau-front}).
So, we have $\hat{q}_{j, \varrho_{j}} = \hat{q}_{j-1,\varrho_{j}}$,
and hence

\bigskip
$\hspace{20pt}{\textit b\_time}^{\gamma_{x,
y}^{j}}(y, T) ~=~ \varrho_{j} \cdot \setuptime + w^{\gamma^{j}_{x,y}}_{y, \hat{q}_{j, \varrho_{j}}} +
{\textit b\_time}^{\gamma^{j}_{x,y}}(\hat{q}_{j, \varrho_{j}}, \BB{\hat{q}_{j, \varrho_{j}}, y})$

$\hspace{98.75pt}~=~\varrho_{j} \cdot \setuptime + w^{\gamma^{j}_{x,y}}_{y, \hat{q}_{j-1, \varrho_{j}}} +
{\textit b\_time}^{\gamma^{j}_{x,y}}(\hat{q}_{j-1, \varrho_{j}}, \BB{\hat{q}_{j-1, \varrho_{j}}, y})$

$\hspace{98.75pt}~=~ \varrho_{j} \cdot \setuptime + w^{\gamma^{j-1}_{x,y}}_{y, \hat{q}_{j-1, \varrho_{j}}} +
{\textit b\_time}^{\gamma^{j-1}_{x,y}}(\hat{q}_{j-1, \varrho_{j}}, \BB{\hat{q}_{j-1, \varrho_{j}}, y})$

$\hspace{98.75pt}~\le~ {\textit b\_time}^{\gamma_{x,
y}^{j-1}}(y, T)$.
\medskip

On the other hand, since
${\textit b\_time}^{\gamma_{x,
y}^{j}}(y, T) \ge {\textit b\_time}^{\gamma_{x, y}^{j-1}}(y, T)$,
we have ${\textit b\_time}^{\gamma_{x,
y}^{j}}(y, T) = {\textit b\_time}^{\gamma_{x, y}^{j-1}}(y, T)$.
The lemma then
follows, because ${\textit b\_time}^{\gamma_{x, y}^{j}}(y,
\BB{y, x}) \ge {\textit b\_time}^{\gamma_{x, y}^{j-1}}(y, \BB{y,
x})$.

\end{proof}
%
%
\begin{lemma} \label{lemma:worst-case-candidate}
If
$w^{\ddot{s}(x)}_{y, \mu} + {\textit
b\_time}^{\ddot{s}(x)}(\mu, \OB{y, x}) < w^{\ddot{s}(x)}_{y,
q_{0, \tau_{0}}} + {\textit b\_time}^{\ddot{s}(x)}(q_{0,\tau_{0}},
\BB{q_{0,\tau_{0}}, y})$, then for $0 \le j \le t^*$,
$\gamma_{x, y}^{j}$ is a worst-case
scenario of $T$ with respect to $x$. Besides, we have ${\textit
max\_r}(x) =  \dis_{x, y} \cdot \setuptime + \tilde{w}^{\gamma_{x, y}^{j}}_{x, y} +
{\textit b\_time}^{\gamma_{x, y}^{j}}(y, \BB{y, x}) - {\textit
b\_time}^{\gamma_{x, y}^{j}}(y, T)$.
\end{lemma}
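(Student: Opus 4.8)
The plan is to prove both assertions at once by a squeeze argument built on the three preparatory lemmas; notably, no induction on $j$ is needed. Fix $j$ with $0 \le j \le t^*$ and write $R(j) = \dis_{x, y} \cdot \setuptime + \tilde{w}^{\gamma_{x, y}^{j}}_{x, y} + {\textit b\_time}^{\gamma_{x, y}^{j}}(y, \BB{y, x}) - {\textit b\_time}^{\gamma_{x, y}^{j}}(y, T)$ for the quantity on the right-hand side of the asserted identity. The goal is to show $R(j) = {\textit max\_r}(x)$ and that $\gamma_{x, y}^{j}$ is a worst-case scenario of $T$ with respect to $x$. Note that the hypothesis of this lemma is precisely the hypothesis under which Lemma~\ref{lemma:critical-always-front-2} is available.

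First I would dispose of $j = 0$. Since $\gamma_{x, y}^{0} = \ddot{s}(x)$ and $y$ is a prime broadcast center of $T$ under $\ddot{s}(x)$ (Fact~\ref{fact:Tx-is-base-with-y-prime}), Lemma~\ref{lemma:Direct-to-center} gives ${\textit b\_time}^{\ddot{s}(x)}(x, T) = \dis_{x, y} \cdot \setuptime + \tilde{w}^{\ddot{s}(x)}_{x, y} + {\textit b\_time}^{\ddot{s}(x)}(y, \BB{y, x})$, so $r^{\ddot{s}(x)}_{x, y} = R(0)$. As $y \in {\textit B\_Ctr}^{\ddot{s}(x)}$, all members of ${\textit B\_Ctr}^{\ddot{s}(x)}$ share the same broadcast time, and $\ddot{s}(x)$ is a worst-case scenario with respect to $x$, it follows that $r^{\ddot{s}(x)}_{x, y} = {\textit max\_r}(x)$, hence $R(0) = {\textit max\_r}(x)$.

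Next I would prove the two bounds sandwiching $R(j)$. For the lower bound, observe that the edges on the $x$-to-$y$ path lie in none of the sets $\setof{(y, u_k)} + E(\BB{u_k, y})$, so $\tilde{w}^{\gamma_{x, y}^{j}}_{x, y} = \tilde{w}^{\ddot{s}(x)}_{x, y}$ does not depend on $j$; thus $R(j)$ equals a constant minus $\left( {\textit b\_time}^{\gamma_{x, y}^{j}}(y, T) - {\textit b\_time}^{\gamma_{x, y}^{j}}(y, \BB{y, x}) \right)$, which by Lemma~\ref{lemma:critical-always-front-2} is nonincreasing as $j$ runs through $0, 1, \ldots, t^*$. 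Hence $R(j)$ is nondecreasing there and $R(j) \ge R(0) = {\textit max\_r}(x)$. For the upper bound, Lemma~\ref{lemma:general-x-y} applied with $s = \gamma_{x, y}^{j}$ (and $x \ne y$, since $y$ is a prime broadcast center and $x$ is not) gives ${\textit b\_time}^{\gamma_{x, y}^{j}}(x, T) \ge \dis_{x, y} \cdot \setuptime + \tilde{w}^{\gamma_{x, y}^{j}}_{x, y} + {\textit b\_time}^{\gamma_{x, y}^{j}}(y, \BB{y, x})$; subtracting ${\textit b\_time}^{\gamma_{x, y}^{j}}(y, T)$ from both sides yields $r^{\gamma_{x, y}^{j}}_{x, y} \ge R(j)$, and since $\gamma_{x, y}^{j} \in C$, the definition of ${\textit max\_r}$ gives $r^{\gamma_{x, y}^{j}}_{x, y} \le {\textit max\_r}(x)$.

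Finally I would combine these: ${\textit max\_r}(x) \le R(j) \le r^{\gamma_{x, y}^{j}}_{x, y} \le {\textit max\_r}(x)$, so all inequalities are equalities; this gives $R(j) = {\textit max\_r}(x)$, the asserted formula, and also $r^{\gamma_{x, y}^{j}}_{x, y} = {\textit max\_r}(x)$. It remains to argue that $\gamma_{x, y}^{j}$ is a worst-case scenario, for which it is enough to check $y \in {\textit B\_Ctr}^{\gamma_{x, y}^{j}}$: if not, some $z$ would satisfy ${\textit b\_time}^{\gamma_{x, y}^{j}}(z, T) < {\textit b\_time}^{\gamma_{x, y}^{j}}(y, T)$, whence $r^{\gamma_{x, y}^{j}}_{x, z} > r^{\gamma_{x, y}^{j}}_{x, y} = {\textit max\_r}(x)$, contradicting the definition of ${\textit max\_r}$. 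So $y \in {\textit B\_Ctr}^{\gamma_{x, y}^{j}}$ and $\gamma_{x, y}^{j}$ is a worst-case scenario. The proof is essentially bookkeeping once Lemma~\ref{lemma:critical-always-front-2} is in hand; I expect the only point needing care is this last deduction, which upgrades ``${\textit max\_r}(x)$ is realized at the pair $(\gamma_{x, y}^{j}, y)$'' to ``$\gamma_{x, y}^{j}$ is a worst-case scenario'', while the genuine obstacle sits upstream in the monotonicity of Lemma~\ref{lemma:critical-always-front-2} (and, through it, Lemma~\ref{lemma:critical-always-front}).
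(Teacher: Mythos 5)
Your proof is correct and uses the same three ingredients (Lemma~\ref{lemma:Direct-to-center} for $j=0$, Lemma~\ref{lemma:general-x-y} for $r^{\gamma^j_{x,y}}_{x,y}\ge R(j)$, and Lemma~\ref{lemma:critical-always-front-2} for the monotonicity) as the paper's own proof; the only difference is that the paper phrases the argument as an induction on $j$ (whose inductive step contains exactly your inequality $R(j)\ge R(j-1)$, followed implicitly by the same squeeze against ${\textit max\_r}(x)$), whereas you unroll it into a direct sandwich ${\textit max\_r}(x)=R(0)\le R(j)\le r^{\gamma^j_{x,y}}_{x,y}\le {\textit max\_r}(x)$. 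This is a cosmetic reorganization, not a genuinely different route. One small remark: the final paragraph re-deriving $y\in{\textit B\_Ctr}^{\gamma^j_{x,y}}$ is unnecessary, since the paper's definition of a worst-case scenario already records that whenever ${\textit max\_r}(x)=r^{s'}_{x,y'}$ the vertex $y'$ is automatically a broadcast center under $s'$, so $r^{\gamma^j_{x,y}}_{x,y}={\textit max\_r}(x)$ is by itself enough to conclude that $\gamma^j_{x,y}$ is a worst-case scenario.
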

\begin{proof}
This lemma can be proved by induction on $j$.
Recall that we have $\gamma_{x, y}^{0} = \ddot{s}(x)$, where
$y$ is a prime
broadcast center of $T$ under $\ddot{s}(x)$.
The lemma holds for $j=0$,
as a consequence of
Lemma~\ref{lemma:Direct-to-center}.
Suppose that the lemma
holds for $j=k \ge 0$. When $j = k+1$, we have

$\hspace{20pt}r^{\gamma_{x, y}^{k+1}}_{x, y}  ~=~ {\textit b\_time}^{\gamma_{x, y}^{k+1}}(x, T) - {\textit b\_time}^{\gamma_{x, y}^{k+1}}(y, T)$

$\hspace{44.5pt}~\ge~ \dis_{x, y} \cdot \setuptime + \tilde{w}^{\gamma_{x, y}^{k+1}}_{x, y} + {\textit b\_time}^{\gamma_{x, y}^{k+1}}(y, \BB{y,x}) - {\textit b\_time}^{\gamma_{x, y}^{k+1}}(y, T)$ ~~(by Lemma~\ref{lemma:general-x-y})



$\hspace{44.5pt}~\ge~ \dis_{x, y} \cdot \setuptime + \tilde{w}^{\gamma_{x, y}^{k+1}}_{x, y} + {\textit b\_time}^{\gamma_{x, y}^{k}}(y, \BB{y,x}) - {\textit b\_time}^{\gamma_{x, y}^{k}}(y, T)$ ~~(by
Lemma~\ref{lemma:critical-always-front-2})

$\hspace{44.5pt}~=~ \dis_{x, y} \cdot \setuptime + \tilde{w}^{\gamma_{x, y}^{k}}_{x, y} + {\textit b\_time}^{\gamma_{x, y}^{k}}(y, \BB{y,x}) - {\textit b\_time}^{\gamma_{x, y}^{k}}(y, T)$

$\hspace{44.5pt}~=~ {\textit max\_r}(x)$ ~~(by the induction hypothesis).
\end{proof}

\bigskip

Now it is ready to prove Fact~\ref{fact:critical-before}.
First observe that Fact~\ref{fact:critical-before} and Lemma~\ref{lemma:worst-case-candidate} have the same sufficient condition, but different necessary conditions.
In the following, we show that the necessary condition (i.e., $\beta^{t^*}_{x,y}$ is a worst-case scenario) of Fact~\ref{fact:critical-before} is derivable from the necessary conditions (i.e., $\gamma^{t^*}_{x,y}$ is a worst-case scenario and ${\textit
max\_r}(x) = \dis_{x, y} \cdot \setuptime + \tilde{w}^{\gamma_{x, y}^{t^*}}_{x, y} +
{\textit b\_time}^{\gamma_{x, y}^{t^*}}(y, \BB{y, x}) - {\textit
b\_time}^{\gamma_{x, y}^{t^*}}(y, T)$) of Lemma~\ref{lemma:worst-case-candidate}. It suffices to show
$r^{\beta^{t^*}_{x,y}}_{x,y} \ge {\textit max\_r}(x)$, in order to show
$\beta^{t^*}_{x,y}$ a worst-case scenario.

Let $Q'' = ({q''_1,q''_2, \ldots, q''_h})$ be an arrangement of
$u_1, u_2, \ldots, u_h$, which has
$q''_k = u_k$ ($=q_{t^*, k}$) for $1 \le k \le t^*$ and
$w^{\beta_{x, y}^{t^*}}_{y,
q''_{k}} + {\textit b\_time}^{\beta_{x, y}^{t^*}}(q''_{k},
\BB{q''_{k}, y})$ nonincreasing as $k$ increases from $t^* + 1$ to
$h$. Since $\gamma_{x,y}^{t^*}$ and $\beta_{x,y}^{t^*}$
differ in the weights of edges in $\bigcup_{{t^*} < k \le h}(\setof{(y, u_{k})} + E(\BB{u_{k}, y}))$,
$w^{\beta_{x, y}^{t^*}}_{y,
q''_{k}} + {\textit b\_time}^{\beta_{x, y}^{t^*}}(q''_{k},
\BB{q''_{k}, y})$ is also nonincreasing as $k$ increases from $1$ to
$t^*$ (refer to Figure~\ref{figure:gamma_scenario}).
Actually, it is nonincreasing as $k$ increases from $1$ to
$h$, because $w^{\beta_{x, y}^{t^*}}_{y,
q''_{t^*}} + {\textit b\_time}^{\beta_{x, y}^{t^*}}(q''_{t^*},
\BB{q''_{t^*}, y}) \ge w^{\beta_{x, y}^{t^*}}_{y,
q''_{t^*+1}} + {\textit b\_time}^{\beta_{x, y}^{t^*}}(q''_{t^*+1},
\BB{q''_{t^*+1}, y})$, as explained below.

Suppose $q''_{t^*+1} = q_{t^*, p}$, where $t^* < p \le h$. We have

\medskip
$\hspace{76.5pt}w^{\beta_{x, y}^{t^*}}_{y,
q''_{t^*+1}} + {\textit b\_time}^{\beta_{x, y}^{t^*}}(q''_{t^*+1},
\BB{q''_{t^*+1}, y})$

$\hspace{52.5pt}~\le~ w^{\gamma_{x, y}^{t^*}}_{y,
q_{t^*, p}} + {\textit b\_time}^{\gamma_{x, y}^{t^*}}(q_{t^*, p},
\BB{q_{t^*, p}, y})$ \hspace{10pt}(refer to Figure~\ref{figure:gamma_scenario})


$\hspace{52.5pt}~\le~ w^{\gamma_{x, y}^{t^*}}_{y,
q_{t^*, t^*}} + {\textit b\_time}^{\gamma_{x, y}^{t^*}}(q_{t^*, t^*},
\BB{q_{t^*, t^*}, y})$

$\hspace{52.5pt}~=~ w^{\beta_{x, y}^{t^*}}_{y,
q''_{t^*}} + {\textit b\_time}^{\beta_{x, y}^{t^*}}(q''_{t^*},
\BB{q''_{t^*}, y})$.
\medskip

Then, according to Lemma~\ref{lemma:optimal-sequence},
$Q''$ is an optimal sequence for $y$ to broadcast a message to
$N_{\BB{y, x}}(y)$ under $\beta_{x, y}^{t^*}$.
So, we have

\medskip
$\hspace{20pt}{\textit b\_time}^{\beta^{t^*}_{x, y}}(y, \BB{y, x})  ~\ge~  t^* \cdot \setuptime + w^{\beta^{t^*}_{x, y}}_{y, q''_{t^*}} + {\textit b\_time}^{\beta^{t^*}_{x, y}}(q''_{t^*}, \BB{q''_{t^*}, y})$



$\hspace{111pt}~=~ t^* \cdot \setuptime + w^{\gamma_{x,y}^{t^*}}_{y, q_{t^*,t^*}} + {\textit b\_time}^{\gamma_{x,y}^{t^*}}(q_{t^*,t^*}, \BB{q_{t^*,t^*}, y})$

$\hspace{111pt}~=~ {\textit b\_time}^{\gamma_{x,y}^{t^*}}(y, \BB{y,x})$.
\medskip

On the other hand, since ${\textit
b\_time}^{\beta^{t^*}_{x, y}}(y, \BB{y, x}) \le
{\textit b\_time}^{\gamma_{x,y}^{t^*}}(y, \BB{y,
x})$, we have \linebreak
${\textit b\_time}^{\beta^{t^*}_{x, y}}(y,
\BB{y, x}) = {\textit
b\_time}^{\gamma_{x,y}^{t^*}}(y, \BB{y, x})$.
Further, since $\tilde{w}^{\beta^{t^*}_{x, y}}_{x, y} = \tilde{w}^{\gamma_{x,y}^{t^*}}_{x, y}$ and ${\textit
b\_time}^{\beta^{t^*}_{x, y}}(y, T) \le
{\textit b\_time}^{\gamma_{x,y}^{t^*}}(y, T)$
(refer to Figure~\ref{figure:gamma_scenario}), we have

\medskip
$\hspace{20pt}r^{\beta^{t^*}_{x, y}}_{x,y} ~=~ {\textit b\_time}^{\beta^{t^*}_{x, y}}(x, T) - {\textit b\_time}^{\beta^{t^*}_{x, y}}(y, T)$

$\hspace{41.5pt}~\ge~ \dis_{x, y} \cdot \setuptime + \tilde{w}^{\beta^{t^*}_{x, y}}_{x, y} + {\textit b\_time}^{\beta^{t^*}_{x, y}}(y, \BB{y, x}) - {\textit b\_time}^{\beta^{t^*}_{x, y}}(y, T)$ ~~(by Lemma~\ref{lemma:general-x-y})

$\hspace{41.5pt}~=~ \dis_{x, y} \cdot \setuptime + \tilde{w}^{\gamma_{x,y}^{t^*}}_{x, y} + {\textit b\_time}^{\gamma_{x,y}^{t^*}}(y, \BB{y, x}) - {\textit b\_time}^{\beta_{x,y}^{t^*}}(y, T)$

$\hspace{41.5pt}~\ge~ \dis_{x, y} \cdot \setuptime + \tilde{w}^{\gamma_{x,y}^{t^*}}_{x, y} + {\textit b\_time}^{\gamma_{x,y}^{t^*}}(y, \BB{y, x}) - {\textit b\_time}^{\gamma_{x,y}^{t^*}}(y, T)$
\vspace{-8pt}
\begin{equation}
\hspace{-173pt}~=~ {\textit max\_r}(x) \textrm{~~(by Lemma~\ref{lemma:worst-case-candidate}).}
\label{equation:fact4}
\end{equation}


\section{Time complexity}
\label{section:time_complexity}
It is known that given a scenario $s \in C$ and a vertex $v\in V(T)$, ${\textit B\_Ctr}^{s}$ and ${\textit b\_time}^{s}(v, T)$ can be determined in $O(n)$ time ~\cite{Su2016}. Since there are at most $n-1$ scenarios in
$\Psi_1(x)\cup\Psi_2(x)\cup\ldots\cup\Psi_n(x)$, it takes $O(n^2)$ time for Algorithm~\ref{algorithm:worst-case} to find $\ddot{s}(x)$. In this section, we show that the time complexity of Algorithm~\ref{algorithm:worst-case} can be reduced to $O(n\log\log n)$, provided an $O(n \log n)$ time preprocessing is made.
Consequently, Algorithm~\ref{algorithm:mrbc} requires $O(n \log n) + O(n\log\log n) \cdot O(\log n) = O(n \log n \log\log n)$ time. We assume below that $x$ is not a prime broadcast center of $T$ under $\ddot{s}(x)$, i.e., $\ddot{s}(x) \in
\Psi_1(x)\cup\Psi_2(x)\cup\ldots\cup\Psi_{n-1}(x)$ (refer to the first paragraph of Section~\ref{section:correctness}). The following lemma is useful to find $\ddot{s}(x)$.

\begin{lemma} \label{lemma:transform-problem-to-broadcasting-time}
Suppose $x \in V(T)$. If $\dis_{x,y} \cdot \setuptime + \tilde{w}^{\beta^{j}_{x, y}}_{x,y} +
{\textit b\_time}^{\beta^{j}_{x, y}}(y, \BB{y,x}) - {\textit
b\_time}^{\beta^{j}_{x, y}}(y, T)$ has maximum value as $y = y'$ and $j = j'$, then we have
$\beta^{j'}_{x, y'} = \ddot{s}(x)$.
\end{lemma}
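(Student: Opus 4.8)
The plan is to squeeze the quantity in the statement between two copies of ${\textit max\_r}(x)$. For a scenario $s$ of $T$ and a vertex $y \ne x$, abbreviate $\Phi^s(y) = \dis_{x,y}\cdot\setuptime + \tilde{w}^s_{x,y} + {\textit b\_time}^s(y,\BB{y,x}) - {\textit b\_time}^s(y,T)$, so the lemma concerns the pair $(y',j')$ maximizing $\Phi^{\beta^{j}_{x,y}}(y)$ over all admissible $(y,j)$; under the standing assumption of this section that $x$ is not a prime broadcast center under any worst-case scenario, Section~\ref{section:correctness} guarantees $\ddot{s}(x) \in \bigcup_{1\le \ell < n}\Psi_\ell(x)$, so this maximization is over a nonempty set. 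First I would record the easy upper bound: by Lemma~\ref{lemma:general-x-y}, ${\textit b\_time}^s(x,T) \ge \dis_{x,y}\cdot\setuptime + \tilde{w}^s_{x,y} + {\textit b\_time}^s(y,\BB{y,x})$ for every scenario $s$ and every $y\ne x$, hence $\Phi^s(y) \le {\textit b\_time}^s(x,T) - {\textit b\_time}^s(y,T) = r^s_{x,y} \le {\textit max\_r}(x)$. In particular $\Phi^{\beta^{j'}_{x,y'}}(y') \le r^{\beta^{j'}_{x,y'}}_{x,y'} \le {\textit max\_r}(x)$, so the maximum value is at most ${\textit max\_r}(x)$.

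Next I would produce the matching lower bound from the correctness analysis of Section~\ref{section:correctness}. Its transformation chain Fact~\ref{fact:Tx-is-base} $\to$ Fact~\ref{fact:Tx-is-base-with-y-prime} $\to$ (Fact~\ref{fact:critical-after} or Fact~\ref{fact:critical-before}) produces a worst-case scenario with respect to $x$ that \emph{is} some $\beta^{j_0}_{x,y_0}$, with $y_0$ a prime broadcast center of $T$ under the scenario $\ddot{s}(x)$ inherited from Fact~\ref{fact:Tx-is-base-with-y-prime}; since $y_0 \in {\textit B\_Ctr}^{\ddot{s}(x)}$ by Lemma~\ref{lemma:prime-is-the-center-of-star} and all broadcast centers share the minimum broadcast time, $r^{\ddot{s}(x)}_{x,y_0} = {\textit max\_r}(x)$. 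Reading whichever of the inequality chains in the proofs of Fact~\ref{fact:critical-after} and Fact~\ref{fact:critical-before} applies, the line produced by the first application of Lemma~\ref{lemma:general-x-y} is exactly $\Phi^{\beta^{j_0}_{x,y_0}}(y_0)$, and the chain continues to bound \emph{that} line below by ${\textit max\_r}(x)$ — so these proofs in fact establish $\Phi^{\beta^{j_0}_{x,y_0}}(y_0) \ge {\textit max\_r}(x)$ (which is precisely why they are routed through $\Phi$ rather than through $r_{x,y}$ directly). Combining with the upper bound forces $\Phi^{\beta^{j_0}_{x,y_0}}(y_0) = {\textit max\_r}(x)$, so the maximum in the lemma equals ${\textit max\_r}(x)$ and hence $\Phi^{\beta^{j'}_{x,y'}}(y') = {\textit max\_r}(x)$.

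Finally I would read off the conclusion. Feeding $\Phi^{\beta^{j'}_{x,y'}}(y') = {\textit max\_r}(x)$ back into the squeeze above collapses all its inequalities into equalities, so $r^{\beta^{j'}_{x,y'}}_{x,y'} = {\textit max\_r}(x)$. To see $y' \in {\textit B\_Ctr}^{\beta^{j'}_{x,y'}}$, take any $\kappa \in {\textit B\_Ctr}^{\beta^{j'}_{x,y'}}$; then ${\textit b\_time}^{\beta^{j'}_{x,y'}}(\kappa,T) \le {\textit b\_time}^{\beta^{j'}_{x,y'}}(y',T)$, whence $r^{\beta^{j'}_{x,y'}}_{x,\kappa} \ge r^{\beta^{j'}_{x,y'}}_{x,y'} = {\textit max\_r}(x) \ge r^{\beta^{j'}_{x,y'}}_{x,\kappa}$, forcing ${\textit b\_time}^{\beta^{j'}_{x,y'}}(y',T) = {\textit b\_time}^{\beta^{j'}_{x,y'}}(\kappa,T)$ and so $y' \in {\textit B\_Ctr}^{\beta^{j'}_{x,y'}}$. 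By the definition of a worst-case scenario, $\beta^{j'}_{x,y'}$ is therefore a worst-case scenario of $T$ with respect to $x$, i.e. $\beta^{j'}_{x,y'} = \ddot{s}(x)$.

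The main obstacle is the middle step, specifically the bookkeeping it demands: one must be sure that the transformations of Section~\ref{section:correctness} really terminate in a scenario of the exact form $\beta^{j}_{x,y}$ (not merely one agreeing with such a scenario on a subset of edges), and that the inequalities proved there bound $\Phi$ — rather than only the relative regret $r_{x,y}$ — below by ${\textit max\_r}(x)$. Once that is in hand, the remaining two steps are just the elementary squeeze argument, relying only on Lemma~\ref{lemma:general-x-y} together with the definitions of ${\textit B\_Ctr}^{s}$ and of a worst-case scenario.
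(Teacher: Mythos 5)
Your proposal is correct and takes essentially the same approach as the paper's: the paper also first shows (via the inequality chains in the proofs of Facts~\ref{fact:critical-after} and~\ref{fact:critical-before}) that some $\beta^{j_0}_{x,y_0}$ realizes the quantity $\Phi$ equal to ${\textit max\_r}(x)$, and then applies Lemma~\ref{lemma:general-x-y} to the maximizer $(y',j')$ to force $r^{\beta^{j'}_{x,y'}}_{x,y'} \ge {\textit max\_r}(x)$. Your explicit upper-bound/lower-bound ``squeeze'' and the direct verification that $y'\in{\textit B\_Ctr}^{\beta^{j'}_{x,y'}}$ are a mild repackaging of the same steps (the latter is already built into the paper's definition of a worst-case scenario).
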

\begin{proof}
The notations $\mu, q_{0, \tau_0}, h$,
and $t^*$ used in this proof inherit from
Fact~\ref{fact:critical-after} and
Fact~\ref{fact:critical-before}. First we claim that
there exists $\beta^{j}_{x, y}$, satisfying $\beta^{j}_{x, y} = \ddot{s}(x)$ and
${\textit max\_r}(x) = \dis_{x, y} \cdot \setuptime + \tilde{w}^{\beta^{j}_{x, y}}_{x, y} +
{\textit b\_time}^{\beta^{j}_{x, y}}(y, \BB{y, x}) - {\textit b\_time}^{\beta^{j}_{x, y}}(y, T)$, as explained below. If $w^{\ddot{s}(x)}_{y, \mu} +
{\textit b\_time}^{\ddot{s}(x)}(\mu, \OB{y, x}) \ge
w^{\ddot{s}(x)}_{y, q_{0, \tau_0}} + {\textit
b\_time}^{\ddot{s}(x)}(q_{0, \tau_0}, \BB{q_{0, \tau_0},
y})$,
then
we have
$r^{\alpha_{x, y}}_{x,y} \ge \dis_{x, y} \cdot \setuptime + \tilde{w}^{\alpha_{x, y}}_{x, y} + {\textit b\_time}^{\alpha_{x, y}}(y, \BB{y, x}) - {\textit b\_time}^{\alpha_{x, y}}(y, T) \ge r^{\ddot{s}(x)}_{x,y}$,
as a consequence of (\ref{equation:fact3-case1}) and (\ref{equation:regret-xy}).
Also, since $r^{\ddot{s}(x)}_{x,y} = {\textit max\_r}(x) \ge r^{\alpha_{x, y}}_{x,y}$,
we have ${\textit max\_r}(x) = r^{\alpha_{x, y}}_{x,y} = \dis_{x, y} \cdot \setuptime + \tilde{w}^{\alpha_{x, y}}_{x, y} + {\textit b\_time}^{\alpha_{x, y}}(y, \BB{y, x}) - {\textit b\_time}^{\alpha_{x, y}}(y, T)$ (recall that $\alpha_{x, y} = \beta^h_{x,y}$). If $w^{\ddot{s}(x)}_{y, \mu} +
{\textit b\_time}^{\ddot{s}(x)}(\mu, \OB{y, x}) <
w^{\ddot{s}(x)}_{y, q_{0, \tau_0}} + {\textit
b\_time}^{\ddot{s}(x)}(q_{0, \tau_0}, \BB{q_{0, \tau_0},
y})$, then
we have ${\textit max\_r}(x) = r^{\beta^{t^*}_{x, y}}_{x,y} =
\dis_{x, y} \cdot \setuptime + \tilde{w}^{\beta^{t^*}_{x, y}}_{x, y} +
{\textit b\_time}^{\beta^{t^*}_{x, y}}(y, \BB{y, x}) - {\textit b\_time}^{\beta^{t^*}_{x, y}}(y, T)$, similarly,
as a consequence of (\ref{equation:fact4}).

It follows that we have

\medskip
$\hspace{20pt}r^{\beta^{j'}_{x, y'}}_{x,y'} ~=~ {\textit b\_time}^{\beta^{j'}_{x, y'}}(x, T) - {\textit b\_time}^{\beta^{j'}_{x, y'}}(y', T)$

$\hspace{44pt}~\ge~ \dis_{x, y'} \cdot \setuptime + \tilde{w}^{\beta^{j'}_{x, y'}}_{x, y'} + {\textit b\_time}^{\beta^{j'}_{x, y'}}(y', \BB{y', x}) - {\textit b\_time}^{\beta^{j'}_{x, y'}}(y', T)$ ~~(by Lemma~\ref{lemma:general-x-y})

$\hspace{44pt}~\ge~ \dis_{x, y} \cdot \setuptime + \tilde{w}^{\beta_{x,y}^{j}}_{x, y} + {\textit b\_time}^{\beta_{x,y}^{j}}(y, \BB{y, x}) - {\textit b\_time}^{\beta_{x,y}^{j}}(y, T)$

$\hspace{44pt}~=~ {\textit max\_r}(x)$,

\noindent which implies $\beta^{j'}_{x, y'} =
\ddot{s}(x)$.
\end{proof}

\medskip

With Lemma~\ref{lemma:transform-problem-to-broadcasting-time}, it suffices
to find a scenario $\beta^{j}_{x, v_i}$ that can maximize the
value of $\dis_{x,v_i} \cdot \setuptime + \tilde{w}^{\beta^{j}_{x, v_i}}_{x,v_i} +
{\textit b\_time}^{\beta^{j}_{x, v_i}}(v_i, \BB{v_i,x}) - {\textit
b\_time}^{\beta^{j}_{x, v_i}}(v_i, T)$, in order to determine $\ddot{s}(x)$. By a depth-first
traversal of $T$, $\dis_{x,v_i}$ and $\tilde{w}^{\alpha_{x,
v_i}}_{x,v_i}$ (=$\tilde{w}^{\beta^{j}_{x, v_i}}_{x,v_i}$ for all $1 \le j \le h_i$) for all $1 \le i < n$ can be determined in $O(n)$ time. On the other hand, as shown below, ${\textit
b\_time}^{\beta^{j}_{x, v_i}}(v_i, \BB{v_i,x})$ (and ${\textit
b\_time}^{\beta^{j}_{x, v_i}}(v_i, T)$ similarly) for all $1 \le i < n$ and $1 \le j \le h_i$ can be determined in $O(n \log\log n)$ time. Therefore, $\ddot{s}(x)$ can be determined in $O(n \log\log n)$ time.

We suppose $N_{\BB{v_i, x}}(v_i) = \{u_{i,1}, u_{i,2}, \ldots, u_{i,h_i}\}$ in the rest of this section.
The following fact will be shown in Section~\ref{subsection:preprocessing}.

\begin{fact} \label{fact:preprocessing-broadcast-time}
With an $O(n\log n)$-time preprocessing,
${\textit b\_time}^{\beta^{j}_{x,
v_i}}(u_{i,k}, \BB{u_{i,k}, v_i})$ can be determined in constant time, where
$1 \le j \le h_i$ and $1 \le k \le h_i$.
\end{fact}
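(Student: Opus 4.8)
\emph{Proof plan.}
The plan is to show that, although Fact~\ref{fact:preprocessing-broadcast-time} concerns up to $\Theta(n^2)$ triples $(i,j,k)$, the quantity ${\textit b\_time}^{\beta^{j}_{x,v_i}}(u_{i,k}, \BB{u_{i,k}, v_i})$ is always one of just two numbers, neither of which depends on $j$ (indeed both depend only on the subtree $\BB{u_{i,k}, v_i}$): the minimum broadcast time from $u_{i,k}$ over $\BB{u_{i,k}, v_i}$ when every edge of that branch is set to its upper endpoint, and the same quantity when every edge is set to its lower endpoint. I would first prove this dichotomy, then precompute the two values for all $n-1$ branches in $O(n\log n)$ time, so that answering a query reduces to a table lookup.

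For the dichotomy, recall that $(v_i, u_{i,k}) \in E(T)$, hence $\BB{u_{i,k}, v_i} = \OB{v_i, u_{i,k}}$ is the component of $T - v_i$ containing $u_{i,k}$; in particular $\BB{u_{i,k}, v_i}$ is a subtree of $\BB{v_i, x}$, so all of its edges carry weight $w^+$ under $\alpha_{x, v_i}$. The key point is that the sets $\setof{(v_i, u_{i,1})} + E(\BB{u_{i,1}, v_i}), \ldots, \setof{(v_i, u_{i,h_i})} + E(\BB{u_{i,h_i}, v_i})$ are pairwise disjoint (they come from distinct components of $T - v_i$), so when $\beta^{j}_{x, v_i}$ is formed from $\alpha_{x, v_i}$ the edges inside $\BB{u_{i,k}, v_i}$ are \emph{uniformly} left at $w^+$ when $k \le j$ and \emph{uniformly} reset to $w^-$ when $k > j$. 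Since ${\textit b\_time}^{s}(u_{i,k}, \BB{u_{i,k}, v_i})$ depends only on the weights of the edges of $\BB{u_{i,k}, v_i}$, this yields precisely the two claimed cases.

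The preprocessing is a single bottom-up computation after rooting $T$ at $x$, so that each branch $\BB{u_{i,k}, v_i}$ becomes the rooted subtree $T_{u_{i,k}}$. At a vertex $u$ with children $c_1, \ldots, c_d$, Lemma~\ref{lemma:optimal-sequence} expresses the all-$w^+$ broadcast time of $T_u$ as $\max_{1 \le m \le d}\{m \cdot \setuptime + \sigma_m\}$, where $\sigma_1 \ge \cdots \ge \sigma_d$ is the nonincreasing rearrangement of the numbers $w^+_{u, c_\ell} + (\text{all-}w^+\text{ broadcast time of } T_{c_\ell})$; this costs one sort of $u$'s children plus $O(d)$ arithmetic, and the all-$w^-$ time is obtained symmetrically. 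Summing over all vertices gives $\sum_u O(d_u \log d_u) = O(n \log n)$ time and $O(n)$ space. Moreover, since the edge $(v_i, u_{i,k})$ and the entire branch $\BB{u_{i,k}, v_i}$ lie in $\BB{v_i, x}$, we have $w^{\alpha_{x,v_i}}_{v_i, u_{i,k}} = w^+_{v_i, u_{i,k}}$ and ${\textit b\_time}^{\alpha_{x,v_i}}(u_{i,k}, \BB{u_{i,k}, v_i})$ equal to the all-$w^+$ broadcast time of $T_{u_{i,k}}$, so the nonincreasing ordering of $v_i$'s children computed in the all-$w^+$ pass can be taken as the ordering $u_{i,1}, \ldots, u_{i,h_i}$ used by Algorithm~\ref{algorithm:worst-case}; I would store this ordering together with the two extremal values at each vertex. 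A query $(i,j,k)$ then returns, in $O(1)$ time, the stored all-$w^+$ value of $T_{u_{i,k}}$ if $k \le j$, and its all-$w^-$ value if $k > j$.

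The main obstacle is the careful bookkeeping behind the dichotomy: one must unwind the two-level definition of $\beta^{j}_{x, v_i}$ (built from $\alpha_{x, v_i}$, which in turn reads off the interval endpoints) and invoke the edge-disjointness of the branches of $T - v_i$ to be certain that no edge of $\BB{u_{i,k}, v_i}$ is affected by a modification belonging to some other branch. Once that is settled, the $O(n\log n)$ bound is just the total cost of the child sortings in the single bottom-up pass, and the constant-time query is immediate.
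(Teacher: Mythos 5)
Your dichotomy argument is correct, and it coincides with the paper's observation: $\BB{u_{i,k}, v_i}$ is the component of $T - v_i$ containing $u_{i,k}$; it lies inside $\BB{v_i, x}$, so under $\alpha_{x,v_i}$ all of its edges are at $w^+$, and $\beta^j_{x,v_i}$ resets precisely the whole branch to $w^-$ when $k > j$, by edge-disjointness of the branches. So ${\textit b\_time}^{\beta^{j}_{x,v_i}}(u_{i,k}, \BB{u_{i,k}, v_i})$ is the all-$w^+$ value of $\BB{u_{i,k}, v_i}$ if $k \le j$ and the all-$w^-$ value otherwise — exactly the reduction the paper uses.

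The gap is in the preprocessing. You root $T$ at $x$ and do a single bottom-up pass, which produces ${\textit b\_time}^{s^\pm}(u, \BB{u, v})$ only for the directed edges $(u,v)$ with $v$ the parent of $u$ under that particular root. But the quantity that Fact~\ref{fact:preprocessing-broadcast-time} is feeding (Algorithm~\ref{algorithm:worst-case}, hence Algorithm~\ref{algorithm:mrbc}) reuses a \emph{single} preprocessing across all $O(\log n)$ iterations of the outer loop, in each of which $x$ is a different centroid. This is what makes the time analysis $O(n\log n) + O(\log n)\cdot O(n\log\log n) = O(n\log n\log\log n)$ rather than $O(\log n)\cdot O(n\log n) = O(n\log^2 n)$. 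Since different centroids induce different parent/child orientations, the set of directed edges you would need varies with $x$, and your $x$-rooted pass does not cover them all. The paper's proof therefore computes ${\textit b\_time}^{s^+}(u, \BB{u,v})$ and ${\textit b\_time}^{s^-}(u, \BB{u,v})$ for \emph{every} directed edge $(u,v) \in E(T)$: it first obtains one orientation for free in $O(n)$ time (rooted at a broadcast center $\kappa$, via the linear-time routine of~\cite{Su2016}), and then obtains the opposite orientation by a layered re-rooting pass outward from $\kappa$, handling the child-removal/parent-addition bookkeeping with buckets and the auxiliary quantities $\lambda(\cdot)$, $\pi^-(\cdot)$, $\pi^+(\cdot)$ to keep the total at $O(n\log n)$. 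That second, re-rooting half is the substance of the paper's argument, and it is what your proposal is missing; a bottom-up pass alone would have to be redone whenever $x$ changes, breaking the stated overall bound.
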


With Fact~\ref{fact:preprocessing-broadcast-time}, ${\textit b\_time}^{\beta^{j}_{x,
v_i}}(v_i, \BB{v_i,x})$ can be determined in additional $O(h_i)$ time, as described below.
Suppose $s \in C$, and let $\Buck[1]$, $\Buck[2]$, \ldots, $\Buck[h_i]$ be $h_i$ buckets. We assume, without loss of generality, that $u_{i,1}$ can maximize the value of $w^{s}_{v_i, u_{i,k}} + {\textit
b\_time}^{s}(u_{i,k}, \BB{u_{i,k}, v_i})$. Like~\cite{Su2016}, we insert $u_{i, k}$ into $\Buck[\ell]$, if $\ell - 1 \le (w^{s}_{v_i, u_{i,1}} +
{\textit
b\_time}^{s}(u_{i,1}, \BB{u_{i,1}, v_i})) - (w^{s}_{v_i, u_{i,k}}
+ {\textit b\_time}^{s}(u_{i,k}, \BB{u_{i,k}, v_i})) < \ell$, where $1 \le k \le h_i$ and $1 \le \ell \le h_i$.
Define $\Acc(\ell)
= \sum_{1 \le r \le \ell}|\Buck[r]|$, which is the total number of vertices contained in $\Buck[1],  \Buck[2], \ldots, \Buck[\ell]$, and $\Minv(\ell) =\min\{w^{s}_{v_i, u_{i,k}} + {\textit b\_time}^{s}(u_{i,k},
\BB{u_{i,k}, v_i}) \mid u_{i, k} \in \Buck[\ell]\}$.
Previously, $\Acc(\ell)$ and $\Minv(\ell)$ were successfully
used in \cite{Su2016} to
reduce the time requirement for determining
an optimal broadcast sequence.

Suppose $u_{i,k}\in\Buck[\ell]$,
satisfying $\Minv(\ell) =w^{s}_{v_i, u_{i,k}} + {\textit b\_time}^{s}(u_{i,k},
\BB{u_{i,k}, v_i})$, where $1 \le \ell \le h_i$.
Then, as a consequence of Lemma~\ref{lemma:optimal-sequence},
$u_{i,k}$ is the $\Acc(\ell)$-th vertex in some optimal sequence for $v_i$ to broadcast a message to $N_{\BB{v_i,x}}(v_i)$.
Also, for any other $u_{i,k'} \in \Buck[\ell]$,
we have $u_{i,k'}$ preceding $u_{i,k}$
in the optimal sequence, implying that
the broadcast time (i.e., $\Acc(\ell) \cdot \setuptime + w^{s}_{v_i, u_{i,k}} + {\textit b\_time}^{s}(u_{i,k},
\BB{u_{i,k}, v_i})$) induced by $\BB{u_{i,k}, v_i}$ is at least as much as the broadcast time
induced by $\BB{u_{i,k'}, v_i}$.
Thus, we have ${\textit b\_time}^{s}(v_i, \BB{v_i,x})=\max\{\Minv(\ell) +
\Acc(\ell) \cdot \setuptime \mid 1 \le \ell \le h_i\}$, which can be computed
in $O(h_i)$ time.

It follows that ${\textit b\_time}^{\beta^{j}_{x, v_i}}(v_i,
\BB{v_i,x})$ for all $1\le j \le h_i$ can be determined in additional $O(h_i^2)$ time. As elaborated below, the time complexity can be reduced to $O(h_i \log\log h_i)$ (hence, it takes $O(n \log\log n)$ time to determine ${\textit b\_time}^{\beta^{j}_{x, v_i}}(v_i,
\BB{v_i,x})$ for all $1 \le i < n$ and $1\le j \le h_i$).
The following fact will be shown in Section~\ref{subsection:preprocessing} as well.

\begin{fact}
\label{fact:preprocessing-increasing-order}
With an $O(n\log n)$-time preprocessing, a vertex ordering $(u_{i,1}, u_{i,2}, \ldots, u_{i,h_i})$ of $N_{\BB{v_i, x}}(v_i)$ can be determined in $O(h_i)$ time such that
$w^{\beta^{h_i}_{x, v_i}}_{v_i, u_{i,k}} + {\textit
b\_time}^{\beta^{h_i}_{x, v_i}}(u_{i,k}, \BB{u_{i,k}, v_i})$ is nonincreasing as $k$ increases from $1$ to $h_i$.
\end{fact}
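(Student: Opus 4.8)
First I would root $T$ at $x$, so that for each $v_i\neq x$ the subtree $\BB{v_i,x}$ is exactly the subtree hanging from $v_i$, the set $N_{\BB{v_i,x}}(v_i)=\{u_{i,1},\dots,u_{i,h_i}\}$ is the set of children of $v_i$, and $\{(v_i,u_{i,k})\}\cup E(\BB{u_{i,k},v_i})\subseteq E(\BB{v_i,x})$ for every child $u_{i,k}$. The plan is to reduce the claim to a single bottom-up pass by observing that the relevant scenario is essentially the same for all $v_i$. Indeed $\beta^{h_i}_{x,v_i}=\alpha_{x,v_i}$, since the suffix $\bigcup_{h_i<k\le h_i}(\cdots)$ that defines $\beta^{h_i}_{x,v_i}$ from $\alpha_{x,v_i}$ is empty, and under $\alpha_{x,v_i}$ every edge of $E(\BB{v_i,x})$ carries its upper weight. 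So, writing $T_u$ for the subtree rooted at $u$ and $b^+(u)$ for the minimum broadcast time from $u$ over $T_u$ with every edge of $T_u$ set to its upper weight, I get ${\textit b\_time}^{\beta^{h_i}_{x,v_i}}(u_{i,k},\BB{u_{i,k},v_i})=b^+(u_{i,k})$ and $w^{\beta^{h_i}_{x,v_i}}_{v_i,u_{i,k}}=w^+_{v_i,u_{i,k}}$; hence the quantity to be made nonincreasing is $w^+_{v_i,u_{i,k}}+b^+(u_{i,k})$, a value attached to the child $u_{i,k}$ alone and independent of $v_i$.

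\textbf{Preprocessing.} I would compute $b^+(u)$ for all $u\neq x$ by one post-order traversal of the rooted tree: $b^+(u)=0$ at a leaf, and at an internal vertex $u$ with children $c_1,\dots,c_d$ I sort the children so that $w^+_{u,c_k}+b^+(c_k)$ is nonincreasing in $k$, set $b^+(u)=\max_{1\le k\le d}\{k\cdot\setuptime+w^+_{u,c_k}+b^+(c_k)\}$ (valid by Lemma~\ref{lemma:optimal-sequence}), and store the resulting sorted child list $\pi_u$. The work at $u$ is dominated by the sort, $O(\deg_T(u)\log\deg_T(u))$, so the whole preprocessing costs $O\bigl(\sum_u\deg_T(u)\log n\bigr)=O(n\log n)$.

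\textbf{Answering a query.} On input $v_i$ the algorithm returns the stored list $\pi_{v_i}$. By the reduction above, $\pi_{v_i}$ lists $u_{i,1},\dots,u_{i,h_i}$ in an order for which $w^{\beta^{h_i}_{x,v_i}}_{v_i,u_{i,k}}+{\textit b\_time}^{\beta^{h_i}_{x,v_i}}(u_{i,k},\BB{u_{i,k},v_i})=w^+_{v_i,u_{i,k}}+b^+(u_{i,k})$ is nonincreasing as $k$ grows from $1$ to $h_i$, which is exactly the required ordering; copying the list out costs $O(h_i)$ time.

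\textbf{Where the difficulty lies.} The only non-routine step is the reduction itself: recognising that $\beta^{h_i}_{x,v_i}$ equals $\alpha_{x,v_i}$ and that, restricted to $\BB{v_i,x}$, $\alpha_{x,v_i}$ is nothing but the all-upper-bounds assignment, so the sort keys $w^+_{v_i,u}+b^+(u)$ do not depend on $v_i$ and one traversal produces the orderings for every vertex at once. After that it is a textbook bottom-up computation with a single sort per node, and the $O(n\log n)$ bound is simply $\sum_u\deg_T(u)\log\deg_T(u)$.
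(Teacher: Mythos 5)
Your reduction at the heart of the argument is right: since $\beta^{h_i}_{x,v_i}=\alpha_{x,v_i}$ and $\alpha_{x,v_i}$ assigns upper weights to every edge of $E(\BB{v_i,x})$, the sort key $w^{\beta^{h_i}_{x,v_i}}_{v_i,u_{i,k}}+{\textit b\_time}^{\beta^{h_i}_{x,v_i}}(u_{i,k},\BB{u_{i,k},v_i})$ coincides with $w^{s^+}_{v_i,u_{i,k}}+{\textit b\_time}^{s^+}(u_{i,k},\BB{u_{i,k},v_i})$, and this is exactly the identity the paper also uses. The issue is where you choose to root and what that implies for re-use.

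You root $T$ at $x$ and compute only the ``downward'' quantities $b^+(u)={\textit b\_time}^{s^+}(u,\BB{u,p(u)})$ together with a sorted list of \emph{children} at each node. All of that data --- the parent map, the subtrees $T_u$, the values $b^+(u)$, and the sorted child lists $\pi_u$ --- depends on the choice of $x$. But the preprocessing in Facts~\ref{fact:preprocessing-broadcast-time} and~\ref{fact:preprocessing-increasing-order} must be \emph{independent of $x$}: Algorithm~\ref{algorithm:mrbc} calls Algorithm~\ref{algorithm:worst-case} up to $O(\log n)$ times, once per centroid $z$, and the stated overall bound $O(n\log n)+O(n\log\log n)\cdot O(\log n)=O(n\log n\log\log n)$ charges the $O(n\log n)$ preprocessing exactly once. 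If your preprocessing is rebuilt for each new $x=z$, the total becomes $O(n\log^2 n)$, which is strictly worse than the claimed bound. The paper avoids this by computing ${\textit b\_time}^{s^+}(u,\BB{u,v})$ (and its $s^-$ counterpart) for \emph{both orientations} of every edge $(u,v)\in E(T)$ --- rooting the bottom-up sweep at a fixed broadcast center $\kappa\in{\textit B\_Ctr}^{s^+}$ rather than at $x$ --- and then sorting the full neighbourhood $N_T(v)$ (not just the children relative to $x$) for every $v$. At query time, for any $x$ the desired ordering of $N_{\BB{v_i,x}}(v_i)$ is obtained in $O(h_i)$ time simply by deleting from the sorted list of $N_T(v_i)$ the single neighbour that lies in $\OB{v_i,x}$. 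Your argument would be repaired by computing both directions of every edge (as in the paper's proof of Fact~\ref{fact:preprocessing-broadcast-time}), sorting the whole of $N_T(v)$ per $v$, and performing the one deletion at query time.
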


With Fact~\ref{fact:preprocessing-increasing-order},
$(w^{\beta^{h_i}_{x, v_i}}_{v_i, u_{i,1}}
+ {\textit b\_time}^{\beta^{h_i}_{x, v_i}}(u_{i,1}, \BB{u_{i,1}, v_i})) - (w^{\beta^{h_i}_{x, v_i}}_{v_i, u_{i,k}}
+ {\textit b\_time}^{\beta^{h_i}_{x, v_i}}(u_{i,k}, \BB{u_{i,k}, v_i}))$
is nondecreasing as $k$ increases from $1$ to $h_i$. Under $\beta^{h_i}_{x, v_i}$, we have $u_{i, 1} \in \Buck[1]$.\linebreak Moreover, there exists $1 \le h_i' \le h_i$ such that $u_{i,1}, u_{i, 2}, \ldots, u_{i, h_i'}$ are contained in\linebreak buckets, while $u_{i,h_i'+1}, u_{i, h_i'+2}, \ldots, u_{i, h_i}$ are not contained
in buckets. Recall that
when\linebreak $h_i' + 1 \le j \le h_i$ and $1 \le k \le h_i'$, we have
$w^{\beta^{h_i'}_{x, v_i}}_{v_i, u_{i,k}} + {\textit b\_time}^{\beta^{h_i'}_{x, v_i}}(u_{i,k},
\BB{u_{i,k}, v_i}) = w^{\beta^{j}_{x, v_i}}_{v_i, u_{i,k}} + {\textit b\_time}^{\beta^{j}_{x, v_i}}(u_{i,k},
\BB{u_{i,k}, v_i})$ (refer to Figure~\ref{figure:gamma_scenario}),
implying that $u_{i,k}$ is contained in the same bucket under
$\beta^{h_i'}_{x, v_i}$ and $\beta^{j}_{x, v_i}$.
Similarly, when $h_i' + 1 \le j \le h_i$ and $h_i' + 1 \le k \le h_i$,
we have $w^{\beta^{h_i'}_{x, v_i}}_{v_i, u_{i,k}} + {\textit b\_time}^{\beta^{h_i'}_{x, v_i}}(u_{i,k},
\BB{u_{i,k}, v_i}) \le w^{\beta^{j}_{x, v_i}}_{v_i, u_{i,k}} + {\textit b\_time}^{\beta^{j}_{x, v_i}}(u_{i,k},
\BB{u_{i,k}, v_i}) \le w^{\beta^{h_i}_{x, v_i}}_{v_i, u_{i,k}} + {\textit b\_time}^{\beta^{h_i}_{x, v_i}}(u_{i,k},
\BB{u_{i,k}, v_i})$, implying that $u_{i, k}$
is not contained in any bucket under
$\beta^{h_i'}_{x, v_i}$ and $\beta^{j}_{x, v_i}$.
Consequently, $\beta^{h_i'}_{x, v_i}$ and $\beta^{j}_{x, v_i}$ can induce the same values
of $\Minv(\ell)$ and $\Acc(\ell)$ for all $1 \le \ell \le h_i$, i.e.,
${\textit b\_time}^{\beta^{h_i'}_{x, v_i}}(v_i, \BB{v_i, x})
= {\textit b\_time}^{\beta^{j}_{x, v_i}}(v_i, \BB{v_i, x})$.

So, we only need to compute ${\textit b\_time}^{\beta^{j}_{x, v_i}}(v_i, \BB{v_i, x})$
for all $1 \le j \le h_i'$.
Suppose $u_{i, j} \in \Buck[\tau_j]$ under $\beta^{j}_{x, v_i}$.
Let $\Pref(j)$ ($\Next(j)$) be a list of non-empty buckets such that $\Buck[\ell]$ is included in $\Pref(j)$ ($\Next(j)$) if and only if $\Buck[\ell]$ is not empty,
$\ell < \tau_j$ ($\ell > \tau_j$), and
$\Minv(\ell) + \Acc(\ell) \cdot \setuptime \ge \Minv(t) + \Acc(t) \cdot \setuptime$ for all $1\le t \le \ell$ ($\ell \le t \le h_i$). \linebreak
Assume $\Pref(j) = (\Buck[\pi_{j,1}],
\Buck[\pi_{j,2}], \ldots, \Buck[\pi_{j,p_j}])$ and $\Next(j) = (\Buck[\sigma_{j,1}], \linebreak \Buck[\sigma_{j,2}],\ldots,
\Buck[\sigma_{j,q_j}])$, where
$\pi_{j,1} < \pi_{j,2} < \ldots < \pi_{j,p_j}$
and $\sigma_{j,1} < \sigma_{j,2} < \ldots < \sigma_{j,q_j}$.
Then, we have ${\textit b\_time}^{\beta^{j}_{x, v_i}}(v_i,
\BB{v_i,x}) = \max\{\Minv(\pi_{j,p_j}) +
\Acc(\pi_{j,p_j}) \cdot \setuptime, \Minv(\tau_j) +
\Acc(\tau_j) \cdot \setuptime, \Minv(\sigma_{j,1}) + \Acc(\sigma_{j,1}) \cdot \setuptime \}$. Hence, it suffices to find
the values of $\Minv(\ell) +
\Acc(\ell) \cdot \setuptime$ for all $\ell \in \{\pi_{j,p_j}, \tau_j, \sigma_{j,1}\}$, in order to determine ${\textit b\_time}^{\beta^{j}_{x, v_i}}(v_i,
\BB{v_i,x})$.
The use of
$\Pref(j)$ and $\Next(j)$ (together with $\Acc(\ell)$ and $\Minv(\ell)$) is crucial
for reducing the overall time complexity to $O(h_i \log\log h_i)$.

First, with Fact~\ref{fact:preprocessing-broadcast-time},
additional $O(h_i)$ time is enough to determine ${\textit b\_time}^{\beta^{h_i'}_{x, v_i}}(v_i,
\BB{v_i,x})$.
Under $\beta^{h_i'}_{x, v_i}$, it takes $O(h_i)$ time to find the buckets (and hence $\tau_{h_i'}$)
where $u_{i,1}, u_{i,2}, \ldots, u_{i, h_i'}$ reside and determine the values of
$\Minv(\ell)$ and $\Acc(\ell)$
for all $1 \le \ell \le h_i$. Another $O(h_i)$ time is required to
find
$\Pref(h_i')$ and $\Next(h_i')$ (and hence $\pi_{h_i',p_{h_i'}}$ and $\sigma_{h_i',1}$).
Then, it requires $O(h_i \log\log h_i)$ time to determine ${\textit b\_time}^{\beta^{j}_{x, v_i}}(v_i,
\BB{v_i,x})$, sequentially, for $j=h_i'-1, h_i'-2, \ldots, 1$, as explained below.

Suppose $u_{i,j+1} \in \Buck[\hat{\ell}]$ under $\beta^j_{x,v_i}$. We first prove the following lemma.
\begin{lemma}
\label{lemma:bucket-change}
$\beta^{j}_{x, v_i}$ and $\beta^{j+1}_{x, v_i}$ induce the same
$\Buck[\ell]$ for all $\ell \in \{1, 2, \ldots, h_i\}-
\{\tau_{j+1}, \hat{\ell}\}$.
\end{lemma}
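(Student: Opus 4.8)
The plan is to compare $\beta^{j}_{x, v_i}$ and $\beta^{j+1}_{x, v_i}$ entrywise. By the definition of the $\beta$-scenarios, $\beta^{j}_{x, v_i}$ assigns $w^-_{a,b}$ to every edge in $\bigcup_{j < k \le h_i}(\setof{(v_i, u_{i,k})} + E(\BB{u_{i,k}, v_i}))$ and $w^{\alpha_{x, v_i}}_{a,b}$ to all remaining edges, while $\beta^{j+1}_{x, v_i}$ does the same but over $\bigcup_{j+1 < k \le h_i}(\setof{(v_i, u_{i,k})} + E(\BB{u_{i,k}, v_i}))$. Hence the two scenarios differ only on the edges of $E_{j+1} := \setof{(v_i, u_{i,j+1})} \cup E(\BB{u_{i,j+1}, v_i})$, where $\beta^{j}_{x, v_i}$ uses $w^-_{a,b}$ and $\beta^{j+1}_{x, v_i}$ uses $w^{\alpha_{x, v_i}}_{a,b}$.

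First I would show that, for every $k \neq j+1$, the quantity $w^{s}_{v_i, u_{i,k}} + {\textit b\_time}^{s}(u_{i,k}, \BB{u_{i,k}, v_i})$ takes the same value for $s = \beta^{j}_{x, v_i}$ and $s = \beta^{j+1}_{x, v_i}$. Indeed $u_{i,k}$ and $u_{i,j+1}$ are distinct neighbours of $v_i$ inside $\BB{v_i, x}$, so the edge $(v_i, u_{i,k})$ together with $E(\BB{u_{i,k}, v_i})$ is disjoint from $E_{j+1}$; therefore both $w^{s}_{v_i, u_{i,k}}$ and the broadcast time ${\textit b\_time}^{s}(u_{i,k}, \BB{u_{i,k}, v_i})$ are unaffected by the modification on $E_{j+1}$. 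Taking $k = 1$ (legitimate since $j \ge 1$ forces $1 \neq j+1$) shows, in particular, that the reference value $w^{s}_{v_i, u_{i,1}} + {\textit b\_time}^{s}(u_{i,1}, \BB{u_{i,1}, v_i})$ driving the bucketing rule is the same under the two scenarios; as noted in the discussion preceding the lemma, $u_{i,1}$ is still a maximiser of $w^{s}_{v_i, u_{i,k}} + {\textit b\_time}^{s}(u_{i,k}, \BB{u_{i,k}, v_i})$ under both scenarios with the fixed ordering $(u_{i,1}, \ldots, u_{i,h_i})$, so the bucketing rule is well posed in both cases.

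Consequently, for every $k \neq j+1$ the bucket-deciding difference $(w^{s}_{v_i, u_{i,1}} + {\textit b\_time}^{s}(u_{i,1}, \BB{u_{i,1}, v_i})) - (w^{s}_{v_i, u_{i,k}} + {\textit b\_time}^{s}(u_{i,k}, \BB{u_{i,k}, v_i}))$ is identical under $\beta^{j}_{x, v_i}$ and $\beta^{j+1}_{x, v_i}$, so $u_{i,k}$ is placed in the same $\Buck[\cdot]$ under both scenarios. The only vertex whose bucket membership can change is thus $u_{i,j+1}$, which sits in $\Buck[\hat{\ell}]$ under $\beta^{j}_{x, v_i}$ and in $\Buck[\tau_{j+1}]$ under $\beta^{j+1}_{x, v_i}$. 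Hence, for every $\ell \in \{1, 2, \ldots, h_i\} - \{\tau_{j+1}, \hat{\ell}\}$, the set $\Buck[\ell]$ is the same under the two scenarios, which is the assertion.

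The only point needing care — and it is minor — is the disjointness argument of the second paragraph: that a single edit confined to $E_{j+1}$ changes none of the $h_i - 1$ values $w^{s}_{v_i, u_{i,k}} + {\textit b\_time}^{s}(u_{i,k}, \BB{u_{i,k}, v_i})$ with $k \neq j+1$, including the common reference value at $k = 1$. Once this is established, the equality of the buckets outside $\{\tau_{j+1}, \hat{\ell}\}$ is immediate.
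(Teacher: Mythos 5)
Your proof is correct and takes essentially the same approach as the paper: both hinge on the observation that $\beta^{j}_{x,v_i}$ and $\beta^{j+1}_{x,v_i}$ differ only on the edges of $\{(v_i,u_{i,j+1})\}\cup E(\BB{u_{i,j+1},v_i})$, so among $u_{i,1},\ldots,u_{i,h_i'}$ only $u_{i,j+1}$ can change buckets, moving from $\Buck[\hat{\ell}]$ to $\Buck[\tau_{j+1}]$. You spell out the disjointness argument (and the invariance of the reference value at $k=1$) that the paper's one-line ``It implies'' leaves implicit.
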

\begin{proof}
Recall that when $1 \le j \le h_i' - 1$, $\beta^{j}_{x, v_i}$ and $\beta^{j+1}_{x, v_i}$
differ in the weights of edges in
$\setof{(v_i, u_{i,j+1})} + E(\BB{u_{i,j+1}, v_i})$
(refer to Figure~\ref{figure:gamma_scenario}).
It implies that among $u_{i,1}, u_{i,2}, \ldots, u_{i, h_i'}$, only $u_{i,j+1}$
resides in different buckets
under $\beta^{j}_{x, v_i}$ and $\beta^{j+1}_{x, v_i}$.
The lemma then follows,
because $u_{i, j+1} \in \Buck[\tau_{j+1}]$ under $\beta^{j+1}_{x, v_i}$.
\end{proof}
\bigskip

The proof above also implies that $u_{i,j}$ resides in the same bucket under
$\beta^{j}_{x, v_i}, \beta^{j+1}_{x, v_i}, \ldots, \linebreak \beta^{h_i'}_{x, v_i}$. Hence,
$\tau_j$ can be determined in constant time.
With Fact~\ref{fact:preprocessing-broadcast-time}, the value of
$w^{\beta^{j}_{x, v_i}}_{v_i,
u_{i,j+1}} + {\textit b\_time}^{\beta^{j}_{x, v_i}}(u_{i,j+1},
\BB{u_{i,j+1}, v_i})$ can be obtained, and hence $\Buck[\hat{\ell}]$ can be determined, also in constant time.
According to Lemma~\ref{lemma:bucket-change},
it takes constant time to obtain
$\Minv(1), \linebreak \Minv(2), \ldots, \Minv(h_i), |\Buck[1]|, |\Buck[2]|, \ldots, |\Buck[h_i]|$
under $\beta^{j}_{x, v_i}$ from $\Minv(1), \linebreak \Minv(2), \ldots,  \Minv(h_i), |\Buck[1]|, |\Buck[2]|, \ldots, |\Buck[h_i]|$
under $\beta^{j+1}_{x, v_i}$.
%

Since $w^{\beta^{h_i'}_{x, v_i}}_{v_i, u_{i,j}} + {\textit
b\_time}^{\beta^{h_i'}_{x, v_i}}(u_{i,j}, \BB{u_{i,j}, v_i})
\ge w^{\beta^{h_i'}_{x, v_i}}_{v_i, u_{i,j+1}} +  {\textit
b\_time}^{\beta^{h_i'}_{x, v_i}}(u_{i,j+1}, \BB{u_{i,j+1}, v_i})$,\linebreak $u_{i, j} \in \Buck[\tau_j]$
under $\beta^{j}_{x, v_i}$ and $\beta^{h_i'}_{x, v_i}$,
and $u_{i, j+1}
\in \Buck[\tau_{j+1}]$
under $\beta^{j+1}_{x, v_i}$ and $\beta^{h_i'}_{x, v_i}$,
we have $\tau_j \le \tau_{j+1}$.
Also, since $w^{\beta^{j+1}_{x, v_i}}_{v_i, u_{i,j+1}} + {\textit
b\_time}^{\beta^{j+1}_{x, v_i}}(u_{i,j+1}, \BB{u_{i,j+1}, v_i})
\ge w^{\beta^{j}_{x, v_i}}_{v_i, u_{i,j+1}} + {\textit
b\_time}^{\beta^{j}_{x, v_i}}(u_{i,j+1}, \BB{u_{i,j+1}, v_i})$,
we have $\tau_{j+1} \le \hat{\ell}$.
So, by Lemma~\ref{lemma:bucket-change}, $\beta^{j}_{x, v_i}$ and $\beta^{j+1}_{x, v_i}$ induce the same
$\Buck[\ell]$ for all $1\le \ell \le \tau_{j} -1$, i.e., the same
$\Acc(\tau_j-1)$.
In this way, $\Acc(\tau_j-1)$ remains the same under
$\beta^{j}_{x, v_i}, \beta^{j+1}_{x, v_i}, \ldots, \beta^{h_i'}_{x, v_i}$,
and hence $\Acc(\tau_j)$ (=$\Acc(\tau_{j}-1) + |\Buck[\tau_j]|$ under $\beta^j_{x, v_i}$) under $\beta^{j}_{x, v_i}$ can be determined in constant time.
Therefore, it takes $O(h_i)$ time to
determine the values of $\Minv(\tau_j) + \Acc(\tau_j) \cdot \setuptime$ under $\beta^j_{x, v_i}$
for all $1 \le j \le h_i' - 1$.

Also, as a consequence of $\tau_j \le \tau_{j+1}$ and
$\beta^{j}_{x, v_i}$, $\beta^{j+1}_{x, v_i}$ inducing the same
$\Buck[\ell]$ for all $1\le \ell \le \tau_{j} -1$,
$\Pref(j)$ is a sublist of $\Pref(j+1)$ for all $1 \le j \le h_i' - 1$.
Since $\pi_{h_i', p_{h_i'}}$ is available, $O(h_i)$ time is enough to determine
$\pi_{j, p_j}$ for all
$1 \le j \le h_i' - 1$.
Therefore, it takes $O(h_i)$ time to determine the
values of $\Minv(\pi_{j,p_j})+\Acc(\pi_{j,p_j}) \cdot \setuptime$ under $\beta^j_{x, v_i}$ for all
$1 \le j \le h_i' - 1$.
The computation of $\Minv(\sigma_{j,1})+\Acc(\sigma_{j,1}) \cdot \setuptime$ under $\beta^j_{x, v_i}$ for all
$1 \le j \le h_i' - 1$,
which is detailed in Section~\ref{subsection:maintain_chain}, takes
$O(h_i \log\log h_i)$ time.

\subsection{Computation of $\Minv(\sigma_{j,1})+ \Acc(\sigma_{j,1}) \cdot \setuptime$ under $\beta^j_{x, v_i}$}
\label{subsection:maintain_chain}
Without loss of generality,
we suppose $h_i' = h_i$. First we claim that $\Next(j) = (\Buck[\sigma_{j,1}], \linebreak \Buck[\sigma_{j,2}],\ldots,
\Buck[\sigma_{j,q_j}])$ for all $1 \le j < h_i$ can be constructed in $O(h_i\log\log h_i)$ time, which will be verified later.
It was shown earlier that the values of $\Minv(\ell)$ under $\beta^{j}_{x,
v_i}$ for all $1 \le \ell \le h_i$ and $1 \le j < h_i$ can be determined in $O(h_i)$ time.
On the other hand,
we show in the next paragraph that the values of $\Acc(\sigma_{j,1})$ under $\beta^{j}_{x,
v_i}$ for all $1 \le j < h_i$ can be determined in additional $O(h_i)$ time.

Define $\Delta_{j, t} = (\Acc(\sigma_{j,t}) - \Acc(\sigma_{j,t-1})$ under $\beta^{j}_{x,
v_i}) - (\Acc(\sigma_{j,t}) - \Acc(\sigma_{j,t-1})$
under $\beta^{h_i}_{x, v_i})$, where $1 \le t \le q_j$ and
$\sigma_{j, 0}$ is set to $\tau_{j}$.
The values of $\Delta_{j, t}$ for all $1 \le t \le q_j$
can be obtained, while constructing $\Next(j)$.
If $\Next(j)$ is empty, then $\sigma_{j,t}$ and $\Delta_{j,t}$ are not defined.
When $t = 1$, we have
\vspace{-8pt}
\begin{equation}
\label{equation:difference-relation}
\Delta_{j, 1} = (\Acc(\sigma_{j,1}) - \Acc(\tau_j)\textrm{ under }\beta^{j}_{x,
v_i}) - (\Acc(\sigma_{j,1}) - \Acc(\tau_j)
\textrm{ under }\beta^{h_i}_{x, v_i}).
\vspace{-8pt}
\end{equation}
Now that the values of $\Acc(\tau_j)$ under $\beta^{j}_{x,
v_i}$ for all $1\le j < h_i$ and $\Acc(\ell)$ under $\beta^{h_i}_{x, v_i}$ for all $1 \le \ell \le h_i$
can be determined in $O(h_i)$ time (shown earlier), the values of $\Acc(\sigma_{j,1})$ under $\beta^{j}_{x,
v_i}$ for all $1 \le j < h_i$ can be determined in additional $O(h_i)$ time according to (\ref{equation:difference-relation}).

Next we show the construction of $\Next(j)$ and $\Delta_{j, t}$ ($1 \le t \le q_j$) from $\Next(j+1)$ and $\Delta_{j+1,t}$
($1 \le t \le q_{j+1}$), where
$1 \le j < h_i$.
We use the van Emde Boas priority queue
\cite{van76} to store the indices, i.e., $\sigma_{j, t}$'s,
of $\Next(j)$ for all $1\le j \le h_i$.
Initially, $\Next(j)$ is empty for all $1\le j \le h_i$ (and hence $\Delta_{j, t}$ is not defined for all $1 \le t \le q_j$).
The construction comprises four
sequential stages, i.e., Stage 1 to Stage 4, which can find all buckets
$\Buck[\ell]\in\Next(j)$
for $\ell \in \{\hat{\ell}+1, \hat{\ell}+2, \ldots, h_i\}, \{\hat{\ell}\}, \{\tau_{j+1}+1, \tau_{j+1}+2, \ldots, \hat{\ell}-1\}$, and
$\{\tau_j+1, \tau_j+2, \ldots, \tau_{j+1}\}$,
respectively, where $\tau_{j+1} \le \hat{\ell} \le h_i$.
If $\hat{\ell} = h_i$, then Stage 1 is omitted and the construction starts with Stage 2.
If $\hat{\ell} \le \tau_{j+1} + 1$, then Stage 3 is omitted
(Stage 2 is also omitted as $\hat{\ell} = \tau_{j+1}$).
If $\tau_{j+1} = \tau_{j}$, then Stage 4 is omitted.

As a consequence of Lemma~\ref{lemma:bucket-change}, when $\hat{\ell} > \tau_{j+1}$, we have
\vspace{6pt}
\begin{equation}
\label{equation:change-scenario}
\Acc(c) \textrm{ under }\beta^{j}_{x, v_i} =\left\{
\begin{array}{ll}
\Acc(c) - 1\textrm{ under }\beta^{j+1}_{x, v_i},&\textrm{if }\tau_{j+1} \le c \le \hat {\ell}-1; \\[12pt]
\Acc(c)\textrm{ under }\beta^{j+1}_{x, v_i},&\textrm{if }\hat{\ell} \le c \le h_i. \end{array} \right.
\end{equation}

%


\vspace{6pt}
\noindent {\bf Stage 1.} $\ell \in \{\hat{\ell}+1, \hat{\ell}+2, \ldots, h_i\} = L_1$.
As a consequence of Lemma~\ref{lemma:bucket-change}, the values of $\Minv(\ell)$ and $\Acc(\ell)$
remain the same under $\beta^{j}_{x, v_i}$ and $\beta^{j+1}_{x, v_i}$, which
implies $\Buck[\ell] \in \Next(j)$ if and only if
$\Buck[\ell] \in \Next(j+1)$. Moreover, those buckets
$\Buck[\ell]\in\Next(j)$,
if they exist,
constitute a sublist $(\Buck[\sigma_{j+1,\hat{t}}],\Buck[\sigma_{j+1,\hat{t}+1}],\ldots,
\Buck[\sigma_{j+1,q_{j+1}}])$
of\linebreak $\Next(j+1)$, where $\Buck[\sigma_{j+1,\hat{t}}]$
is the bucket that immediately succeeds $\Buck[\hat{\ell}]$
in $\Next(j+1)$, i.e.,
$\sigma_{j+1, \hat{t}} = \min\{\sigma_{j+1, t} \mid \sigma_{j+1, t} > \hat{\ell}$ and $0\le t \le q_{j+1}\}$.
If $\Next(j+1)$ is empty or $\sigma_{j+1, t} \le \hat{\ell}$ for all $1 \le t \le q_{j+1}$,
then $\Next(j)$ remains empty and $\Delta_{j,t}$ for all $1 \le t \le q_{j}$ remain not defined.

Otherwise,
by the aid of the van Emde Boas priority queue, the value of $\sigma_{j+1,\hat{t}}$ can be determined in $O(\log\log h_i)$ time. Then, $\Buck[\sigma_{j+1, \hat{t}}]$ can be found in $\Next(j+1)$ and hence
the sublist of $\Next(j+1)$ can be added to $\Next(j)$ in constant time. That is,\linebreak we have
$(\Buck[\sigma_{j,\ddot{t}}],  \Buck[\sigma_{j,\ddot{t}+1}], \ldots, \Buck[\sigma_{j,q_j}]) = (\Buck[\sigma_{j+1,\hat{t}}],  \Buck[\sigma_{j+1,\hat{t}+1}], \ldots, \linebreak \Buck[\sigma_{j+1,q_{j+1}}])$, where $q_j - \ddot{t} = q_{j+1} - \hat{t}$.
Moreover,
we have $\Acc(\sigma_{j,\ddot{t}}),  \Acc(\sigma_{j,\ddot{t}+1}), \ldots, \linebreak\Acc(\sigma_{j,q_j})$ under $\beta^j_{x,v_i}$
equal to $\Acc(\sigma_{j+1,\hat{t}}), \Acc(\sigma_{j+1,\hat{t}+1}), \ldots, \Acc(\sigma_{j+1,q_{j+1}})$ under $\beta^{j+1}_{x,v_i}$, respectively,
which implies $\Delta_{j, \ddot{t}+1}, \Delta_{j, \ddot{t}+2}, \ldots,  \Delta_{j, q_j}$ equal to $\Delta_{j+1, \hat{t}+1}, \Delta_{j+1, \hat{t}+2}, \ldots,\Delta_{j+1, q_{j+1}}$, respectively.
We also set $\Delta_{j, \ddot{t}} = \Delta_{j+1, \hat{t}}$ temporarily.
It may help the construction of
$\Next(j)$ in subsequent stages. 

\vspace{6pt}
\noindent {\bf Stage 2.} $\ell \in \{\hat{\ell}\} = L_2$, where $\hat{\ell} > \tau_{j+1}$.
We use $\Buck[\sigma_{j+1, \breve{t}}]$ to denote the
bucket that immediately precedes $\Buck[\hat{\ell}]$ in $\{\Buck[\tau_{j+1}]\} \cup \Next(j+1)$,
i.e., $\sigma_{j+1, \breve{t}} =
\max\{\sigma_{j+1,t}\mid\sigma_{j+1, t} < \hat{\ell}$ and $0\le t \le q_{j+1}\}$,
where $\sigma_{j+1, 0}=\tau_{j+1}$.
By the aid of the van Emde Boas priority queue,
$\sigma_{j+1,\breve{t}}$ can be found in $O(\log\log h_i)$ time.
If $\Next(j)$ remains empty after Stage 1,
then $\Buck[\hat{\ell} + 1],\Buck[\hat{\ell} + 2],\ldots,\Buck[h_i]$ under $\beta^{j}_{x, v_i}$
are all empty.
Since $\Buck[\hat{\ell}]$ is not empty under $\beta^{j}_{x, v_i}$, we have
$\hat{\ell} = \sigma_{j, q_j}$ and add $\Buck[\hat{\ell}]$ to $\Next(j)$.
We set $\Delta_{j,q_j} = 1 + \Delta_{j+1, q_{j+1}}$ if $\Buck[\hat{\ell}] \in \Next(j+1)$,
and $1 - (\Acc(\sigma_{j, q_j}) - \Acc(\sigma_{j+1, \breve{t}})$ under $\beta^{h_i}_{x,v_i})$ if $\Buck[\hat{\ell}] \not\in \Next(j+1)$.
Let $\nabla = (\Acc(\sigma_{j, q_j}) - \Acc(\sigma_{j+1,\breve{t}})$ under $\beta^{j}_{x,v_i}) - (\Acc(\sigma_{j, q_j}) - \Acc(\sigma_{j+1,\breve{t}})$
under $\beta^{h_i}_{x, v_i})$.
For both cases, we have $\Delta_{j, q_j} = \nabla$, as explained below.

According to (\ref{equation:change-scenario}),
we have $\nabla = 1 + (\Acc(\sigma_{j, q_j}) - \Acc(\sigma_{j+1,\breve{t}})$ under
$\beta^{j+1}_{x,v_i}) - (\Acc(\sigma_{j, q_j}) - \Acc(\sigma_{j+1,\breve{t}})$
under $\beta^{h_i}_{x, v_i})$.
When $\Buck[\hat{\ell}] \in \Next(j+1)$,
we have ($\sigma_{j, q_j}=$) $\hat{\ell} = \sigma_{j+1, q_{j+1}}$ and
$\sigma_{j+1,\breve{t}} = \sigma_{j+1, q_{j+1} - 1}$, i.e.,
$\nabla = 1 + \Delta_{j+1, q_{j+1}}$.
When $\Buck[\hat{\ell}] \not\in \Next(j+1)$, we have
$\sigma_{j+1,\breve{t}} = \tau_{j+1}$ if $\Next(j+1)$ is empty,
and $\sigma_{j+1, q_{j+1}}$ otherwise.
It follows that we have
$\Buck[\sigma_{j+1, \breve{t}}+1], \Buck[\sigma_{j+1, \breve{t}}+2],\ldots,
\Buck[\hat{\ell}]$ under $\beta^{j+1}_{x,v_i}$ all empty,
and so $\Acc(\sigma_{j+1, \breve{t}}) = \Acc(\hat{\ell})$ ($=\Acc(\sigma_{j, q_j})$) under $\beta^{j+1}_{x,v_i}$,
i.e., $\nabla = 1 - (\Acc(\sigma_{j, q_j}) - \Acc(\sigma_{j+1, \breve{t}})$ under $\beta^{h_i}_{x,v_i})$.
Also notice that the
value of $\Delta_{j, q_j}$ above is temporary.

In the rest of Stage 2, we consider the situation that Stage 1 is carried out and $\Next(j)$ is not
empty after Stage 1. Then we have $\Next(j) =
(\Buck[\sigma_{j,\ddot{t}}], \Buck[\sigma_{j,\ddot{t}+1}], \ldots, \linebreak\Buck[\sigma_{j,q_j}])$
after Stage 1.
We claim
$\Buck[\hat{\ell}] \in \Next(j)$ if and only if
$\Minv(\hat{\ell})$ under $\beta^j_{x,v_i} + \Acc(\sigma_{j+1, \hat{t}-1}) \cdot \setuptime$
under $\beta^{h_i}_{x,v_i} \ge
\Minv(\sigma_{j+1, \hat{t}})$ under $\beta^j_{x,v_i} +  \Acc(\sigma_{j+1,\hat{t}}) \cdot \setuptime$ under
$\beta^{h_i}_{x,v_i} + \Delta_{j+1, \hat{t}} \cdot \setuptime$, as explained below.
It follows that whether
$\Buck[\hat{\ell}] \in \Next(j)$ or not can be determined in constant time.

The inequality of the if-and-only-if statement can be written as
$\Minv(\hat{\ell})$ under $\beta^j_{x,v_i} + \Acc(\sigma_{j+1, \hat{t}-1}) \cdot \setuptime$
under $\beta^{j+1}_{x,v_i} \ge
\Minv(\sigma_{j+1, \hat{t}})$ under $\beta^j_{x,v_i} +\Acc(\sigma_{j+1,\hat{t}}) \cdot \setuptime$ under
$\beta^{j+1}_{x,v_i}$.
%
If $\Buck[\hat{\ell}] \in \Next(j+1)$, then we have
$\hat{\ell} = \sigma_{j+1,\hat{t}-1}$.
Since
$\Buck[\hat{\ell}]$ is not empty and $\hat{\ell} > \tau_{j+1} ~(\ge \tau_j)$, we have
$\Buck[\hat{\ell}] \in \Next(j)$ if and only if
$\Minv(\hat{\ell}) + \Acc(\hat{\ell}) \cdot \setuptime \ge
\Minv(\sigma_{j+1, \hat{t}}) + \Acc(\sigma_{j+1,\hat{t}}) \cdot \setuptime$ under
$\beta^{j}_{x,v_i}$.
The latter inequality is equivalent to the inequality above,
as a consequence of (\ref{equation:change-scenario}).

Then we assume $\Buck[\hat{\ell}] \not\in \Next(j+1)$, which implies $\hat{\ell} > \sigma_{j+1,\hat{t}-1}$.
If $\Buck[\hat{\ell}] \not\in \Next(j)$, then we have $\Minv(\hat{\ell}) + \Acc(\hat{\ell}) \cdot \setuptime < \Minv(\sigma_{j+1,\hat{t}})
+ \Acc(\sigma_{j+1,\hat{t}}) \cdot \setuptime$ under $\beta^{j}_{x,v_i}$, from which the inequality
$\Minv(\hat{\ell})$ under $\beta^j_{x,v_i} + \Acc(\sigma_{j+1, \hat{t}-1}) \cdot \setuptime$
under $\beta^{j+1}_{x,v_i}
< \Minv(\sigma_{j+1, \hat{t}})$ under $\beta^j_{x,v_i} +  \Acc(\sigma_{j+1,\hat{t}}) \cdot \setuptime$ under
$\beta^{j+1}_{x,v_i}$ can be derived, as a consequence of (\ref{equation:change-scenario}) and $\Acc(\hat{\ell}) \ge \Acc(\sigma_{j+1,\hat{t}-1})$
under $\beta^{j+1}_{x,v_i}$.
If $\Buck[\hat{\ell}] \in \Next(j)$, then we have $\Minv(\hat{\ell}) +  \Acc(\hat{\ell}) \cdot \setuptime 
\ge \Minv(\sigma_{j+1,\hat{t}})
+ \Acc(\sigma_{j+1,\hat{t}}) \cdot \setuptime$ under $\beta^{j}_{x,v_i}$, from which the inequality $\Minv(\hat{\ell})$ under $\beta^j_{x,v_i} + \Acc(\sigma_{j+1, \hat{t}-1}) \cdot \setuptime$
under $\beta^{j+1}_{x,v_i} \ge
\Minv(\sigma_{j+1, \hat{t}})$ under $\beta^j_{x,v_i} + \Acc(\sigma_{j+1,\hat{t}}) \cdot \setuptime$ under
$\beta^{j+1}_{x,v_i}$ can be derived, because $\Acc(\hat{\ell}) = \Acc(\sigma_{j+1,\hat{t}-1})$ under $\beta^{j+1}_{x, v_i}$.
The latter equality is verified below.

To prove $\Acc(\hat{\ell}) = \Acc(\sigma_{j+1,\hat{t}-1})$ under $\beta^{j+1}_{x, v_i}$, it suffices to show that
$\Buck[\sigma_{j+1,\hat{t}-1} + 1],\linebreak \Buck[\sigma_{j+1,\hat{t}-1} + 2], \ldots, \Buck[\hat{\ell}]$
under $\beta^{j+1}_{x, v_i}$ are all empty.
Suppose to the contrary that there exists
$\sigma_{j+1, \hat{t}-1}+ 1 \le \ell' \le \hat{\ell}$
such that under $\beta^{j+1}_{x,v_i}$, $\Buck[\ell']$ is not empty and
 $\Buck[\ell'+1], \Buck[\ell'+2], \ldots,
\Buck[\hat{\ell}]$ are all empty.
Then we have $\Acc(\ell') = \Acc(\hat{\ell})$ under $\beta^{j+1}_{x, v_i}$.
As a consequence of (\ref{equation:change-scenario}), we have
$\Minv(\ell')+ \Acc(\ell') \cdot \setuptime$ under $\beta^{j+1}_{x, v_i} \ge
\Minv(\hat{\ell})+ \Acc(\hat{\ell}) \cdot \setuptime$ under $\beta^{j}_{x, v_i}$.
Further, since $\Minv(\sigma_{j+1, \hat{t}}) +  \Acc(\sigma_{j+1,\hat{t}}) \cdot \setuptime >
\Minv(\ell') +  \Acc(\ell') \cdot \setuptime$
under $\beta^{j+1}_{x, v_i}$, we have
$\Minv(\sigma_{j+1, \hat{t}}) +  \Acc(\sigma_{j+1,\hat{t}}) \cdot \setuptime$
under $\beta^{j+1}_{x, v_i} > \Minv(\hat{\ell})+\Acc(\hat{\ell}) \cdot \setuptime$ under $\beta^{j}_{x, v_i}$. Then, by Lemma~\ref{lemma:bucket-change}, the inequality
$\Minv(\sigma_{j+1, \hat{t}}) + \Acc(\sigma_{j+1,\hat{t}}) \cdot \setuptime$ under $\beta^{j}_{x, v_i} >
\Minv(\hat{\ell}) + \Acc(\hat{\ell}) \cdot \setuptime$ under $\beta^{j}_{x, v_i}$ can result, which contradicts to $\Buck[\hat{\ell}] \in \Next(j)$.



Next, the computations of $\Next(j)$ and $\Delta_{j, t}$ are described.
If $\Buck[\hat{\ell}] \in \Next(j)$, then
we add $\Buck[\hat{\ell}]$
to the front of $\Next(j)$, i.e.,
$\sigma_{j, \ddot{t}-1} = \hat{\ell}$.
As a consequence of (\ref{equation:change-scenario}),
$\sigma_{j,\ddot{t}} = \sigma_{j+1, \hat{t}}$, and
$\Acc(\sigma_{j, \ddot{t}-1}) =\Acc(\hat{\ell}) = \Acc(\sigma_{j+1,\hat{t}-1})$
under $\beta^{j+1}_{x, v_i}$, we have $\Delta_{j,\ddot{t}} =
(\Acc(\sigma_{j,\ddot{t}}) - \Acc(\sigma_{j, \ddot{t}-1})$ under $\beta^{j}_{x,
v_i}) - (\Acc(\sigma_{j,\ddot{t}}) - \Acc(\sigma_{j, \ddot{t}-1})
$ under $\beta^{h_i}_{x, v_i}) =
\Delta_{j+1,\hat{t}} - (\Acc(\sigma_{j+1,\hat{t}-1}) - \Acc(\sigma_{j, \ddot{t}-1})$
under $\beta^{h_i}_{x,v_i})$, which can be computed in constant time.
In addition, we set $\Delta_{j, \ddot{t}-1} = 1 + \Delta_{j+1,\hat{t}-1}$ if $\hat{\ell} =
\sigma_{j+1,\hat{t}-1}$, and $1 -
(\Acc(\sigma_{j, \ddot{t}-1}) - \Acc(\sigma_{j+1,\breve{t}})$ under $\beta^{h_i}_{x, v_i})$ if $\hat{\ell} >
\sigma_{j+1,\hat{t}-1}$.
Let
$\nabla' = (\Acc(\sigma_{j, \ddot{t}-1}) - \Acc(\sigma_{j+1,\breve{t}})$ under $\beta^{j}_{x,v_i}) -
(\Acc(\sigma_{j, \ddot{t}-1}) - \Acc(\sigma_{j+1,\breve{t}})$
under $\beta^{h_i}_{x, v_i})$.
For both cases, we have $\Delta_{j, \ddot{t}-1} = \nabla'$,
as a consequence of (\ref{equation:change-scenario}) and the fact that we have
$\sigma_{j+1, \breve{t}} = \sigma_{j+1, \hat{t} - 2}$
if $\hat{\ell} = \sigma_{j+1, \hat{t} - 1}$, and
$\sigma_{j+1, \hat{t} - 1}$ if $\hat{\ell} >
\sigma_{j+1, \hat{t} - 1}$.

If $\Buck[\hat{\ell}] \not\in \Next(j)$,
then $\Next(j)$ remains unchanged.
We set
$\Delta_{j, \ddot{t}} = 1 + \Delta_{j+1, \hat{t} - 1} + \Delta_{j+1,\hat{t}}$ if $\hat{\ell} =
\sigma_{j+1, \hat{t} - 1}$, and
$1 + \Delta_{j+1,\hat{t}}$ if $\hat{\ell} >
\sigma_{j+1, \hat{t} - 1}$.
Let
$\nabla'' = (\Acc(\sigma_{j,\ddot{t}}) - \Acc(\sigma_{j+1,\breve{t}})$ under $\beta^{j}_{x,v_i}) -
(\Acc(\sigma_{j,\ddot{t}}) - \Acc(\sigma_{j+1,\breve{t}})$
under $\beta^{h_i}_{x, v_i})$.
For both cases, we have $\Delta_{j, \ddot{t}} = \nabla''$,
as a consequence of (\ref{equation:change-scenario}) and $\sigma_{j,\ddot{t}}
= \sigma_{j+1,\hat{t}}$.
The values of $\Delta_{j, \ddot{t}-1}$ (as $\Buck[\hat{\ell}] \in \Next(j)$)
and $\Delta_{j, \ddot{t}}$ (as $\Buck[\hat{\ell}] \not\in \Next(j)$)
above are temporary.


Recall that we use the van Emde Boas priority queue to store the indices, i.e., $\sigma_{j+1,t}$'s,
of $\Next(j+1)$, and construct $\Next(j)$ from  $\Next(j+1)$. We need to insert $\hat{\ell}$ into the queue if $\Buck[\hat{\ell}] \not\in \Next(j+1)$ and $\Buck[\hat{\ell}] \in \Next(j)$,
and delete $\hat{\ell}$ from the queue if $\Buck[\hat{\ell}] \in \Next(j+1)$ and $\Buck[\hat{\ell}] \not\in \Next(j)$.
Both take $O(\log\log h_i)$ time.

\vspace{6pt}
\noindent {\bf Stage 3.} $\ell \in \{\tau_{j+1}+1, \tau_{j+1}+2, \ldots, \hat{\ell}-1\} = L_3$, where
$\hat{\ell} > \tau_{j+1}+1$.
We first claim that if $\Buck[\ell] \not\in \Next(j+1)$, then $\Buck[\ell] \not\in \Next(j)$, as explained below.
Suppose $\Buck[\ell] \not\in \Next(j+1)$.
If $\Buck[\ell]$ is empty under $\beta^{j+1}_{x,v_i}$, then $\Buck[\ell]$ is also empty under $\beta^{j}_{x,v_i}$,
according to Lemma~\ref{lemma:bucket-change}.
Otherwise, there exists some $\Buck[\ell']$, where $\ell < \ell'$, satisfying
$\Minv(\ell) +  \Acc(\ell) \cdot \setuptime < \Minv(\ell') +
 \Acc(\ell') \cdot \setuptime$ under $\beta^{j+1}_{x, v_i}$.
As a consequence of Lemma~\ref{lemma:bucket-change} and (\ref{equation:change-scenario}),
we have
$(\Minv(\ell) + \Acc(\ell) \cdot \setuptime$ under $\beta^{j}_{x, v_i}) = (\Minv(\ell) + (\Acc(\ell) - 1) \cdot \setuptime$
under $\beta^{j+1}_{x, v_i})$.
Similarly, we have
$(\Minv(\ell') + \Acc(\ell') \cdot \setuptime$
under $\beta^{j}_{x, v_i}) = (\Minv(\ell') + \Acc(\ell') \cdot \setuptime$ under $\beta^{j+1}_{x, v_i})$
if $\ell' \in L_1$, and
$\ge (\Minv(\ell') + (\Acc(\ell') - 1) \cdot \setuptime$ under $\beta^{j+1}_{x, v_i})$
if $\ell' \in L_2 \cup L_3$.
It follows that we have $\Minv(\ell) + \Acc(\ell) \cdot \setuptime < \Minv(\ell') + \Acc(\ell') \cdot \setuptime$
under $\beta^{j}_{x,v_i}$, i.e., $\Buck[\ell] \not\in \Next(j)$.

So, we only need to decide whether $\Buck[\ell] \in \Next(j)$ or not
for each $\Buck[\ell] \in \Next(j+1)$.
All $\Buck[\ell]$'s in $\Next(j+1)$,
if they exist,
constitute a sublist, i.e., \linebreak $(\Buck[\sigma_{j+1,1}],\Buck[\sigma_{j+1,2}],\ldots,
\Buck[\sigma_{j+1,\breve{t}}])$, of $\Next(j+1)$.
Suppose that
$\Buck[\sigma_{j, \Head}]$ is the first
bucket in $\Next(j)$ after Stage 2.
For $1 \le t \le \breve{t}$, since
$\sigma_{j+1,t} > \tau_{j+1} ~(\ge \tau_j)$, we have
$\Buck[\sigma_{j+1,t}] \in \Next(j)$ if and only if
$\Minv(\sigma_{j+1,t}) +  \Acc(\sigma_{j+1,t}) \cdot \setuptime \ge
\Minv(\sigma_{j+1, \Head}) + \Acc(\sigma_{j+1,\Head}) \cdot \setuptime$ under
$\beta^{j}_{x,v_i}$, as a consequence of Lemma~\ref{lemma:bucket-change}.
Let
$\DT_t = (\Acc(\sigma_{j,\Head}) -
\Acc(\sigma_{j+1,t})$ under $\beta^{j}_{x,v_i}) - (\Acc(\sigma_{j,\Head}) -
\Acc(\sigma_{j+1,t})$ under $\beta^{h_i}_{x, v_i})$. Then, we have
$\Buck[\sigma_{j+1,t}] \in \Next(j)$ if and only if
$\Minv(\sigma_{j+1,t})$ under $\beta^j_{x,v_i} +  \Acc(\sigma_{j+1, t}) \cdot \setuptime$
under $\beta^{h_i}_{x,v_i} \ge \DT_t \cdot \setuptime + \Minv(\sigma_{j,\Head})$ under $\beta^j_{x,v_i} + \Acc(\sigma_{j,\Head}) \cdot \setuptime$ under
$\beta^{h_i}_{x,v_i}$.
It means that whether
$\Buck[\sigma_{j+1,t}] \in \Next(j)$ or not can be determined in constant time as long as $\DT_t$ is available.
%


Recall that $\Delta_{j, \Head}$ ($=\nabla$ or $\nabla'$ or
$\nabla''$) was set to
$(\Acc(\sigma_{j,\Head}) -
\Acc(\sigma_{j+1,\breve{t}})$ under $\beta^{j}_{x,v_i}) - (\Acc(\sigma_{j,\Head}) -
\Acc(\sigma_{j+1,\breve{t}})$ under $\beta^{h_i}_{x, v_i})$ after Stage 2, which
is identical with $\DT_{\breve{t}}$.
As a consequence of (\ref{equation:change-scenario}),
we have $\DT_{\breve{t}-1} = (\Acc(\sigma_{j,\Head}) -
\Acc(\sigma_{j+1,\breve{t}})$ under $\beta^{j}_{x,v_i}) - (\Acc(\sigma_{j,\Head}) -
\Acc(\sigma_{j+1,\breve{t}})$ under $\beta^{h_i}_{x, v_i}) + (\Acc(\sigma_{j+1,\breve{t}}) -
\Acc(\sigma_{j+1,\breve{t}-1})$ under $\beta^{j}_{x,v_i}) - (\Acc(\sigma_{j+1,\breve{t}}) -
\Acc(\sigma_{j+1,\breve{t}-1})$ under $\beta^{h_i}_{x, v_i}) =
\DT_{\breve{t}} + \Delta_{j+1, \breve{t}}$,
which can be computed in constant time.
Similarly, $\DT_{\breve{t}-2}, \DT_{\breve{t}-3}, \linebreak \ldots, \DT_1$
each can be obtained in constant time, by the aid of (\ref{equation:change-scenario}).
%
%


If $\Buck[\sigma_{j+1, t}] \not\in \Next(j)$ for all $1 \le t \le \breve{t}$,
then $\Next(j)$ remains unchanged and
we set $\Delta_{j, \Head} = \DT_1 + \Delta_{j+1,1}$.
Otherwise,
suppose $\Buck[\sigma_{j+1,t'}] \in \Next(j)$
and $\Buck[\sigma_{j+1,t}] \not\in \Next(j)$
for all $t' < t \le \breve{t}$. Then,
as a consequence of Lemma~\ref{lemma:bucket-change},
we have $\Buck[\sigma_{j+1,1}], \linebreak \Buck[\sigma_{j+1,2}],
\ldots, \Buck[\sigma_{j+1,t'-1}]$ all contained in $\Next(j)$. They
can be added in constant time to the front of $\Next(j)$,  i.e.,
$(\Buck[\sigma_{j,t''}],  \Buck[\sigma_{j,t''+1}], \ldots, \Buck[\sigma_{j,\Head-1}]) =
(\Buck[\sigma_{j+1,1}], \Buck[\sigma_{j+1,2}], \ldots,
\Buck[\sigma_{j+1,t'}])$,
where $\Head - t'' = t'$.
Moreover, as a consequence of (\ref{equation:change-scenario}),
we have $(\Delta_{j, t''+1}, \Delta_{j, t''+2}, \ldots,  \Delta_{j, \Head-1})=(\Delta_{j+1, 2}, \Delta_{j+1, 3}, \ldots,\Delta_{j+1, t'})$.
We also set $\Delta_{j, t''} =
\Delta_{j+1,1}$ and $\Delta_{j, \Head} = \DT_{t'}$.
It follows that we have $
\Delta_{j, \Head} = (\Acc(\sigma_{j,\Head}) - \Acc(\sigma_{j+1,t'})$ under $\beta^{j}_{x,
v_i}) - (\Acc(\sigma_{j,\Head}) - \Acc(\sigma_{j+1,t'})$
under $\beta^{h_i}_{x, v_i}) = (\Acc(\sigma_{j,\Head}) - \Acc(\sigma_{j,\Head-1})$ under $\beta^{j}_{x,
v_i}) - (\Acc(\sigma_{j,\Head}) - \Acc(\sigma_{j,\Head-1})$
under $\beta^{h_i}_{x, v_i})$.

Let
$\Buck[\sigma_{j, \Head'}]$ be the first
bucket in the current $\Next(j)$, and
$\nabla''' = (\Acc(\sigma_{j, \Head'}) - \Acc(\tau_{j+1})$ under $\beta^{j}_{x,v_i}) -
(\Acc(\sigma_{j, \Head'}) - \Acc(\tau_{j+1})$
under $\beta^{h_i}_{x, v_i})$, where $\Head' = \Head$ if $\Buck[\sigma_{j+1, t}] \not\in \Next(j)$ for all $1 \le t \le \breve{t}$, and $\Head' = t''$ else. As a consequence of (\ref{equation:difference-relation})
and (\ref{equation:change-scenario}), we have
$\Delta_{j+1,1} =(\Acc(\sigma_{j+1,1}) - \Acc(\tau_{j+1})$ under $\beta^{j}_{x,
v_i}) - (\Acc(\sigma_{j+1,1}) - \Acc(\tau_{j+1})$
under $\beta^{h_i}_{x, v_i})$. It follows that we have
$\Delta_{j, \Head'} = \nabla'''$ for both cases of $\Head' = \Head$ and $\Head' = t''$.
The value of $\Delta_{j, \Head'}$
is temporary.

Recall that we have to delete $\sigma_{j+1, t}$ from the van Emde Boas priority queue,
if $\Buck[\sigma_{j+1, t}] \not\in \Next(j)$.
If there are $\delta_j$ buckets in the sublist $(\Buck[\sigma_{j+1,1}], \Buck[\sigma_{j+1,2}],\linebreak\ldots,
\Buck[\sigma_{j+1,\breve{t}}])$ of $\Next(j+1)$ that are not in $\Next(j)$, then
it requires $O(\delta_j \log\log h_i)$ time
to delete these buckets from
the queue.

%
%
%

\vspace{6pt}
\noindent {\bf Stage 4.} $\ell \in \{\tau_j+1, \tau_j+2, \ldots, \tau_{j+1}\} = L_4$.
Since $u_{i,j}$ resides
in the same bucket under $\beta^{j}_{x, v_i}$ and
$\beta^{h_i}_{x, v_i}$ (refer to the proof of Lemma~\ref{lemma:bucket-change}),
we have $u_{i, j} \in \Buck[\tau_j]$
and $u_{i, j+1} \in \Buck[\tau_{j+1}]$
under $\beta^{h_i}_{x, v_i}$. It means that
$\Buck[\ell']$ is empty under $\beta^{h_i}_{x, v_i}$
for all $\ell' \in L_4 - \setof{\tau_{j+1}}$.
Again, as a consequence of Lemma~\ref{lemma:bucket-change},
$\beta^{j}_{x, v_i}$ and $\beta^{h_i}_{x, v_i}$ induce
the same $\Buck[\ell']$
for all $\ell' \in L_4 - \setof{\tau_{j+1}}$,
implying that $\Buck[\ell']$ is empty under $\beta^{j}_{x, v_i}$ for all $\ell' \in L_4 - \setof{\tau_{j+1}}$.
So, we only need to decide whether $\Buck[\tau_{j+1}] \in \Next(j)$ or not.
Also notice that we have $\Acc(\tau_{j+1}) = \Acc(\tau_j) + |\Buck[\tau_{j+1}]|$
under $\beta^j_{x, v_i}$.

If $\Buck[\tau_{j+1}]$ is empty under $\beta^j_{x, v_i}$,
then we have $\Buck[\tau_{j+1}] \not\in \Next(j)$.
In this case,
$\Delta_{j,1}$ was previously set to $\Delta_{j, \ddot{t}}$ (in Stage 1) or $\Delta_{j, \Head}$ ($= \nabla$ or $\nabla'$ or $\nabla''$ in Stage 2) or $\Delta_{j, \Head'}$ ($=\nabla'''$ in Stage 3), and these previous values are all equal to $(\Acc(\sigma_{j,1}) -
\Acc(\tau_{j+1})$ under $\beta^{j}_{x,v_i}) - (\Acc(\sigma_{j,1}) -
\Acc(\tau_{j+1})$ under $\beta^{h_i}_{x, v_i})$. The final value (i.e., the value of (\ref{equation:difference-relation})) of $\Delta_{j,1}$ can be obtained in constant time by
subtracting $(\Acc(\tau_{j+1}) - \Acc(\tau_{j})
$ under $\beta^{h_i}_{x, v_i})$ from the previous value of $\Delta_{j,1}$, because
$\Acc(\tau_j) = \Acc(\tau_{j+1})$
under $\beta^{j}_{x, v_i}$.
In subsequent discussion, we assume that $\Buck[\tau_{j+1}]$ is not empty under $\beta^j_{x, v_i}$.

If $\Next(j)$ is empty at the end of Stage 3,
then $\Buck[\tau_{j+1}+1],\Buck[\tau_{j+1}+2], \ldots, \linebreak\Buck[h_i]$ are all empty under $\beta^{j}_{x, v_i}$.
In this case, there is only one bucket, i.e., $\Buck[\tau_{j+1}]$, in $\Next(j)$.
We set $\Delta_{j,1} = |\Buck[\tau_{j+1}]|$
under $\beta^j_{x, v_i} - (\Acc(\sigma_{j,1}) - \Acc(\tau_j)
$ under $\beta^{h_i}_{x, v_i})$, which can be computed
in constant time. The value is identical with the value of (\ref{equation:difference-relation}), because
$\sigma_{j,1} = \tau_{j+1}$ and $\Acc(\tau_{j+1})  = \Acc(\tau_j) + |\Buck[\tau_{j+1}]|$
under $\beta^j_{x, v_i}$.

Otherwise, let
$\Buck[\sigma_{j, \Head''}]$ be the first
bucket in $\Next(j)$.
In this case,
$\Delta_{j,\Head''}$ was set to $\Delta_{j, \ddot{t}}$ (in Stage 1) or $\Delta_{j, \Head}$ ($= \nabla$ or $\nabla'$ or $\nabla''$ in Stage 2) or $\Delta_{j, \Head'}$ ($=\nabla'''$ in Stage 3), and they are all equal to
$(\Acc(\sigma_{j,\Head''}) -
\Acc(\tau_{j+1})$ under $\beta^{j}_{x,v_i}) - (\Acc(\sigma_{j,\Head''}) -
\Acc(\tau_{j+1})$ under $\beta^{h_i}_{x, v_i})$.
We have $\Buck[\tau_{j+1}] \in \Next(j)$ if
and only if $\Minv(\tau_{j+1}) + \Acc(\tau_{j+1}) \cdot \setuptime \ge
\Minv(\sigma_{j, \Head''}) + \Acc(\sigma_{j,\Head''}) \cdot \setuptime$
under $\beta^j_{x, v_i}$.
The inequality can be written as
$\Minv(\tau_{j+1})$ under $\beta^j_{x, v_i} + \Acc(\tau_{j+1}) \cdot \setuptime$
under $\beta^{h_i}_{x, v_i} \ge \Delta_{j, \Head''} \cdot \setuptime +
\Minv(\sigma_{j, \Head''})$ under $\beta^j_{x, v_i} + \Acc(\sigma_{j,\Head''}) \cdot \setuptime$
under $\beta^{h_i}_{x, v_i}$, which can
be determined in constant time.
Further, if $\Buck[\tau_{j+1}] \not\in \Next(j)$,
then $\Next(j)$ remains unchanged, i.e., $\Head'' = 1$.
The final value of $\Delta_{j,1}$ can be obtained in constant time
by adding $(|\Buck[\tau_{j+1}]|$ under $\beta^{j}_{x,
v_i}- (\Acc(\tau_{j+1}) - \Acc(\tau_{j})
$ under $\beta^{h_i}_{x, v_i}))$ to the previous value of $\Delta_{j,1}$,
because $\Acc(\tau_{j+1})  = \Acc(\tau_j) + |\Buck[\tau_{j+1}]|$
under $\beta^j_{x, v_i}$.

If $\Buck[\tau_{j+1}] \in \Next(j)$, then we have
$\sigma_{j, 1}= \tau_{j+1}$ and $\Head'' = 2$.
We add $\Buck[\tau_{j+1}]$
to the front of $\Next(j)$, 
and set $\Delta_{j,1} = |\Buck[\tau_{j+1}]|$
under $\beta^j_{x, v_i} - (\Acc(\sigma_{j,1}) - \Acc(\tau_j)
$ under $\beta^{h_i}_{x, v_i})$, which is the same as the situation that $\Next(j)$ is empty at the end of Stage 3.
The previous value of $\Delta_{j,2}$ is identical with its final value.
If Stage 4 is omitted (as $\tau_{j+1} = \tau_j$), then
$\Next(j)$ obtained after Stage 3 is what we desire and the previous value of
$\Delta_{j,1}$ is identical with its
final value.

The overall time complexity for constructing $\Next(h_i), \Next(h_i-1), \ldots, \Next(1)$
is computed as $\sum_{1\le j\le h_i}O(\delta_j\log\log h_i) = O((\sum_{1\le j\le h_i}\delta_j)\cdot\log\log h_i)$,
where $\delta_j$ is the number of buckets in the sublist $(\Buck[\sigma_{j+1,1}], \Buck[\sigma_{j+1,2}],\ldots,
\Buck[\sigma_{j+1,\breve{t}}])$ of $\Next(j+1)$
that are not in $\Next(j)$.
Since $\Next(h_i)$ is empty initially and at most two buckets, i.e., $\Buck[\hat{\ell}]$ and $\Buck[\tau_{j+1}]$,
are newly added to each $\Next(j)$ (refer to Stage 2 and Stage 4), we have
$\sum_{1 \le j \le h_i} \delta_j \le 2(h_i-1)$. Therefore, the overall time complexity
is $O(h_i \log\log h_i)$.

\subsection{Proofs of
Fact~\ref{fact:preprocessing-broadcast-time} and Fact~\ref{fact:preprocessing-increasing-order}}
\label{subsection:preprocessing}
%
We first prove Fact~\ref{fact:preprocessing-broadcast-time} below.
Let $s^+$ ($s^-$) denote the scenario of $T$ which has
$w^{s^+}_{u,v} = w^+_{u,v}$ ($w^{s^-}_{u,v} = w^-_{u,v}$) for all
edges $(u,v)\in E(T)$. Then, we have ${\textit b\_time}^{\beta^{j}_{x,
v_i}}(u_{i,k}, \BB{u_{i,k}, v_i}) = {\textit
b\_time}^{s^+}(u_{i,k}, \BB{u_{i,k}, v_i})$ if $k \le j$, and
${\textit b\_time}^{s^-}(u_{i,k}, \BB{u_{i,k}, v_i})$
else (refer to Figure~\ref{figure:gamma_scenario}). The required preprocessing is to determine
${\textit b\_time}^{s^+}(u, \BB{u, v})$ and ${\textit b\_time}^{s^-}(u,
\BB{u, v})$ for all\linebreak edges $(u,v) \in E(T)$. Clearly, after the preprocessing is done,
each ${\textit b\_time}^{\beta^{j}_{x,
v_i}}(u_{i,k}, \BB{u_{i,k}, v_i})$ can be determined in constant time. In the following,
we show that the preprocessing can be completed in $O(n\log n)$ time.
Only the scenario $s^+$ is considered; the scenario
$s^-$ can be treated all the same.

As shown in \cite{Su2016}, $O(n)$ time
is sufficient to find a
broadcast center $\kappa \in {\textit B\_Ctr}^{s^+}$ of $T$ and compute
${\textit b\_time}^{s^+}(u, \BB{u, \kappa})$ for all $u \in V(T) -
\setof{\kappa}$ under $s^+$.
Let $\eta(u)$ denote the
neighbor of $u$ in $\OB{u, \kappa}$.
With the work of \cite{Su2016}, we only need to determine
${\textit
b\_time}^{s^+}(\eta(u), \BB{\eta(u), u})$ for all $u \in
V(T) - \setof{\kappa}$, in order to complete the preprocessing, which depends on the value of $\dis_{\kappa, u}$.

For the vertices $u$ with $\dis_{\kappa, u} = 1$,
it takes $O(h \log h)$ time to determine
an arrangement $(q_1, q_2, \ldots, q_h)$ of $N_T(\kappa)$
such that $w^{s^+}_{q_j, \kappa} + {\textit b\_time}^{s^+}(q_j,
\BB{q_j, \kappa})$ is nonincreasing as $j$ increases from $1$ to
$h$, where $h = |N_T(\kappa)|$.
We need to compute ${\textit b\_time}^{s^+}(\kappa,
\BB{\kappa, q_j})$ for all $1 \le j \le h$.
When $j = 1$, by Lemma~\ref{lemma:optimal-sequence},
$(q_2, q_3, \ldots, q_h)$ is an optimal sequence for $\kappa$ to broadcast a message to
$N_{\BB{\kappa, q_1}}(\kappa)$ under $s^+$. Besides, we have
${\textit b\_time}^{s^+}(\kappa, \BB{\kappa, q_1}) = \max\{(i - 1) \cdot \setuptime + w^{s^+}_{\kappa, q_i} + {\textit b\_time}^{s^+}(q_i, \BB{q_i, \kappa}) \mid
1 < i \le h\}$, computable in additional $O(h)$ time.

When $j > 1$,
$q_1, q_2,\ldots, q_h$ are inserted into
buckets $\Buck[1], \Buck[2], \ldots, \Buck[h]$ as before, and $\Minv(\ell),\Acc(\ell)$ for all $1
\le \ell \le h$ are determined. Both require additional $O(h)$ time.
Suppose $q_j \in \Buck[z_j]$, and let $\lambda(q_j)
=\min\{w^{s^+}_{\kappa, q_{i}} + {\textit b\_time}^{s^+}(q_i,
\BB{q_i, \kappa}) \mid q_{i} \in \Buck[z_j] -\setof{q_j}\}$.
Also, we define
$\pi^-(\ell) = \max\{0, \Minv(t) + \Acc(t) \cdot \setuptime \mid 1 \le t < \ell\}$ and
$\pi^+(\ell') = \max\{0, \Minv(t) + \Acc(t) \cdot \setuptime \mid \ell' < t \le h - 1\}$, where
$1 \le \ell \le h$ and $1 \le \ell' \le h-1$.
Additional $O(h)$ time is sufficient to compute
$\lambda(q_j), \pi^-(\ell)$, and $\pi^+(\ell')$ for
all $1 \le j \le h, \linebreak 1 \le \ell \le h$, and $1\le \ell' \le h-1$.

Notice that after inserting the vertices of $N_{\BB{\kappa, q_j}}(\kappa)$ ($=N_T(\kappa) - \setof{q_j}$)
into buckets $\Buck'[1], \Buck'[2], \ldots, \Buck'[h-1]$,
we have ${\textit b\_time}^{s^+}(\kappa, \BB{\kappa, q_j})
=\max\{\Minv'(\ell) + \Acc'(\ell) \cdot \setuptime \mid 1 \le \ell \le h - 1\}$
(refer to the two paragraphs immediately after Fact~\ref{fact:preprocessing-broadcast-time}), where $\Minv'(\ell)$ and
$\Acc'(\ell)$ are defined on $\Buck'[1], \Buck'[2], \ldots, \Buck'[h-1]$.
Clearly, we have $\Minv'(\ell) = \Minv(\ell)$ if $\ell \not= z_j$;
$\Minv'(\ell) = \lambda(q_j)$ if $\ell = z_j$;
$\Acc'(\ell) = \Acc(\ell)$ if $\ell < z_j$;
$\Acc'(\ell) = \Acc(\ell) - 1$ if $\ell \ge z_j$. Hence,
${\textit b\_time}^{s^+}(\kappa, \BB{\kappa, q_j})$
can be obtained in constant time by finding the maximum of
$\max\{\Minv(\ell) + \Acc(\ell) \cdot \setuptime \mid 1 \le \ell < z_j\}$
($=\pi^-(z_j)$), $\lambda(q_j) + (\Acc(z_j)-1) \cdot \setuptime$, and
$\max\{\Minv(\ell) + (\Acc(\ell) -1) \cdot \setuptime \mid z_j < \ell \le h-1\}$
($=\pi^+(z_j) - 1$).
Consequently, it takes $O(h)$ time to determine
${\textit b\_time}^{s^+}(\kappa, \BB{\kappa, q_j})$
for all $1 < j \le h$.
%


For the vertices $u$ with $\dis_{\kappa, u} =  2$, we
need to compute ${\textit
b\_time}^{s^+}(\eta(u), \BB{\eta(u), u})$, where
$\eta(u) \in \{q_1, q_2, \ldots, q_h\}$.
Let us consider those vertices
$u$ with $\eta(u) = q_j$, where $1 \le j \le h$.
Since ${\textit b\_time}^{s^+}(\kappa,
\BB{\kappa, q_j})$ and
${\textit b\_time}^{s^+}(u, \BB{u, q_j})$ $
(= {\textit b\_time}^{s^+}(u, \BB{u, \kappa}))$ are available,
we can determine
an arrangement $(p_1, p_2, \ldots, p_{h_j})$ of
$N_T(q_j)$ such that
$w^{s^+}_{p_k, q_j} + {\textit b\_time}^{s^+}(p_k,
\BB{p_k, q_j})$ is nonincreasing as $k$ increases from $1$ to
$h_j$ in $O(h_j \log h_j)$ time, where $h_j = |N_T(q_j)|$.
Then it takes additional $O(h_j)$ time to determine
${\textit b\_time}^{s^+}(q_j, \BB{q_j, p_k})$
for all $1 \le k \le h_j$.
It follows that $\sum_{1\le j \le h}O(h_j \log h_j)$ time
is sufficient to determine ${\textit
b\_time}^{s^+}(\eta(u), \BB{\eta(u), u})$ for all $u$
having $\eta(u) \in \{q_1, q_2, \ldots, q_h\}$,
where $h_j = |N_T(q_j)|$.

For the other vertices $u$ (i.e, $\dis_{\kappa, u} > 2$),
${\textit
b\_time}^{s^+}(\eta(u), \BB{\eta(u), u})$
can be obtained, similarly, in
the sequence of $\dis_{\kappa, u} = 3, 4, \ldots$, by the aid of
${\textit b\_time}^{s^+}(\eta(\eta(u)),
\BB{\eta(\eta(u)), \eta(u)})$ and
${\textit b\_time}^{s^+}(u, \BB{u, \eta(u)})$ $
(= {\textit b\_time}^{s^+}(u, \BB{u, \kappa}))$, where
${\textit b\_time}^{s^+}(\eta(\eta(u)),
\BB{\eta(\eta(u)), \eta(u)})$ is \linebreak available (because $\dis_{\kappa, \eta(u)} = \dis_{\kappa, u} -1$).
For each $\eta(u)$, $O(|N_T(\eta(u))| \log |N_T(\eta(u))|)$ time is  spent to determine
an arrangement $(g_1, g_2, \ldots, g_{|N_T(\eta(u))|})$ of
$N_T(\eta(u))$ such that \linebreak
$w^{s^+}_{g_k, \eta(u)} + {\textit b\_time}^{s^+}(g_k,
\BB{g_k, \eta(u)})$ is nonincreasing as $k$ increases from $1$ to
$|N_T(\eta(u))|$. Then,
${\textit b\_time}^{s^+}(\eta(u), \BB{\eta(u), g_k})$
for all $1 \le k \le |N_T(\eta(u))|$ can be obtained in additional $O(|N_T(\eta(u))|)$ time.

%
%
%
%

According to the discussion above,
the time required to determine ${\textit b\_time}^{s^+}(\eta(u), \linebreak \BB{\eta(u),
u})$ for all $u \in V(T) - \setof{\kappa}$
is computed as the summation of $O(|N_T(v)| \log |N_T(v)|)$
for all $v \in \{\eta(u) \mid u \in V(T) - \setof{\kappa}\}$,
which is bounded by
$O(n\log n)$.
This completes the proof of Fact~\ref{fact:preprocessing-broadcast-time}.
%

Next we present a proof of Fact~\ref{fact:preprocessing-increasing-order}.
The required preprocessing is to determine
a vertex ordering $(u_1,
u_2, \ldots, u_h)$ of $N_T(v)$ for each $v \in V(T)$
such that $w^{s^+}_{v, u_{k}}
+ {\textit b\_time}^{s^+}(u_{k}, \BB{u_{k},
v})$ is nonincreasing as $k$ increases from $1$ to $h$.
Recall that in the proof of Fact~\ref{fact:preprocessing-broadcast-time} above, ${\textit b\_time}^{s^+}(u, \BB{u, v})$
for all edges $(u,v) \in E(T)$ are determined.
It follows that the preprocessing can be
done in $O(n \log n)$ time. Moreover, since $w^{\beta^{h_i}_{x,
v_i}}_{v_i, u_{i, k}} + {\textit b\_time}^{\beta^{h_i}_{x,
v_i}}(u_{i,k}, \BB{u_{i,k}, v_i}) = w^{s^+}_{v_i, u_{i,k}} + {\textit
b\_time}^{s^+}(u_{i,k}, \BB{u_{i,k}, v_i})$,
the vertex
ordering, i.e., $(u_{i, 1},
u_{i, 2}, \ldots, u_{i, h_i})$, of $N_{\BB{v_i,x}}(v_i)$ can be
obtained in $O(h_i)$ time by deleting the neighbor of $v_i$ in
$\OB{v_i, x}$ from the vertex ordering of $N_T(v_i)$.

\section{Conclusion}\label{section:conclusion}


In this paper, a broadcasting problem with edge weight uncertainty was treated on
heterogeneous tree networks under the postal model.
All previous
broadcasting problems (refer to \cite{Barnoy00,Barnoy94,Barnoy97,Slater1981,Ri1988,Kh2006,Maja2017,Bortolussi2020,Su2016}) assumed deterministic edge weights.
One challenging problem encountered in this paper is to determine a worst-case scenario from an infinite number of candidates. To find the worst-case scenario, we first restricted it to a finite set of scenarios, and then search the finite set for it by the prune-and-search strategy.




The following results were obtained in this paper. For each vertex $x$ of a tree $T$, there are at most $n-1$ candidates for the worst-case scenario $\ddot{s}(x)$, where $n$ is the number of vertices in $T$. Besides, $\ddot{s}(x)$ (and the maximum regret of $x$) and a minmax-regret broadcast center of $T$ (and its maximum regret) can be found in $O(n \log\log n)$ time and $O(n \log n \log\log n)$ time, respectively. These results may be useful to the researchers who are interested in broadcasting problems.

It is still unknown if $O(n \log n \log\log n)$ time is the best result for finding a minmax-regret broadcast center of $T$. It is not easy to reduce the time complexity to $O(n \log n)$. Instead, it is more likely to derive a lower bound or make a limited improvement on the time complexity of the problem.

\small
\setlength{\baselineskip}{13pt}

\bibliographystyle{abbrv}
\bibliography{mrbc_full}
\end{document}